\newtheorem{definition}{Definition}
\newtheorem{theorem}{Theorem}
\newtheorem{lemma}{Lemma}
\newtheorem{corollary}{Corollary}
\newenvironment{proof}{\noindent{\sf Proof.}}{\hfill $\boxtimes\hspace{2mm}$\linebreak}
\renewcommand{\phi}{\varphi}
\renewcommand{\epsilon}{\varepsilon}
\newcommand{\rhddash}{\rhd}
\title{Price of Influence}
\title{Reasoning about Marketing in Social Networks}
\title{Marketing in Social Networks}
\title{Marketing Impact on Diffusion in Social Networks}
\author{Pavel Naumov \and Jia Tao}
\author{Pavel Naumov$^\star$ \and  Jia Tao$^\ast$}
\begin{document}

\maketitle

{\let\thefootnote\relax\footnotetext{$^\star$ Department of Computer  Science, Illinois Wesleyan University, Bloomington, Illinois, the United States, {\sf pnaumov@iwu.edu}}

\let\thefootnote\relax\footnotetext{ $^\ast$ Department of Computer Science, The College of New Jersey, Ewing, New Jersey, the United States, {\sf taoj@tcnj.edu}}}


\begin{abstract}
The paper proposes a way to add marketing into the standard threshold model of social networks. Within this framework, the paper studies logical properties of the influence relation between sets of agents in social networks. Two different forms of this relation are considered: one for promotional marketing and the other for preventive marketing. In each case a sound and complete logical system describing properties of the  influence relation is proposed. Both systems could be viewed as extensions of Armstrong's axioms of functional dependency from the database theory. 
\end{abstract}

\section{Introduction}

\subsection{Social Networks}

In this paper we study how diffusion in social networks could be affected by marketing. Diffusion happens when a product or a social norm is initially adopted by a small group of agents who later influence their peers to adopt the same product. The peers influence their peers, and so on. There are two most commonly used models of diffusion: the cascading model and the threshold model. In the cascading model~\cite{snk08kiies,kkt05alp} the behaviour of agents is random and the peer influence manifests itself in a change of a probability of an agent to adopt the product. In the threshold model~\cite{v96sn,macy91asr,kkt03sigkdd,am14fi}, originally introduced by Granovetter~\cite{g78ajs} and Schelling~\cite{s78}, the behavior of the agents is deterministic. 

The focus of this paper is on the threshold model of diffusion of a given product. In this model, there is a threshold value $\theta(a)$ associated with each agent $a$ and an influence value $w(a,b)$ associated with each pair of agents $a$ and $b$. Informally, the threshold value $\theta(a)$ represents the resistance of agent $a$ to adoption of the product and the influence value $w(a,b)$ represents the peer pressure that agent $a$ puts on agent $b$ upon adopting the product. If the total peer pressure from the set of agents $A$ who have already adopted the product on an agent $b$ is no less than the threshold value $\theta(b)$, i.e.,
\begin{equation}\label{intro theta}
\sum_{a\in A}w(a,b)\ge \theta(b),
\end{equation}
then agent $b$ also adopts the product.

\subsection{Influence Relation}

We say that a set of agents $A$ influences a set of agents $B$ if the social network is such that an adoption of the product by all agents in set $A$ will unavoidably lead to an adoption of the product by all agents in set $B$. Note that it is not important how original adoption of the product by agents in set $A$ happens. For example, agents in set $A$ can receive and start using free samples of the product. Also,  agents in set $A$ can influence agents in set $B$ indirectly. If agents in set $A$ put enough peer pressure on some other agents to adopt the product, who in turn put enough  peer pressure on the agents in set $B$ to adopt the product, we still say that set $A$ influences set $B$. We denote this influence relation by $A\rhd B$.


In this paper we focus on universal principles of influence that are true for all social networks. The set of such principles for a fixed distribution of influence values has been studied by Azimipour and Naumov~\cite{an16arxiv}, who provided a complete axiomatization of these principles that consists of the following three axioms of influence:
\begin{enumerate}
\item Reflexivity: $A\rhd B$, where $B\subseteq A$,
\item Augmentation: $A\rhd B\to (A,C\rhd B,C)$,
\item Transitivity: $A\rhd B\to (B\rhd C \to A\rhd C)$,
\end{enumerate}
and an additional fourth axiom describing a property specific to the fixed distribution of influence values. In these axioms, $A,B$ denotes the union of sets $A$ and $B$. The three axioms above were originally proposed by Armstrong~\cite{a74} to describe functional dependence relation in database theory.  They became known in database literature as Armstrong's axioms \cite[p.~81]{guw09}. V{\"a}{\"a}n{\"a}nen proposed a first order version of these principles~\cite{v07}. Beeri, Fagin, and Howard~\cite{bfh77} suggested a variation of Armstrong's axioms that describes properties of multi-valued dependence. Naumov and Nicholls~\cite{nn14jpl} proposed another variation of these axioms that describes a rationally functional dependence.


There have been at least two different attempts to enrich the language of Armstring's axioms by introducing an additional parameter to the functional dependence relation. V{\"a}{\"a}n{\"a}nen~\cite{v14arxiv} studied approximate dependence relation $A\rhd_p B$, where $p$ refers to the fraction of ``exceptions" in which functional dependence does not hold. In our previous work~\cite{nt15arxiv2}, we interpreted relation $A\rhd_p B$ as ``knowing values of database attributes $A$ and having an additional budget $p$ one can reconstruct the values of attributes in set $B$". In the current paper we interpret $A\rhd_p B$ as the influence relation in social networks with parameter $p$ referring to the available marketing budget to either promote or prevent influence.  

\subsection{Marketing Impact}

We propose an extension of the threshold model that incorporates marketing. This is done by representing a marketing campaign as a non-negative spending function $s$, where $s(b)$ specifies the amount of money spent on marketing the product to agent $b$. In addition, we associate a value $\lambda(b)$ with each agent $b$, which we call the {\em propensity} of agent $b$. This value represents the resistance of agent $b$ to marketing. The higher the value of the propensity is, the more responsive the agent is to the marketing. We modify formula~(\ref{intro theta}) to say that agent $b$ adopts the product if the total sum of the marketing pressure and the peer pressure from the set of agents who have already adopted the product is no less than the threshold value:
\begin{equation}\label{intro lambda}
\lambda(b)\cdot s(b) + \sum_{a\in A}w(a,b)\ge \theta(b).
\end{equation}

In the first part of this paper we assume that the goal of marketing is to {\em promote} the adoption of the product. In the second part of the paper we investigate marketing campaigns designed to {\em prevent} adoption of the product. An example of the second type of campaign is an anti-smoking advertisement campaign. In either of these two cases, the same equation~(\ref{intro lambda}) describes the condition under which the product is adopted by agent $b$.

Note that most people would be more likely to buy a product when they are exposed to a promotional marketing campaign. That is, in case of promotional marketing, the value of the propensity is usually positive. On the other hand, people are usually less likely to buy a product or to adopt a social norm  after being exposed to preventive marketing. In other words, in the preventive marketing setting, the value of the propensity is usually negative. However, our framework is general enough to allow for the propensity value to be either positive or negative in both of these cases.

While studying the marketing that promotes adoption of the product, we interpret predicate $A\rhd_p B$ as ``there is a marketing campaign with budget no more than $p$ that guarantees that the set of agents $A$ will influence the set of agents $B$". As we show, the following three modified Armstrong's axioms give a sound and complete axiomatization of universal propositional properties of this relation:
\begin{enumerate}
\item Reflexivity: $A\rhd_p B$, where $B\subseteq A$,
\item Augmentation: $A\rhd_p B\to A,C\rhd_p B,C$,
\item Transitivity: $A\rhd_p B\to(B\rhd_q C\to A\rhd_{p+q} C)$.
\end{enumerate}
These axioms are identical to our axioms of budget-constrained functional dependence~\cite{nt15arxiv2}.

In the case of marketing that aims to {\em prevent} the influence, one would naturally be interested in considering relation  ``there is a marketing campaign with budget no more than $p$ that guarantees that the set of agents $A$ will {\em not} influence the set of agents $B$". Equivalently, one can study the properties of the negation of this relation, or, in other words, the properties of the relation ``for {\em any} preventive marketing campaign with budget no more than $p$, the set of agents $A$ is able to influence the set of agents $B$". We have chosen to study the latter relation because the axiomatic system for this relation is more elegant. In this paper we show that the following four axioms give a sound and complete axiomatization of the latter relation:

\begin{enumerate}
\item Reflexivity: $A\rhd_p B$, where $B\subseteq A$,
\item Augmentation: $A\rhd_p B\to A,C\rhd_p B,C$,
\item Transitivity: $A\rhd_p B\to(B\rhd_p C\to A\rhd_p C)$,
\item Monotonicity: $A\rhd_p B\to A\rhd_q B$, where $q\le p$.
\end{enumerate}

The difference between the axiomatic systems for promotional marketing and preventive marketing is in transitivity and monotonicity axioms. Both systems include a form of transitivity axiom, but these forms are different and not equivalent. The system for preventive marketing contains a form of monotonicity axiom. For promotional marketing, the following form of monotonicity axiom is true and provable, as is shown in Lemma~\ref{mono lemma}:
$$
A\rhd_p B\to A\rhd_q B,\mbox{ where $p\le q$}.
$$ 
Both of the above axiomatic systems differ from V{\"a}{\"a}n{\"a}nen~\cite{v14arxiv} axiomatization of approximate functional dependence. 

The paper is organized as follows. In Section~\ref{social networks section}, we give formal definitions of a social network and of a diffusion in such networks. We also prove basic properties of diffusion used later in the paper. This section of the paper is common to both promotional and preventive marketing. In Section~\ref{promotional section}, we introduce semantics of promotional marketing, give axioms of our logical system for promotional marketing and prove the soundness and  the completeness of this logical system. In Section~\ref{preventive section}, we do the same for preventive marketing. Section~\ref{conclusion section} concludes the paper.

\section{Social Networks}\label{social networks section}

As discussed in the introduction, the threshold model of a social network is specified by a non-negative influence value between any pair of agents in the network and by a threshold value for each agent. Additionally, each agent is assigned a propensity value that specifies the resistance of the agent to marketing. The value of the propensity could be positive, zero, or negative. We assume that the set of agents is finite.  

\begin{definition}\label{social network definition}
A social network is a tuple $(\mathcal{A}, w, \lambda, \theta)$, where
\begin{enumerate}
\item Set $\mathcal{A}$ is a finite set of agents.
\item Function $w$ maps $\mathcal{A}\times \mathcal{A}$ into the set of non-negative real numbers. The value $w(a,b)$ represents the ``influence" of agent $a$ on agent $b$.
\item Function $\lambda$ maps $\mathcal{A}$ into real numbers. The value of $\lambda(a)$ represents the ``propensity" of an agent $a$ to marketing.
\item ``Threshold" function $\theta$ maps $\mathcal{A}$ into the set of real numbers. 
\end{enumerate}
\end{definition}

Figure~\ref{social network figure} illustrates Definition~\ref{social network definition}. In this figure, the set of agents is the set $\{u,v,w,t,x,y,z\}$. The influence value $w(a,b)$ is specified by the label on the directed edge from $a$ to $b$. The edges for which the influence value is zero are omitted. Threshold and propensity values are shown next to each agent.

\begin{figure}[ht]
\begin{center}
\vspace{0mm}
\scalebox{.6}{\includegraphics{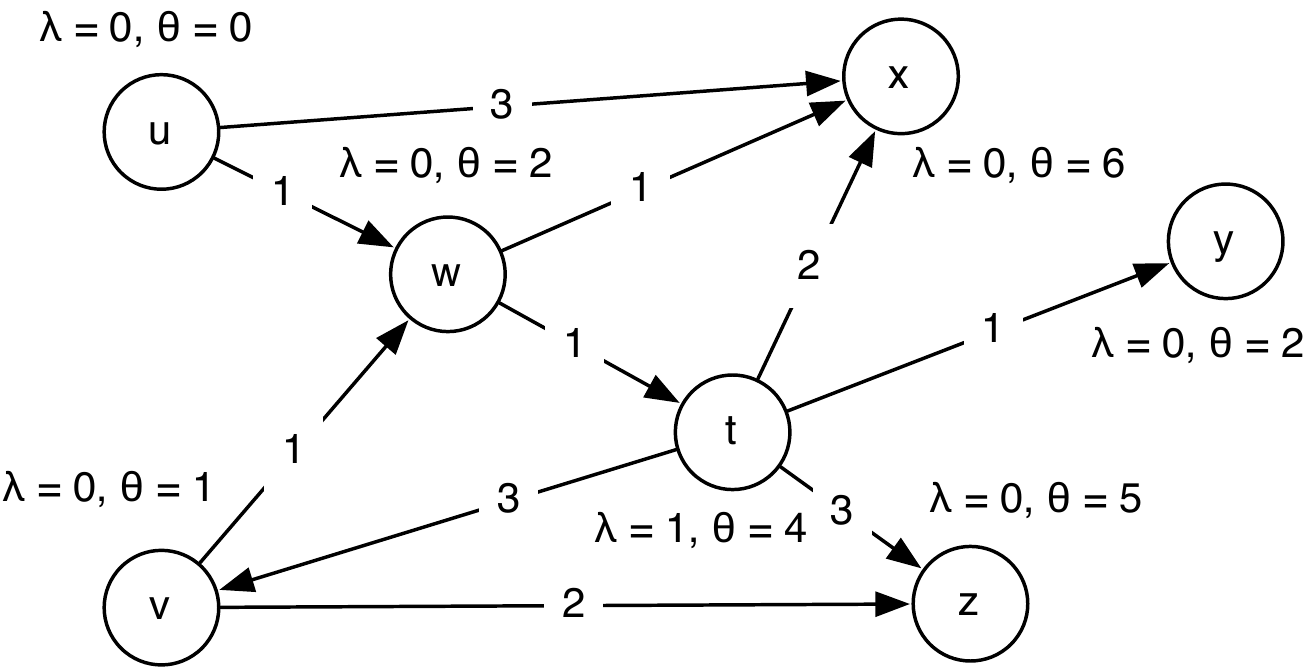}}
\vspace{0mm}
\footnotesize
\caption{A Social Network.}\label{social network figure}
\vspace{0cm}
\end{center}
\vspace{0mm}
\end{figure}

We describe a marketing campaign by specifying ``spending" on advertisement to each agent in the social network. 
\begin{definition}\label{spending function}
For any social network $(\mathcal{A}, w, \lambda, \theta)$, a spending function is an arbitrary function from set $\mathcal{A}$ into non-negative real numbers.
\end{definition}
The following is an example of a spending function for the social network depicted in Figure~\ref{social network figure}. This function specifies a marketing campaign targeting exclusively agent $t$.
\begin{equation}\label{sample s}
s(a)=
\begin{cases}
3, & \mbox{if $a=t$,}\\
0, & \mbox{otherwise}.
\end{cases}
\end{equation}
\begin{definition}\label{||set}
For any social network $(\mathcal{A}, w, \lambda, \theta)$ and any spending function $s$, let
$\|s\| = \sum_{a\in\mathcal{A}}s(a)$.
\end{definition}
For the spending function defined by equation~(\ref{sample s}), we have $\|s\|=3$. 

Next we formally define the diffusion in social network under marketing campaign specified by a spending function $s$. Suppose that initially the product is adopted by a set of agent $A$. We recursively define the {\em diffusion chain}  of sets of agents
$$
A=A^0_s\subseteq A^1_s\subseteq A^2_s\subseteq A^3_s\subseteq \dots,
$$
where $A^k_s$ is the set of agents who have adopted the product on or before the $k$-th step of the diffusion.

\begin{definition}\label{Ak}
For any given social network $(\mathcal{A}, w, \lambda, \theta)$, any spending function $s$, and any subset $A\subseteq \mathcal{A}$, let
set $A^n_s$ be recursively defined as follows:
\begin{enumerate}
\item $A^0_s=A$,
\item $A^{n+1}_s=A^n_s\cup\left\{b\in \mathcal{A}\;|\;  \lambda(b)\cdot s(b) + \sum_{a\in A^n_s} w(a,b) \ge \theta(b) \right\}$. 
\end{enumerate}
\end{definition}

For example, consider again the social network depicted in Figure~\ref{social network figure}. Let $A$ be the set $\{v\}$ and $s$ be the spending function defined by equation~(\ref{sample s}). Note that the threshold value of agent $u$ in this network is zero and, thus, it will adopt the product without any peer or marketing pressure. For the other agents in this network, the combination of the marketing pressure specified by the marketing function $s$ and the peer pressure from agent $v$ is not enough to adopt the product. Thus, $A^1_s=\{v,u\}$. Once agent $v$ and agent $u$ both adopt the product, their combined peer pressure on agent $w$ reaches the threshold value of $w$ and agent $w$ also adopts the product. No other agent is experiencing enough pressure to adopt the product at this point. Hence, $A^2_s=\{v,u,w\}$. Next, agent $t$ will adopt the product due to the combination of the peer pressure from agent $w$ and the marketing pressure specified by the spending function $s$, and so on. This diffusion process is illustrated in Figure~\ref{social network Ak figure}.

\begin{figure}[ht]
\begin{center}
\vspace{0mm}
\scalebox{.6}{\includegraphics{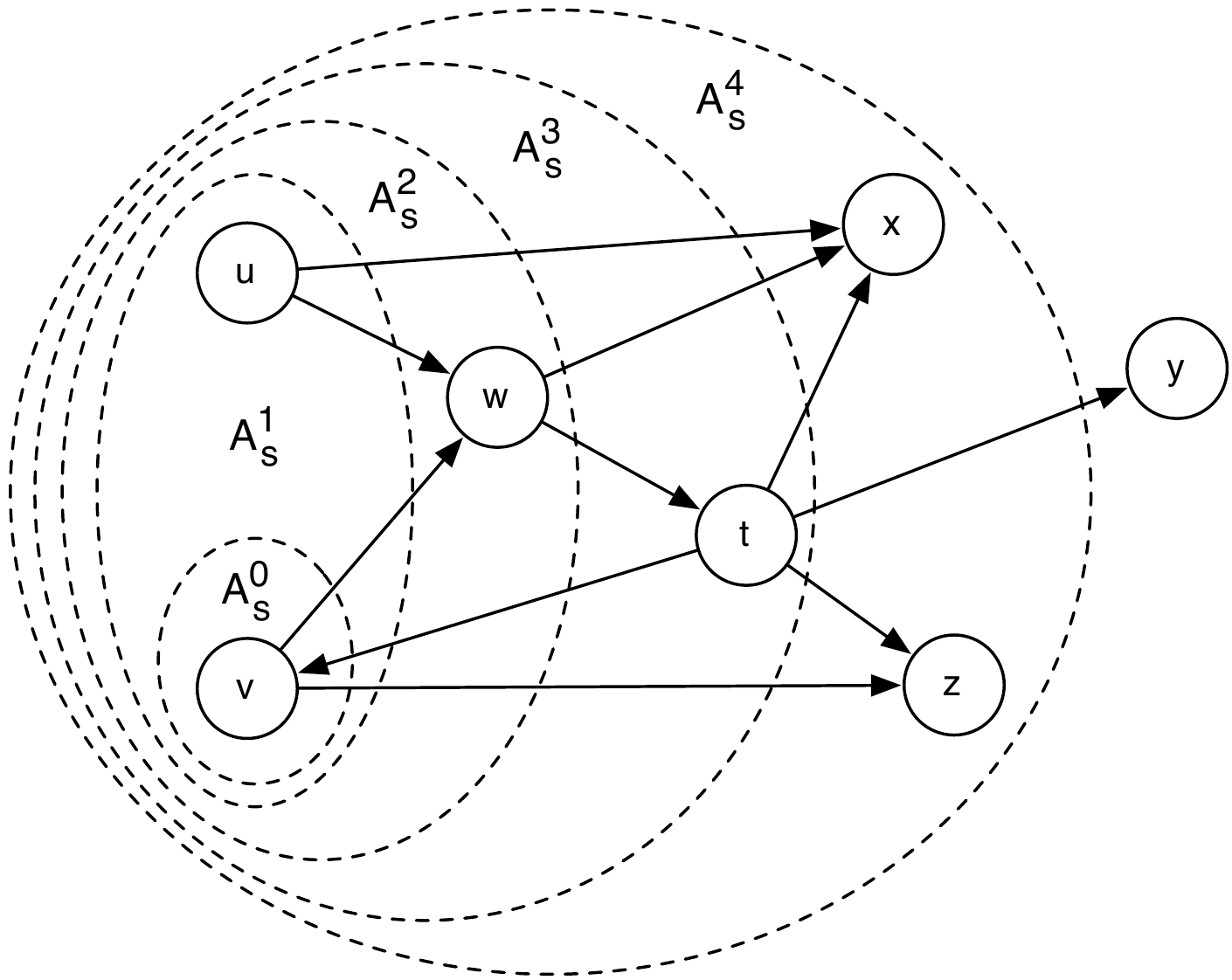}}
\vspace{0mm}
\footnotesize
\caption{Diffusion Chain $A^1_s\subseteq A^2_s\subseteq A^2_s\subseteq A^4_s$.}\label{social network Ak figure}
\vspace{0cm}
\end{center}
\vspace{0mm}
\end{figure}

\begin{corollary}\label{An+k}
$(A^n_s)^k_s=A^{n+k}_s$ for each social network $(\mathcal{A}, w, \lambda, \theta)$, each $n,k\ge 0$, each set $A\subseteq\mathcal{A}$, and each spending function $s$.
\end{corollary}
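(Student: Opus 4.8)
The plan is to proceed by induction on $k$, after observing that the recursive clause of Definition~\ref{Ak} expresses $A^{n+1}_s$ purely as a function of the single set $A^n_s$ (together with the fixed network and spending function $s$), and not of the earlier history of the diffusion chain. Once this is made explicit, the whole content of the corollary is simply that iterating the one-step diffusion operator $n$ times and then $k$ more times agrees with iterating it $n+k$ times.

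Concretely, I would first fix the social network $(\mathcal{A},w,\lambda,\theta)$ and the spending function $s$, and introduce the one-step operator sending a set $X\subseteq\mathcal{A}$ to $F(X)=X\cup\{b\in\mathcal{A}\mid \lambda(b)\cdot s(b)+\sum_{a\in X}w(a,b)\ge\theta(b)\}$. By clause~2 of Definition~\ref{Ak}, $B^{m+1}_s=F(B^m_s)$ for every $m\ge 0$ and every initial set $B\subseteq\mathcal{A}$; in particular, taking $B=A^n_s$ as the initial set, $(A^n_s)^{j+1}_s=F\bigl((A^n_s)^j_s\bigr)$, while taking $B=A$ gives $A^{m+1}_s=F(A^m_s)$.

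Then I would prove by induction on $j$ that $(A^n_s)^j_s=A^{n+j}_s$ for all $j\ge 0$. The base case $j=0$ is immediate from clause~1 of Definition~\ref{Ak}, since $(A^n_s)^0_s=A^n_s=A^{n+0}_s$. For the inductive step, assuming $(A^n_s)^j_s=A^{n+j}_s$, I apply $F$ to both sides and use clause~2 twice: $(A^n_s)^{j+1}_s=F\bigl((A^n_s)^j_s\bigr)=F(A^{n+j}_s)=A^{n+j+1}_s$. Instantiating $j=k$ yields the claim.

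I do not expect any genuine obstacle; the only point requiring care is the bookkeeping that the recursion "forgets" its starting point, i.e.\ that $F$ is truly a function of the current set alone, which is read off directly from clause~2 of Definition~\ref{Ak}. If one prefers not to name $F$ at all, the same two-line induction can be carried out entirely in terms of the sets $A^m_s$, writing out the set-builder expression in place of $F(A^{n+j}_s)$.
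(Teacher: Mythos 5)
Your proof is correct, and it is exactly the routine induction that the paper leaves implicit by stating the result as an unproved corollary of Definition~\ref{Ak}: the one-step operator depends only on the current set, so iterating it $n$ then $k$ times equals iterating it $n+k$ times. Nothing further is needed.
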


\begin{definition}\label{A*}
$A^*_s=\bigcup_{n\ge 0}A^n_s$.
\end{definition}

\begin{corollary}\label{A subseteq A*s}
$A\subseteq A^*_s$ for each social network $(\mathcal{A}, w, \lambda, \theta)$, each spending function $s$, and each subset $A\subseteq \mathcal{A}$.
\end{corollary}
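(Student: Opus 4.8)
The plan is to unwind the two relevant definitions and observe that the statement is immediate. By item~1 of Definition~\ref{Ak}, we have $A = A^0_s$. By Definition~\ref{A*}, $A^*_s = \bigcup_{n\ge 0} A^n_s$, and in particular the set $A^0_s$ is one of the sets participating in this union (the case $n=0$). Hence $A = A^0_s \subseteq \bigcup_{n\ge 0} A^n_s = A^*_s$, which is exactly what we want.

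There is essentially no obstacle here: the only thing to be careful about is citing the right clauses of the right definitions, and noting that a set is always contained in any union of which it is a member. One could alternatively route through the monotonicity of the diffusion chain ($A^0_s \subseteq A^1_s \subseteq \cdots$, visible already in the displayed chain preceding Definition~\ref{Ak}), but that is unnecessary overhead for this particular containment. I would therefore present the argument in one or two sentences and move on.
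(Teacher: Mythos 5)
Your argument is correct and is exactly the reasoning the paper leaves implicit when stating this as a corollary: $A = A^0_s$ by item~1 of Definition~\ref{Ak}, and $A^0_s$ is one of the sets in the union defining $A^*_s$ in Definition~\ref{A*}. Nothing further is needed.
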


In the rest of this section we establish technical properties of the chain $\{A^n_s\}_{n\ge 0}$ and the set $A^*_s$ that are used later.
The first of these properties is a corollary that follows from the assumption of the finiteness of set $\mathcal{A}$ in Definition~\ref{social network definition}.
\begin{corollary}\label{A*=Ak}
For any social network $(\mathcal{A}, w, \lambda, \theta)$, any subset $A$ of $\mathcal{A}$ and any
spending function $s$, there is $n\ge 0$ such that $A^*_s=A^n_s$.
\end{corollary}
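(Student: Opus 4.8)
The plan is to use the finiteness of $\mathcal{A}$ together with the monotonicity of the diffusion chain. First I would note that by Definition~\ref{Ak} we have $A^k_s\subseteq A^{k+1}_s$ for every $k\ge 0$, so $\{A^k_s\}_{k\ge 0}$ is a non-decreasing chain of subsets of the finite set $\mathcal{A}$. Consequently the sequence of cardinalities $|A^0_s|\le |A^1_s|\le |A^2_s|\le\dots$ is non-decreasing and bounded above by $|\mathcal{A}|$, hence it cannot strictly increase at every step. Thus there is some $n\ge 0$ with $|A^n_s|=|A^{n+1}_s|$, and since $A^n_s\subseteq A^{n+1}_s$, this forces $A^n_s=A^{n+1}_s$.

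Next I would show that the chain stays constant from step $n$ on, i.e. $A^{n+k}_s=A^n_s$ for all $k\ge 0$, by induction on $k$. The base case $k=0$ is immediate, and the case $k=1$ is exactly the equality established above. For the inductive step, assuming $A^{n+k}_s=A^n_s$, the one-step operator in clause~2 of Definition~\ref{Ak} depends only on the current set, so $A^{n+k+1}_s$ is obtained from $A^{n+k}_s=A^n_s$ by the same operation that produces $A^{n+1}_s$ from $A^n_s$; hence $A^{n+k+1}_s=A^{n+1}_s=A^n_s$. (Alternatively, one can invoke Corollary~\ref{An+k} directly: $A^{n+k}_s=(A^n_s)^k_s=(A^{n+1}_s)^k_s = A^{n+1+k}_s$, and iterate.)

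Finally I would compute the union. By Definition~\ref{A*}, $A^*_s=\bigcup_{m\ge 0}A^m_s$. For $m\le n$, monotonicity of the chain gives $A^m_s\subseteq A^n_s$; for $m\ge n$, the previous paragraph gives $A^m_s=A^n_s$. Therefore every term of the union is contained in $A^n_s$, while $A^n_s$ itself appears in the union, so $A^*_s=A^n_s$, as claimed.

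There is no real obstacle here: the argument is a routine stabilization argument for a monotone chain in a finite lattice. The only point that needs a word of care is the claim that two consecutive equal terms force the chain to be constant thereafter, which is why I would explicitly invoke the deterministic, state-dependent form of the one-step operator in Definition~\ref{Ak} (or Corollary~\ref{An+k}) rather than treating it as obvious.
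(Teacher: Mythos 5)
Your argument is correct and is exactly the stabilization argument the paper has in mind: the paper states this as an immediate corollary of the finiteness of $\mathcal{A}$ without spelling out a proof, and your monotone-chain argument (including the use of Corollary~\ref{An+k} to justify that two consecutive equal terms freeze the chain) is the standard way to make that precise. No gaps.
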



Next, we prove that $A^*_s$ is an idempotent operator. 
\begin{lemma}\label{idempotent}
$(A^*_s)^*_s\subseteq A^*_s$ for each social network $(\mathcal{A}, w, \lambda, \theta)$, each spending function $s$, and each subsets $A$ of $\mathcal{A}$.
\end{lemma}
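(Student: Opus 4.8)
The plan is to reduce the statement to the already-established facts that the diffusion operator is monotone in its seed set and that $A^*_s = A^n_s$ for some finite $n$ (Corollary~\ref{A*=Ak}), together with the composition identity $(A^n_s)^k_s = A^{n+k}_s$ (Corollary~\ref{An+k}). The key observation is that the single-step operator $X \mapsto X^1_s$ is monotone: if $X \subseteq Y$ then $X^1_s \subseteq Y^1_s$, because enlarging the set of adopters can only increase each peer-pressure sum $\sum_{a\in X}w(a,b)$ (all weights being non-negative) and the marketing term $\lambda(b)\cdot s(b)$ is unaffected. Iterating, $X \subseteq Y$ implies $X^n_s \subseteq Y^n_s$ for every $n$, and hence $X^*_s \subseteq Y^*_s$.

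The main argument then runs as follows. First I would invoke Corollary~\ref{A*=Ak} to fix $n \ge 0$ with $A^*_s = A^n_s$. Next, starting from the trivial inclusion $A \subseteq A^n_s$ (which is just $A^0_s \subseteq A^n_s$ from Definition~\ref{Ak}), monotonicity of the iterated operator gives $A^k_s \subseteq (A^n_s)^k_s$ for every $k \ge 0$. By Corollary~\ref{An+k}, $(A^n_s)^k_s = A^{n+k}_s \subseteq A^*_s$, where the last inclusion is by Definition~\ref{A*}. Therefore $A^k_s \subseteq A^*_s$ for all $k$ — but more to the point, applying this with $A^n_s$ in the role of the seed set: $(A^n_s)^k_s \subseteq A^*_s$ for every $k$, so taking the union over $k$ yields $(A^n_s)^*_s \subseteq A^*_s$. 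Since $A^*_s = A^n_s$, this is exactly $(A^*_s)^*_s \subseteq A^*_s$, as desired.

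The only step requiring genuine care is the monotonicity of the iterated operator, and even that is routine: the base case is the set-theoretic fact that $X \subseteq Y$ implies $X^1_s \subseteq Y^1_s$ (using non-negativity of $w$ from Definition~\ref{social network definition}), and the inductive step chains this with $X^{n+1}_s = (X^n_s)^1_s$. I would state the monotonicity claim either as a small inline observation or, if it is used elsewhere, as a separate auxiliary lemma; the proof of the present lemma is then a two-line assembly of Corollaries~\ref{An+k} and~\ref{A*=Ak}. I do not anticipate any real obstacle — the finiteness assumption packaged in Corollary~\ref{A*=Ak} is what makes the union collapse to a single stage and lets the composition identity do all the work.
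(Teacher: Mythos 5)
Your proof is correct and takes essentially the same route as the paper: both arguments rest on Corollary~\ref{A*=Ak} to write $A^*_s=A^n_s$, on the composition identity of Corollary~\ref{An+k}, and on Definition~\ref{A*}. The only differences are cosmetic --- the paper invokes Corollary~\ref{A*=Ak} a second time to collapse $(A^n_s)^*_s$ to a single stage $(A^n_s)^k_s$ instead of taking your union over $k$, and your monotonicity observation (the paper's later Lemma~\ref{subset Ak}) is not actually needed, since $(A^n_s)^k_s=A^{n+k}_s\subseteq A^*_s$ already follows directly from Corollary~\ref{An+k} and Definition~\ref{A*}.
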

\begin{proof}
By Corollary~\ref{A*=Ak}, there is $n\ge 0$ such that $A^*_s=A^n_s$. By the same corollary, there also is $k\ge 0$ such that $(A^n_s)^*_s= (A^n_s)^k_s$. Thus, by Corollary~\ref{An+k},
$$
(A^*_s)^*_s=(A^n_s)^*_s=(A^n_s)^k_s=A^{n+k}_s.
$$
Therefore, $(A^*_s)^*_s\subseteq A^*_s$ by Definition~\ref{A*}.
\end{proof}

We now show that any set of agents influences at least as many agents as any of its subsets, given the same fixed spending function. This claim is formally stated as Corollary~\ref{star mono} that follows from the next lemma:

\begin{lemma}\label{subset Ak}
If $A\subseteq B$, then $A^k_s\subseteq B^k_s$, for each social network $(\mathcal{A}, w, \lambda, \theta)$, each spending function $s$, each $k\ge 0$, and all subsets $A$ and $B$ of $\mathcal{A}$.
\end{lemma}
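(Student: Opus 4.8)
The plan is to prove $A^k_s \subseteq B^k_s$ by induction on $k$, with the finiteness and the exact definitions of $A^n_s$ doing all the work. The base case $k=0$ is immediate: $A^0_s = A \subseteq B = B^0_s$ by part~1 of Definition~\ref{Ak} and the hypothesis $A\subseteq B$.

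For the induction step, assume $A^k_s \subseteq B^k_s$; I want to show $A^{k+1}_s \subseteq B^{k+1}_s$. By part~2 of Definition~\ref{Ak}, it suffices to take an arbitrary $c \in A^{k+1}_s$ and show $c\in B^{k+1}_s$. There are two cases. If $c \in A^k_s$, then $c \in B^k_s \subseteq B^{k+1}_s$ by the induction hypothesis together with the fact that the diffusion chain is increasing (part~2 of Definition~\ref{Ak} adds elements to $B^k_s$). Otherwise, $c$ is one of the newly-adopting agents, so
$$
\lambda(c)\cdot s(c) + \sum_{a\in A^k_s} w(a,c) \ge \theta(c).
$$
Here is the one point where I must use that $w$ takes only non-negative values (part~2 of Definition~\ref{social network definition}): since $A^k_s \subseteq B^k_s$, every term $w(a,c)$ with $a \in A^k_s$ also appears in the sum over $B^k_s$, and all the extra terms are $\ge 0$, so $\sum_{a\in A^k_s} w(a,c) \le \sum_{a\in B^k_s} w(a,c)$. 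Hence
$$
\lambda(c)\cdot s(c) + \sum_{a\in B^k_s} w(a,c) \ge \lambda(c)\cdot s(c) + \sum_{a\in A^k_s} w(a,c) \ge \theta(c),
$$
so $c$ satisfies the membership condition for $B^{k+1}_s$ in part~2 of Definition~\ref{Ak}, i.e.\ $c \in B^{k+1}_s$. This closes the induction.

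The only genuinely delicate step is the monotonicity of the peer-pressure sum, and it hinges entirely on non-negativity of $w$; without it the claim would be false, so it is worth flagging in the write-up. Everything else is bookkeeping with the recursive definition. As a remark, the intended consequence — call it Corollary~\ref{star mono} — that $A\subseteq B$ implies $A^*_s \subseteq B^*_s$, follows immediately: by Corollary~\ref{A*=Ak} pick $n$ with $A^*_s = A^n_s$; then $A^n_s \subseteq B^n_s \subseteq B^*_s$ by this lemma and Definition~\ref{A*}.
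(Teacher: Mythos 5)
Your proof is correct and follows essentially the same induction on $k$ as the paper, splitting the step into the case $c\in A^k_s$ and the threshold case and using monotonicity of the peer-pressure sum (which the paper invokes implicitly via non-negativity of $w$, a point you usefully make explicit). No gaps.
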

\begin{proof}
We prove the statement of the lemma by induction on $k$. If $k=0$, then $A^0_s=A\subseteq B=B^0_s$ by Definition~\ref{Ak}.

Suppose that $A^k_s\subseteq B^k_s$. Let $x\in A^{k+1}_s$. It suffices to show that $x\in B^{k+1}_s$. Indeed, by Definition~\ref{Ak}, assumption $x\in A^{k+1}_s$ implies that either $x\in A^{k}_s$ or $\lambda(x)\cdot s(x) + \sum_{a\in A^{k}_s}w(a,x)  \ge \theta(x)$. When $x\in A^{k}_s$, by the induction hypothesis, $x\in A^{k}_s\subseteq B^{k}_s$. Thus, $x\in B^{k}_s$. Therefore, $x\in B^{k+1}_s$ by Definition~\ref{Ak}.

When $\lambda(x)\cdot s(x) + \sum_{a\in A^{k}_s}w(a,x)  \ge \theta(x)$, due to the assumption $A^k\subseteq B^k$,
$$
\lambda(x)\cdot s(x) + \sum_{b\in B^k_s} w(b,x)  \ge \lambda(x)\cdot s(x) + \sum_{a\in A^k_s} w(a,x)  \ge \theta(x).
$$
Therefore, $x\in B^{k+1}_s$ by Definition~\ref{Ak}.
\end{proof}

\begin{corollary}\label{star mono}
If $A\subseteq B$, then $A^*_s\subseteq B^*_s$, for each social network $(\mathcal{A}, w, \lambda, \theta)$, each spending function $s$, and all subsets $A$ and $B$ of $\mathcal{A}$.
\end{corollary}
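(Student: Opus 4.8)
The plan is to derive this corollary directly from Lemma~\ref{subset Ak} together with Definition~\ref{A*}, with essentially no additional work. The key observation is that $A^*_s$ and $B^*_s$ are, by definition, unions over all stages of the diffusion chains starting from $A$ and $B$ respectively, and Lemma~\ref{subset Ak} already gives the desired containment stage by stage.

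Concretely, I would fix a social network $(\mathcal{A}, w, \lambda, \theta)$, a spending function $s$, and sets $A\subseteq B$, and then argue by chasing an arbitrary element. Take any $x\in A^*_s$. By Definition~\ref{A*}, there is some $n\ge 0$ with $x\in A^n_s$. Since $A\subseteq B$, Lemma~\ref{subset Ak} (applied with $k=n$) yields $A^n_s\subseteq B^n_s$, hence $x\in B^n_s$. Applying Definition~\ref{A*} once more in the other direction gives $B^n_s\subseteq B^*_s$, so $x\in B^*_s$. Since $x$ was arbitrary, $A^*_s\subseteq B^*_s$.

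There is no real obstacle here: the inductive content has been isolated in Lemma~\ref{subset Ak}, and what remains is only to pass from the stagewise inclusions to the inclusion of the unions. The one point to be careful about is simply to invoke Lemma~\ref{subset Ak} at the correct index $n$ (the index witnessing $x\in A^*_s$), rather than trying to reprove anything about the chain. Alternatively, one could phrase the proof without element-chasing: $A^*_s=\bigcup_{n\ge 0}A^n_s\subseteq\bigcup_{n\ge 0}B^n_s=B^*_s$, where the middle inclusion is the termwise application of Lemma~\ref{subset Ak}. Either formulation is a one-line argument.
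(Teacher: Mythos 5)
Your proof is correct and matches the paper's intended argument: the paper presents this as an immediate corollary of Lemma~\ref{subset Ak} together with Definition~\ref{A*}, which is exactly the stagewise-inclusion-to-union argument you give.
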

Next, we establish that the influence of the union of two sets of agents is at least as strong as the combination of the influence of these two sets.

\begin{lemma}\label{star union}
$A^*_s\cup B^*_s\subseteq (A\cup B)^*_s$, for each social network $(\mathcal{A}, w, \lambda, \theta)$, each spending function $s$, and all subsets $A$ and $B$ of $\mathcal{A}$.
\end{lemma}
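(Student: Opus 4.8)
The statement is an immediate consequence of the monotonicity of the operator $(\cdot)^*_s$ already recorded in Corollary~\ref{star mono}. The plan is to apply that corollary twice, once to each of the inclusions $A\subseteq A\cup B$ and $B\subseteq A\cup B$, and then take the union of the two resulting inclusions.

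Concretely, first I would observe that $A\subseteq A\cup B$, so Corollary~\ref{star mono} yields $A^*_s\subseteq (A\cup B)^*_s$. Next, by the symmetric observation $B\subseteq A\cup B$, the same corollary gives $B^*_s\subseteq (A\cup B)^*_s$. Combining these two set inclusions, $A^*_s\cup B^*_s\subseteq (A\cup B)^*_s$, which is exactly the claim.

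There is essentially no obstacle here: the only thing being used is that $\cup$ is monotone in each argument together with Corollary~\ref{star mono} (which in turn rests on Lemma~\ref{subset Ak} and Definition~\ref{A*}). If one preferred a self-contained argument not invoking the corollary, one could instead argue directly from Lemma~\ref{subset Ak}: for every $k\ge 0$ we have $A^k_s\subseteq (A\cup B)^k_s$ and $B^k_s\subseteq (A\cup B)^k_s$, hence $A^k_s\cup B^k_s\subseteq (A\cup B)^k_s$, and taking the union over all $k\ge 0$ and using Definition~\ref{A*} gives $A^*_s\cup B^*_s\subseteq (A\cup B)^*_s$. Either route is a couple of lines.
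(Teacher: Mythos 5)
Your proposal is correct and follows exactly the paper's own argument: apply Corollary~\ref{star mono} to the inclusions $A\subseteq A\cup B$ and $B\subseteq A\cup B$, then take the union of the two resulting inclusions. The alternative direct route via Lemma~\ref{subset Ak} is also fine, but unnecessary given the corollary.
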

\begin{proof}
Note that $A\subseteq A\cup B$ and $B\subseteq A\cup B$. Thus, $A^*_s\subseteq (A\cup B)^*_s$ and $B^*_s\subseteq (A\cup B)^*_s$ by Corollary~\ref{star mono}. Therefore, $A^*_s\cup B^*_s\subseteq (A\cup B)^*_s$.
\end{proof}

One might intuitively think that the result of two consecutive marketing campaigns can not be more effective than the combined campaign, or, in other terms, that $(A^*_{s_1})^*_{s_2}\subseteq A^*_{s_1 + s_2}$. More careful analysis shows that this claim is true only if all agents have non-negative propensity. However, this property can be restated in the form which is true for negative propensity as well. To do this, we introduce a binary operation $\oplus_\lambda$ on spending functions. 

\begin{definition}\label{oplus}
For any two spending functions $s_1$ and $s_2$ and any propensity function $\lambda$, let $s_1\oplus_\lambda s_2$ be spending function such that for each agent $a$,
$$
(s_1\oplus_\lambda s_2)(a)=
\begin{cases}
s_1(a)+s_2(a), & \mbox{ if $\lambda(a)\ge 0$},\\
0, & \mbox{ otherwise.}
\end{cases}
$$
\end{definition}

The desired property, expressed in terms of operation $\oplus_\lambda$, is stated later as Lemma~\ref{A**}. We start with two auxiliary observations.

\begin{lemma}\label{lambda oplus}
$
\lambda(b)\cdot s_1(b)\le 
\lambda(b)\cdot(s_1\oplus_\lambda s_2)(b)
$
for any social network $(\mathcal{A},w,\lambda,\theta)$, any
agent $b\in\mathcal{A}$, and any two spending functions $s_1$ and $s_2$.
\end{lemma}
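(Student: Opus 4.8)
The plan is to prove the inequality by a direct case analysis on the sign of the propensity $\lambda(b)$, since the definition of $s_1\oplus_\lambda s_2$ (Definition~\ref{oplus}) branches on exactly this condition. Throughout I will use the fact, from Definition~\ref{spending function}, that every spending function takes only non-negative values.

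First I would treat the case $\lambda(b)\ge 0$. Here Definition~\ref{oplus} gives $(s_1\oplus_\lambda s_2)(b)=s_1(b)+s_2(b)$. Since $s_2(b)\ge 0$, we have $s_1(b)\le s_1(b)+s_2(b)$, and multiplying this inequality by the non-negative number $\lambda(b)$ preserves its direction, yielding $\lambda(b)\cdot s_1(b)\le \lambda(b)\cdot(s_1(b)+s_2(b))=\lambda(b)\cdot(s_1\oplus_\lambda s_2)(b)$, as required.

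Next I would treat the case $\lambda(b)<0$. Now Definition~\ref{oplus} gives $(s_1\oplus_\lambda s_2)(b)=0$, so the right-hand side of the desired inequality is $\lambda(b)\cdot 0 = 0$. On the other hand, $\lambda(b)<0$ and $s_1(b)\ge 0$ together give $\lambda(b)\cdot s_1(b)\le 0$. Hence $\lambda(b)\cdot s_1(b)\le 0 = \lambda(b)\cdot(s_1\oplus_\lambda s_2)(b)$, which again is what we want.

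There is no real obstacle here: the statement is essentially a bookkeeping lemma that isolates the only subtlety in the later argument, namely that a negative propensity turns extra spending into a \emph{liability} rather than a help, which is precisely why the $\oplus_\lambda$ operation zeroes out spending on such agents. The only point requiring a moment's care is remembering to flip (or rather, not flip) inequalities correctly when $\lambda(b)$ is negative, which the two-case split above handles cleanly.
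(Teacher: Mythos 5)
Your proof is correct and follows essentially the same two-case argument as the paper: split on the sign of $\lambda(b)$, use $s_2(b)\ge 0$ when $\lambda(b)\ge 0$, and use $s_1(b)\ge 0$ together with $(s_1\oplus_\lambda s_2)(b)=0$ when $\lambda(b)<0$. The only cosmetic difference is that in the negative case you compare both sides to $0$ directly, while the paper multiplies the inequality $s_1(b)\ge 0=(s_1\oplus_\lambda s_2)(b)$ by the negative $\lambda(b)$; these are the same computation.
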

\begin{proof} We consider the following two cases separately:

\noindent{\em Case I:} $\lambda(b)\ge 0$. In this case by Definition~\ref{oplus} and because $s_2(b)\ge 0$ due to Definition~\ref{spending function}, we have
$s_1(b)\le s_1(b) + s_2(b)=  (s_1\oplus_\lambda s_2)(b)$. Therefore, 
$
\lambda(b)\cdot s_1(b)\le 
\lambda(b)\cdot(s_1\oplus_\lambda s_2)(b)
$
 by the assumption $\lambda(b)\ge 0$.

\noindent{\em Case II:} $\lambda(b)< 0$. In this case by Definition~\ref{oplus} and because $s_1(b)\ge 0$ due to Definition~\ref{spending function}, we have
$s_1(b) \ge 0 =  (s_1\oplus_\lambda s_2)(b)$. Therefore, 
$
\lambda(b)\cdot s_1(b)\le 
\lambda(b)\cdot(s_1\oplus_\lambda s_2)(b)
$ by the assumption $\lambda(b)< 0$.
\end{proof}

Now we prove that the spending function $s_1\oplus_\lambda s_2$ is at least as effective as $s_1$.
\begin{lemma}\label{s1s2}
$A^n_{s_1} \subseteq  A^n_{s_1\oplus_\lambda s_2}$, for any social network $(\mathcal{A},w,\lambda,\theta)$, any set $A\subseteq \mathcal{A}$, any $n\ge 0$, any propensity function $\lambda$, and any two spending function $s_1$ and $s_2$.
\end{lemma}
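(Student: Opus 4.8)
The plan is to prove the inclusion by induction on $n$, mirroring the structure of the proof of Lemma~\ref{subset Ak}. For the base case $n=0$, Definition~\ref{Ak} gives $A^0_{s_1}=A=A^0_{s_1\oplus_\lambda s_2}$, so the inclusion is an equality. For the inductive step I would assume $A^n_{s_1}\subseteq A^n_{s_1\oplus_\lambda s_2}$ and pick an arbitrary $x\in A^{n+1}_{s_1}$, aiming to show $x\in A^{n+1}_{s_1\oplus_\lambda s_2}$.

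By Definition~\ref{Ak}, membership $x\in A^{n+1}_{s_1}$ splits into two cases. If $x\in A^n_{s_1}$, then the induction hypothesis gives $x\in A^n_{s_1\oplus_\lambda s_2}\subseteq A^{n+1}_{s_1\oplus_\lambda s_2}$, again by Definition~\ref{Ak}. Otherwise $\lambda(x)\cdot s_1(x)+\sum_{a\in A^n_{s_1}}w(a,x)\ge\theta(x)$, and here I would combine two monotonicity facts: first, Lemma~\ref{lambda oplus} yields $\lambda(x)\cdot s_1(x)\le\lambda(x)\cdot(s_1\oplus_\lambda s_2)(x)$; second, the induction hypothesis $A^n_{s_1}\subseteq A^n_{s_1\oplus_\lambda s_2}$ together with the non-negativity of $w$ from Definition~\ref{social network definition} gives $\sum_{a\in A^n_{s_1}}w(a,x)\le\sum_{a\in A^n_{s_1\oplus_\lambda s_2}}w(a,x)$. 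Adding these two inequalities produces
$$
\lambda(x)\cdot(s_1\oplus_\lambda s_2)(x)+\sum_{a\in A^n_{s_1\oplus_\lambda s_2}}w(a,x)\ge\lambda(x)\cdot s_1(x)+\sum_{a\in A^n_{s_1}}w(a,x)\ge\theta(x),
$$
so $x\in A^{n+1}_{s_1\oplus_\lambda s_2}$ by Definition~\ref{Ak}, completing the induction.

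There is no real obstacle here; the only point requiring care is the case split on the sign of $\lambda(x)$, and that has already been isolated into Lemma~\ref{lambda oplus}, so this proof can simply invoke it. The reason the statement is phrased with $\oplus_\lambda$ rather than ordinary pointwise sum is precisely that, for agents with negative propensity, adding $s_2(b)$ would \emph{decrease} the left-hand side of the diffusion inequality; Definition~\ref{oplus} zeroes out spending on such agents so that the term $\lambda(b)\cdot(s_1\oplus_\lambda s_2)(b)$ dominates $\lambda(b)\cdot s_1(b)$ in every case, which is exactly what Lemma~\ref{lambda oplus} records and what the inductive step needs.
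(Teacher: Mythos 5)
Your proof is correct and follows essentially the same route as the paper: induction on $n$ with base case $A^0_{s_1}=A=A^0_{s_1\oplus_\lambda s_2}$, and an inductive step that combines Lemma~\ref{lambda oplus} with the induction hypothesis (plus non-negativity of $w$) to verify the threshold inequality for the combined spending function. The paper merely writes the step as a single set-containment chain rather than an element-wise case split, which is a purely presentational difference.
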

\begin{proof}
We show the lemma by induction on $n$. If $n=0$, then, by Definition~\ref{Ak},
$A^0_{s_1}=A=A^0_{s_1\oplus_\lambda s_2}$. Suppose that $A^n_{s_1}\subseteq A^n_{s_1\oplus_\lambda s_2}$. We need to show that $A^{n+1}_{s_1}\subseteq A^{n+1}_{s_1\oplus_\lambda s_2}$. Indeed, by Definition~\ref{Ak}, Lemma~\ref{lambda oplus}, and the induction hypothesis,
\begin{eqnarray*}
A^{n+1}_{s_1} &=& A^n_{s_1}\cup\left\{b\in \mathcal{A}\;|\;  \lambda(b)\cdot s_1(b) + \sum_{a\in A^n_{s_1}} w(a,b) \ge \theta(b) \right\}\\
&\subseteq& A^n_{s_1\oplus_\lambda s_2}\cup\left\{b\in \mathcal{A}\;|\;  \lambda(b)\cdot (s_1\oplus_\lambda s_2)(b) + \sum_{a\in A^n_{s_1\oplus_\lambda s_2}} w(a,b) \ge \theta(b) \right\} \\
&=& A^{n+1}_{s_1\oplus_\lambda s_2}.
\end{eqnarray*}
\end{proof}

Finally, we are ready to state and prove that a marketing campaign with spending function $s_1\oplus_\lambda s_2$ is at least as effective as a sequential combination of two marketing campaigns with spending functions $s_1$ and $s_2$. This property is used in Lemma~\ref{promotional transitivity sound} to prove the soundness of Transitivity axiom for promotional marketing.
\begin{lemma}\label{A**}
$(A^*_{s_1})^*_{s_2}\subseteq A^*_{s_1\oplus_\lambda s_2}$, for any social network $(\mathcal{A},w,\lambda,\theta)$, any set $A\subseteq \mathcal{A}$, any propensity function $\lambda$, and any two spending function $s_1$ and $s_2$.
\end{lemma}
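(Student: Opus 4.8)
The plan is to reduce the inclusion to a statement about the finite stages of the diffusion chain $\{(A^*_{s_1})^k_{s_2}\}_{k\ge 0}$ and to prove that statement by induction on $k$. Before starting the induction I would record three ingredients. First, taking the union over $n$ in Lemma~\ref{s1s2} and applying Definition~\ref{A*} gives $A^*_{s_1}\subseteq A^*_{s_1\oplus_\lambda s_2}$; this will serve as the base case. Second, the operation $\oplus_\lambda$ is commutative (immediate from Definition~\ref{oplus}), so Lemma~\ref{lambda oplus} yields $\lambda(b)\cdot s_2(b)\le\lambda(b)\cdot(s_1\oplus_\lambda s_2)(b)$ for every agent $b$. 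Third, $A^*_s$ is a fixed point of one further diffusion step, i.e. $(A^*_s)^1_s=A^*_s$, where ``$\supseteq$'' is immediate from Definition~\ref{Ak} and ``$\subseteq$'' follows since $(A^*_s)^1_s\subseteq(A^*_s)^*_s\subseteq A^*_s$ by Definition~\ref{A*} and Lemma~\ref{idempotent}.

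The heart of the proof is the claim that $(A^*_{s_1})^k_{s_2}\subseteq A^*_{s_1\oplus_\lambda s_2}$ for every $k\ge 0$, proved by induction on $k$. The base case $k=0$ is the inclusion $A^*_{s_1}\subseteq A^*_{s_1\oplus_\lambda s_2}$ noted above. For the induction step, assume the inclusion for $k$ and take any $x\in(A^*_{s_1})^{k+1}_{s_2}$. By Definition~\ref{Ak}, either $x\in(A^*_{s_1})^k_{s_2}$, and then $x\in A^*_{s_1\oplus_\lambda s_2}$ by the induction hypothesis; or $\lambda(x)\cdot s_2(x)+\sum_{a\in(A^*_{s_1})^k_{s_2}}w(a,x)\ge\theta(x)$. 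In the second case, using the induction hypothesis together with $w\ge 0$ to enlarge the index set of the summation to $A^*_{s_1\oplus_\lambda s_2}$, and using $\lambda(x)\cdot s_2(x)\le\lambda(x)\cdot(s_1\oplus_\lambda s_2)(x)$ from the second ingredient, I obtain $\lambda(x)\cdot(s_1\oplus_\lambda s_2)(x)+\sum_{a\in A^*_{s_1\oplus_\lambda s_2}}w(a,x)\ge\theta(x)$, hence $x\in(A^*_{s_1\oplus_\lambda s_2})^1_{s_1\oplus_\lambda s_2}=A^*_{s_1\oplus_\lambda s_2}$ by the third ingredient. Finally, Corollary~\ref{A*=Ak} provides a $k$ with $(A^*_{s_1})^*_{s_2}=(A^*_{s_1})^k_{s_2}$, and the claim then gives $(A^*_{s_1})^*_{s_2}\subseteq A^*_{s_1\oplus_\lambda s_2}$.

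The main obstacle is resisting the temptation to argue purely by monotonicity and idempotence: from $A^*_{s_1}\subseteq A^*_{s_1\oplus_\lambda s_2}$ and Corollary~\ref{star mono} one only gets $(A^*_{s_1})^*_{s_2}\subseteq(A^*_{s_1\oplus_\lambda s_2})^*_{s_2}$, and there is no reason for $(A^*_{s_1\oplus_\lambda s_2})^*_{s_2}$ to be contained in $A^*_{s_1\oplus_\lambda s_2}$, since $s_1\oplus_\lambda s_2$ is not a fixed point of ``$\oplus_\lambda s_2$'' (on an agent of non-negative propensity the spending would grow by a further $s_2$). The working idea is therefore to interleave the two processes: run the diffusion under $s_2$ one stage at a time and, at each stage, absorb the newly contributed marketing term $\lambda(x)\cdot s_2(x)$ into $\lambda(x)\cdot(s_1\oplus_\lambda s_2)(x)$ via Lemma~\ref{lambda oplus}, while the peer-pressure term is kept under control by the induction hypothesis. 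The only delicate point inside this is that the marketing term must dominate for agents of both signs of propensity, which is exactly what Lemma~\ref{lambda oplus} (together with commutativity of $\oplus_\lambda$) supplies.
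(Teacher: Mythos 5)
Your proof is correct, but it takes a noticeably different route from the paper's. The paper also reduces to finite stages via Corollary~\ref{A*=Ak}, but then argues by a chain of inclusions at the level of those stages: it applies Lemma~\ref{s1s2} twice -- once to replace $s_1$ by $s_1\oplus_\lambda s_2$ in the inner diffusion (combined with Lemma~\ref{subset Ak}), and once more to replace $s_2$ by $s_2\oplus_\lambda s_1=s_1\oplus_\lambda s_2$ in the outer diffusion -- and finally collapses the two runs with the composition property $(A^{n_1}_s)^{n_2}_s=A^{n_1+n_2}_s$ of Corollary~\ref{An+k}. You instead use Lemma~\ref{s1s2} only for the base case $A^*_{s_1}\subseteq A^*_{s_1\oplus_\lambda s_2}$, and handle the outer diffusion by an element-wise induction that unfolds Definition~\ref{Ak}, dominating the marketing term via Lemma~\ref{lambda oplus} (plus commutativity of $\oplus_\lambda$) and the peer-pressure term via the induction hypothesis, then absorbing the new agent through the fixed-point identity $(A^*_s)^1_s=A^*_s$, which rests on Lemma~\ref{idempotent}. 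So your key tools are idempotence and the one-step fixed-point property, which the paper's proof of this lemma never invokes, while the paper's key tools are the second application of Lemma~\ref{s1s2} and Corollary~\ref{An+k}, which you avoid. The paper's argument is shorter because it reuses Lemma~\ref{s1s2} as a black box; yours is more self-contained at the level of the diffusion step and makes explicit exactly where the sign condition on $\lambda$ enters (only through Lemma~\ref{lambda oplus}), at the cost of essentially re-deriving, inside the induction, what the paper gets from the already-proved machinery. Your closing remark about why plain monotonicity plus idempotence cannot work is also accurate: $(A^*_{s_1\oplus_\lambda s_2})^*_{s_2}$ need not be contained in $A^*_{s_1\oplus_\lambda s_2}$, which is precisely why some interleaving or re-application of Lemma~\ref{s1s2} is required.
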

\begin{proof}
By Corollary~\ref{A*=Ak}, there are $n_1,n_2\ge 0$ such that $A^*_{s_1}=A^{n_1}_{s_1}$ and $(A^{n_1}_{s_1})^*_{s_2}=(A^{n_1}_{s_1})^{n_2}_{s_2}$. Thus,

\[\arraycolsep=2pt\def\arraystretch{1.6}
\begin{array}{rclcr}
(A^*_{s_1})^*_{s_2} & = & (A^{n_1}_{s_1})^{n_2}_{s_2}\\
& \subseteq  & (A^{n_1}_{s_1\oplus_\lambda s_2})^{n_2}_{s_2} &\phantom{AAA} & \mbox{by Lemma~\ref{s1s2} and Lemma~\ref{subset Ak}}\\
& \subseteq & (A^{n_1}_{s_1\oplus_\lambda s_2})^{n_2}_{s_2\oplus_\lambda s_1} & & \mbox{by Lemma~\ref{s1s2}}\\
& \subseteq & (A^{n_1}_{s_1\oplus_\lambda s_2})^{n_2}_{s_1\oplus_\lambda s_2} & & \mbox{by Definition~\ref{oplus}}\\
& \subseteq & A^{n_1+n_2}_{s_1\oplus_\lambda s_2} & & \mbox{by Corollary~\ref{An+k}}\\
& \subseteq & A^{*}_{s_1\oplus_\lambda s_2} & & \mbox{by Definition~\ref{A*}.}
\end{array}
\]
\end{proof}

\section{Logic of Promotional Marketing}\label{promotional section}

There are two logical systems that we study in this paper. In this section we introduce a logical system for the marketing aiming to promote influence and prove its soundness and completeness. In the next section we do the same for the marketing aiming to prevent influence. 

\subsection{Syntax and Semantics}

We start by defining the syntax of our logical systems. The logic of promotional marketing and the logic of preventive marketing use the same language $\Phi(\mathcal{A})$, but different semantics.

\begin{definition}
For any finite set $\mathcal{A}$, let $\Phi(\mathcal{A})$ be the minimum set of formulas such that
\begin{enumerate}
\item $A\rhd_p B\in \Phi(\mathcal{A})$ for all subsets $A$ and $B$ of set $\mathcal{A}$ and all non-negative real numbers $p$,
\item $\neg\phi\in\Phi(\mathcal{A})$ for all $\phi\in \Phi(\mathcal{A})$,
\item $\phi\to\psi\in\Phi(\mathcal{A})$ for all $\phi,\psi\in\Phi(\mathcal{A})$.
\end{enumerate}
\end{definition}

The next definition is the key definition of this section. Its item 1 specifies the influence relation in a social network with a  fixed marketing budget.
\begin{definition}\label{sat}
For any social network $N$ with the set of agents $\mathcal{A}$ and any formula $\phi\in\Phi(\mathcal{A})$, we define satisfiability relation $N\vDash\phi$ as follows:
\begin{enumerate}
\item $N\vDash A\rhd_p B$ if $B\subseteq A^*_s$ for some spending function $s$ such that $\|s\|\le p$,
\item $N\vDash\neg\psi$ if $N\nvDash\psi$,
\item $N\vDash\psi\to\chi$ if $N\nvDash\psi$ or $N\vDash\chi$.
\end{enumerate}
\end{definition}
For example, as we have seen in the introduction, for social network $N$ depicted in Figure~\ref{social network figure}, we have $\{x,z\}\subseteq\{v\}^*_s$, where spending function $s$ is defined by equation~(\ref{sample s}). Thus, $N\vDash \{v\}\rhd_3 \{x,z\}$. Through the rest of the paper we omit curly braces from the formulas like this and write them simply as $N\vDash v\rhd_3 x,z$.

\subsection{Axioms}
Let $\mathcal{A}$ be any fixed finite set of agents. Our logical system for promotional influence, in addition to propositional tautologies in language $\Phi(\mathcal{A})$, contains the following axioms:

\begin{enumerate}
\item Reflexivity: $A\rhd_p B$, where $B\subseteq A$,
\item Augmentation: $A\rhd_p B\to A,C\rhd_p B,C$,
\item Transitivity: $A\rhd_p B\to(B\rhd_q C\to A\rhd_{p+q} C)$.
\end{enumerate}

We write $\vdash \phi$ if formula $\phi\in\Phi(\mathcal{A})$ is derivable in this logical system using Modus Ponens inference rule. We write $X\vdash \phi$ if formula $\phi$ is derivable using an additional set of axioms $X\subseteq \Phi(\mathcal{A})$.

\subsection{Examples}

The soundness and the completeness of our logical system will be shown later. In this section we give several examples of formal proofs in our system. We start with a form of the monotonicity statement from the introduction. As the next lemma shows, this statement is provable in our logic of promotional marketing when $p\le q$:

\begin{lemma}\label{mono lemma}
$\vdash A\rhd_p B\to A\rhd_q B$, where $p\le q$.
\end{lemma}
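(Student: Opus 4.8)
The plan is to derive $A\rhd_q B$ from the assumed $A\rhd_p B$ by inserting a reflexivity instance and applying the Transitivity axiom, exploiting that $q = p + (q-p)$ with $q - p \ge 0$ by hypothesis. Concretely: first, since $B \subseteq B$, the Reflexivity axiom gives $\vdash B \rhd_{q-p} B$ (this is where the constraint $p \le q$ is needed — the index $q-p$ must be a non-negative real for this to be a well-formed formula and a legitimate axiom instance). Second, the Transitivity axiom instantiated with the sets $A, B, B$ and the budgets $p$ and $q-p$ yields $\vdash A\rhd_p B \to (B\rhd_{q-p} B \to A\rhd_{p+(q-p)} B)$, i.e. $\vdash A\rhd_p B \to (B\rhd_{q-p} B \to A\rhd_{q} B)$.

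From here it is pure propositional reasoning with Modus Ponens: from $\vdash B\rhd_{q-p} B$ and the nested implication, detach to obtain $\vdash A\rhd_p B \to A\rhd_q B$, which is the claim. One can phrase this cleanly by noting that $(\alpha) , (\alpha \to (\beta \to \gamma)) , \beta \vdash \gamma$ is a propositional tautology-plus-MP pattern, so combining the two derived theorems above discharges the goal.

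I do not anticipate a genuine obstacle here — the lemma is essentially a one-line consequence of Transitivity and Reflexivity. The only point requiring care is bookkeeping on the budget indices: making sure we invoke Reflexivity with index $q-p$ (not $p$ or $q$) and Transitivity with the pair $(p, q-p)$ so that the sum comes out to exactly $q$, and explicitly citing $p \le q$ to justify that $q - p$ is an admissible (non-negative) parameter. Everything else is routine propositional manipulation.
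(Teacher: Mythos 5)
Your proof is correct and follows essentially the same route as the paper: pad the budget with a Reflexivity instance of weight $q-p$ and combine it with a Transitivity instance via propositional reasoning and Modus Ponens. The only (immaterial) difference is that the paper inserts the padding in front, using $\vdash A\rhd_{q-p}A$ and the instance $A\rhd_{q-p}A\to(A\rhd_p B\to A\rhd_q B)$, whereas you append $\vdash B\rhd_{q-p}B$ at the end, which costs one extra propositional rearrangement but is equally valid.
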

\begin{proof}
By Transitivity axiom,
$
\vdash A\rhd_{q-p} A \to(A\rhd_p B\to A\rhd_q B).
$
At the same time, $\vdash A\rhd_{q-p} A$ by Reflexivity axiom. Thus, $\vdash A\rhd_p B\to A\rhd_q B$ by Modus Ponens inference rule.
\end{proof}

\begin{lemma}\label{union lemma}
$\vdash A\rhd_p B \to (A\rhd_q C \to A\rhd_{p+q} B,C)$. 
\end{lemma}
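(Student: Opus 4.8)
The plan is to derive $A\rhd_{p+q} B,C$ from the hypotheses $A\rhd_p B$ and $A\rhd_q C$ using Augmentation and Transitivity. First I would apply the Augmentation axiom to $A\rhd_p B$ with $C := A$, obtaining $A,A\rhd_p B,A$; since $A,A = A$ and $A\subseteq B,A$... actually the cleaner move is to augment $A\rhd_q C$ by the set $B$ to get $A,B\rhd_q B,C$. So the first step is: from $A\rhd_q C$, by Augmentation (taking the ``$C$'' of the axiom to be $B$), derive $A,B\rhd_q C,B$, i.e.\ $A,B\rhd_q B,C$.

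Next, I need to connect $A$ to $A,B$. From the hypothesis $A\rhd_p B$, by Augmentation with the set $A$, I get $A,A\rhd_p B,A$, that is $A\rhd_p A,B$. Now I can chain: Transitivity applied to $A\rhd_p A,B$ and $A,B\rhd_q B,C$ yields $A\rhd_{p+q} B,C$, which is exactly the conclusion. So the formal derivation is: (i) $A\rhd_p A,B$ from $A\rhd_p B$ by Augmentation; (ii) $A,B\rhd_q B,C$ from $A\rhd_q C$ by Augmentation; (iii) $A\rhd_{p+q} B,C$ from (i) and (ii) by Transitivity. Each of these is a single axiom instance combined with the hypotheses via Modus Ponens, all carried out inside a deduction from the set $\{A\rhd_p B, A\rhd_q C\}$, and then discharged by the deduction theorem for propositional logic (which is available since the system contains all propositional tautologies and Modus Ponens) to get the stated implication.

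There is essentially no hard part here — the only thing to be careful about is the bookkeeping of which set plays the role of the free parameter ``$C$'' in each use of the Augmentation axiom, and the trivial set identities $A,A=A$ and that $B,C = C,B$ so that the middle formula $A,B\rhd_q B,C$ matches what Transitivity needs as its antecedent's consequent. I would also note explicitly that the Augmentation axiom as stated, $A\rhd_p B\to A,C\rhd_p B,C$, with $C$ instantiated to $A$, gives $A,A\rhd_p B,A$, and $A,A$ denotes $A\cup A = A$, so this is $A\rhd_p A,B$ after reordering the union. With those identifications the proof is a three-line formal derivation, which I would present in the same running style as the proofs of Lemma~\ref{mono lemma}.
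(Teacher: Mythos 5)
Your derivation uses exactly the same three axiom instances as the paper's proof: Augmentation to get $A\rhd_p A,B$ and $A,B\rhd_q B,C$, then Transitivity to chain them into $A\rhd_{p+q} B,C$, with propositional reasoning (in your case phrased via the deduction theorem, in the paper via combining the three proved implications) discharging the hypotheses. The proposal is correct and essentially identical to the paper's argument.
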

\begin{proof}
By Augmentation axiom,
\begin{equation}\label{eq 1}
\vdash A\rhd_p B \to A\rhd_p A,B
\end{equation}
and 
\begin{equation}\label{eq 2}
\vdash A\rhd_q C \to A,B\rhd_q B,C.
\end{equation}
By Transitivity axiom,
\begin{equation}\label{eq 3}
\vdash A\rhd_p A,B \to (A,B\rhd_q B,C \to A\rhd_{p+q} B,C).
\end{equation}
The statement of the lemma follows from statements (\ref{eq 1}), (\ref{eq 2}), and (\ref{eq 3}) by the laws of propositional logic.
\end{proof}

The next lemma will be used later in the proof of the completeness.

\begin{lemma}\label{big union lemma}
Let $X$ be a subset of $\Phi(\mathcal{A})$, $m$ be a non-negative integer number, sets $A,B_1,\dots,B_m$ be subsets of $\mathcal{A}$, and $p_1,\dots,p_m$ be non-negative real numbers.
If $X\vdash A\rhd_{p_i} B_i$ for all $1\le i\le m$, then $X\vdash A\rhd_{q} \bigcup_{i=1}^m B_i$, where $q=\sum_{i=1}^m p_i$. 
\end{lemma}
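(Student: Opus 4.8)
The plan is to prove the statement by induction on $m$. This is the natural approach since the claim is an $m$-ary generalization of the binary union result already established in Lemma~\ref{union lemma}, and induction lets us peel off one $B_i$ at a time while accumulating the budget.

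For the base case $m=0$, the union $\bigcup_{i=1}^0 B_i$ is empty and $q=\sum_{i=1}^0 p_i = 0$, so the goal reduces to $X\vdash A\rhd_0 \emptyset$, which is an instance of the Reflexivity axiom since $\emptyset\subseteq A$. (Alternatively one can start the induction at $m=1$, where the claim is immediate, but handling $m=0$ makes the statement uniformly true.)

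For the induction step, assume the claim holds for $m$ and suppose $X\vdash A\rhd_{p_i} B_i$ for all $1\le i\le m+1$. By the induction hypothesis applied to the first $m$ premises, $X\vdash A\rhd_{q'} \bigcup_{i=1}^m B_i$ where $q'=\sum_{i=1}^m p_i$. We also have $X\vdash A\rhd_{p_{m+1}} B_{m+1}$. Now apply Lemma~\ref{union lemma} with the roles $B := \bigcup_{i=1}^m B_i$, $C := B_{m+1}$, $p := q'$, $q := p_{m+1}$: this gives $X\vdash A\rhd_{q'+p_{m+1}} \left(\bigcup_{i=1}^m B_i\right)\cup B_{m+1}$. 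Since $\left(\bigcup_{i=1}^m B_i\right)\cup B_{m+1} = \bigcup_{i=1}^{m+1} B_i$ and $q'+p_{m+1}=\sum_{i=1}^{m+1} p_i$, this is exactly the desired conclusion. The derivations combine via Modus Ponens, so everything stays within the logical system with the extra axioms $X$.

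I do not expect any real obstacle here; the only thing to be careful about is the bookkeeping of indices and the empty-union base case, plus noting that Lemma~\ref{union lemma}, though stated as $\vdash$, holds equally with any hypothesis set $X$ since its proof uses only the axioms and propositional reasoning. The main "work" was already done in Lemma~\ref{union lemma} and ultimately in the Transitivity and Augmentation axioms.
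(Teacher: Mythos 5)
Your proof is correct and follows essentially the same route as the paper: induction on $m$, with the base case $X\vdash A\rhd_0\varnothing$ by Reflexivity and the inductive step combining the induction hypothesis with $X\vdash A\rhd_{p_{m+1}}B_{m+1}$ via Lemma~\ref{union lemma}. Your added remark that Lemma~\ref{union lemma} lifts to derivations from $X$ is a correct and harmless clarification.
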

\begin{proof}
We prove the lemma by induction on $m$. If $m=0$, then we need to show that $X\vdash A\rhd_0\varnothing$, which is an instance of Reflexivity axiom.

Suppose that $X\vdash A\rhd_{q'} \bigcup_{i=1}^{m-1}B_i$, where $q'=\sum_{i=1}^{m-1} p_i$. Since $X\vdash A\rhd_{p_m} B_m$ due to the assumption of the lemma, by Lemma~\ref{union lemma}, $X\vdash A\rhd_{q} \bigcup_{i=1}^m B_i$, where $q=\sum_{i=1}^m p_i$.
\end{proof}

\subsection{Soundness}

In this section we prove the soundness of the logic for promotional marketing. 

\begin{theorem}\label{promotional soundness}
For any finite set $\mathcal{A}$ and any $\phi\in\Phi(\mathcal{A})$,
if $\vdash \phi$, then $N\vDash\phi$ for each social network $N=(\mathcal{A},w,\lambda,\theta)$.
\end{theorem}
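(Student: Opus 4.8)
The plan is to prove soundness by the standard induction on the structure of a derivation of $\phi$. Since Modus Ponens preserves satisfiability under the semantics of Definition~\ref{sat} (items 2 and 3 give the usual Boolean clauses), and propositional tautologies are clearly satisfied in every social network $N$, the entire argument reduces to checking that each of the three influence axioms is sound. So the proof will consist of three lemmas, one per axiom, and then a short paragraph assembling them into the theorem by induction.

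For Reflexivity, given $B\subseteq A$ I would take the zero spending function $s\equiv 0$, so that $\|s\|=0\le p$, and observe $B\subseteq A=A^0_s\subseteq A^*_s$ by Corollary~\ref{A subseteq A*s}; hence $N\vDash A\rhd_p B$. For Augmentation, assume $N\vDash A\rhd_p B$, so there is $s$ with $\|s\|\le p$ and $B\subseteq A^*_s$. I want $B\cup C\subseteq (A\cup C)^*_s$ with the same $s$. Indeed $A\subseteq A\cup C$ gives $A^*_s\subseteq (A\cup C)^*_s$ by Corollary~\ref{star mono}, so $B\subseteq (A\cup C)^*_s$; and $C\subseteq A\cup C\subseteq (A\cup C)^*_s$ by Corollary~\ref{A subseteq A*s}. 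Thus $B\cup C\subseteq (A\cup C)^*_s$ and $N\vDash A,C\rhd_p B,C$.

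The main obstacle is Transitivity: assuming $N\vDash A\rhd_p B$ and $N\vDash B\rhd_q C$, I must produce a single spending function of norm at most $p+q$ that drives $A$ all the way to $C$. From the hypotheses there are $s_1$ with $\|s_1\|\le p$ and $B\subseteq A^*_{s_1}$, and $s_2$ with $\|s_2\|\le q$ and $C\subseteq B^*_{s_2}$. Then $C\subseteq B^*_{s_2}\subseteq (A^*_{s_1})^*_{s_2}$ by Corollary~\ref{star mono}, and $(A^*_{s_1})^*_{s_2}\subseteq A^*_{s_1\oplus_\lambda s_2}$ by Lemma~\ref{A**}; so the combined spending function $s_1\oplus_\lambda s_2$ works, provided its norm is controlled. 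Here Definition~\ref{oplus} is exactly what saves us: $(s_1\oplus_\lambda s_2)(a)$ is either $s_1(a)+s_2(a)$ or $0$, so pointwise $(s_1\oplus_\lambda s_2)(a)\le s_1(a)+s_2(a)$, whence $\|s_1\oplus_\lambda s_2\|\le\|s_1\|+\|s_2\|\le p+q$. Therefore $N\vDash A\rhd_{p+q} C$.

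Finally I would wrap up: by induction on the derivation, using that tautologies hold in every $N$, that the three axioms are sound by the three lemmas above, and that Modus Ponens preserves $N\vDash\cdot$, every derivable $\phi$ satisfies $N\vDash\phi$. I expect the Reflexivity and Augmentation cases to be one line each once the corollaries are cited, and essentially all the real content to sit in the Transitivity case, where the key external input is Lemma~\ref{A**} together with the norm estimate for $\oplus_\lambda$.
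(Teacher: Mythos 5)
Your proposal is correct and follows essentially the same route as the paper: per-axiom soundness lemmas, the zero spending function for Reflexivity, reusing the witnessing spending function together with the monotonicity/union corollaries for Augmentation, and combining the two witnesses via $s_1\oplus_\lambda s_2$ with Lemma~\ref{A**} and the pointwise norm bound $\|s_1\oplus_\lambda s_2\|\le\|s_1\|+\|s_2\|$ for Transitivity. The only (immaterial) difference is that in Augmentation you invoke Corollary~\ref{star mono} directly where the paper cites Lemma~\ref{star union}, which is itself just that corollary applied twice.
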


The soundness of propositional tautologies and of Modus Ponens inference rule is straightforward. Below we show the soundness of each of the remaining axioms as a separate lemma.

\begin{lemma}
$N\vDash A\rhd_p B$, for any social network $N=(\mathcal{A}, w, \lambda, \theta)$ and any subsets $A$ and $B$ of $\mathcal{A}$ such that $B\subseteq A$.
\end{lemma}
\begin{proof} Let $s$ be the spending function equal to 0 on each $a\in \mathcal{A}$. Thus, $\|s\|=0\le p$ by Definition~\ref{||set}. At the same time,
$B\subseteq A  \subseteq  A^*_s$ by Corollary~\ref{A subseteq A*s}. Therefore, $N\vDash A\rhd_p B$ by Definition~\ref{sat}.
\end{proof}

\begin{lemma}
If $N\vDash A\rhd_p B$, then $N\vDash A,C\rhd_p B,C$, for each social network $N=(\mathcal{A}, w, \lambda, \theta)$ and all subsets $A$, $B$, and $C$ of $\mathcal{A}$.
\end{lemma}
\begin{proof}
Suppose that $N \vDash A \rhd_p B $. Thus, by Definition~\ref{sat}, there is a spending function $s$ such that $\|s\|\le p$ and $B\subseteq A^*_s$ . Note that $C\subseteq C^*_s$ by Corollary~\ref{A subseteq A*s}. Thus, $B\cup C \subseteq A^*_s \cup C^*_s \subseteq (A\cup C)^*_s$ by Lemma~\ref{star union}.
Therefore, $N \vDash A,C \rhd_p B,C $, by Definition~\ref{sat}.
\end{proof}

\begin{lemma}\label{promotional transitivity sound} 
For any social network $N=(\mathcal{A},w,\lambda,\theta)$,
if $N \vDash A \rhd_p B$ and $N\vDash B \rhd_q C$, then $N \vDash A \rhd_{p+q} C$. 
\end{lemma}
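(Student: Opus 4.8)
The goal is the soundness of the Transitivity axiom for promotional marketing, and the natural strategy is to unwind Definition~\ref{sat} and then invoke Lemma~\ref{A**}, which was explicitly prepared for this purpose. First I would assume $N\vDash A\rhd_p B$ and $N\vDash B\rhd_q C$. By item~1 of Definition~\ref{sat}, the first assumption gives a spending function $s_1$ with $\|s_1\|\le p$ and $B\subseteq A^*_{s_1}$, and the second gives a spending function $s_2$ with $\|s_2\|\le q$ and $C\subseteq B^*_{s_2}$.

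The key step is to combine $s_1$ and $s_2$ into a single spending function whose cost is at most $p+q$ and which forces $A$ to influence $C$. The candidate is $s_1\oplus_\lambda s_2$. From $B\subseteq A^*_{s_1}$ and Corollary~\ref{star mono} I get $B^*_{s_2}\subseteq (A^*_{s_1})^*_{s_2}$, and hence $C\subseteq B^*_{s_2}\subseteq (A^*_{s_1})^*_{s_2}$. By Lemma~\ref{A**}, $(A^*_{s_1})^*_{s_2}\subseteq A^*_{s_1\oplus_\lambda s_2}$, so $C\subseteq A^*_{s_1\oplus_\lambda s_2}$. It remains to bound the cost: by Definition~\ref{oplus}, $(s_1\oplus_\lambda s_2)(a)$ is either $s_1(a)+s_2(a)$ or $0$, so $(s_1\oplus_\lambda s_2)(a)\le s_1(a)+s_2(a)$ for every $a$, and therefore $\|s_1\oplus_\lambda s_2\|=\sum_{a\in\mathcal{A}}(s_1\oplus_\lambda s_2)(a)\le \sum_{a\in\mathcal{A}}s_1(a)+\sum_{a\in\mathcal{A}}s_2(a)=\|s_1\|+\|s_2\|\le p+q$ by Definition~\ref{||set}. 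Applying item~1 of Definition~\ref{sat} once more, with the spending function $s_1\oplus_\lambda s_2$ and budget $p+q$, yields $N\vDash A\rhd_{p+q} C$.

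I do not anticipate a serious obstacle here, since the heavy lifting was already done in Lemma~\ref{A**} (and, behind it, Lemmas~\ref{lambda oplus} and~\ref{s1s2}); the only place to be careful is the cost estimate, which is exactly where the asymmetric definition of $\oplus_\lambda$ matters — one must note that truncating to $0$ on negative-propensity agents only decreases the norm, so the bound $\|s_1\oplus_\lambda s_2\|\le\|s_1\|+\|s_2\|$ holds regardless of the signs of $\lambda$. This is also the step that explains why $\oplus_\lambda$, rather than ordinary pointwise addition, is the right operation: pointwise addition would satisfy the cost bound trivially but fail the inclusion $(A^*_{s_1})^*_{s_2}\subseteq A^*_{s_1+s_2}$ when some propensities are negative.
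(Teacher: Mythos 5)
Your proof is correct and follows essentially the same route as the paper's: unwind Definition~\ref{sat}, combine the two witnesses via $s_1\oplus_\lambda s_2$, use Corollary~\ref{star mono} and Lemma~\ref{A**} to get $C\subseteq A^*_{s_1\oplus_\lambda s_2}$, and bound $\|s_1\oplus_\lambda s_2\|\le p+q$. The only difference is cosmetic — the paper names the witnesses in the opposite order (its $s_1$ comes from $B\rhd_q C$), and you spell out the norm estimate a bit more explicitly.
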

\begin{proof}
By Definition~\ref{sat}, assumption $N\vDash B \rhd_q C$ implies that there is a spending function $s_1$ such that $\|s_1\|\le q$ and $C\subseteq B^*_{s_1}$. 

Similarly, assumption $N\vDash A \rhd_p B$  implies that there is a spending function $s_2$ such that $\|s_2\|\le p$ and $B\subseteq A^*_{s_2}$. Hence, $B^*_{s_1}\subseteq (A^*_{s_2})^*_{s_1}$ by Corollary~\ref{star mono}. Thus, $B^*_{s_1}\subseteq A^*_{s_1\oplus_\lambda s_2}$ by Lemma~\ref{A**}. 

It follows that $C\subseteq B^*_{s_1}\subseteq A^*_{s_1\oplus_\lambda s_2}$. At the same time, $\|s_1\oplus_\lambda s_2\|\le \|s_1\|+\|s_2\|\le p+q$, by Definition~\ref{oplus}. Therefore, $N \vDash A \rhd_{p+q} C$ by Definition~\ref{sat}.
\end{proof}

This concludes the proof of the soundness of our logical system for promotional marketing.

\subsection{Completeness}

We now show the completeness of our logical system for promotional marketing. This result is formally stated as Theorem~\ref{promotional completeness} in the end of this section. As usual, at the core of the proof of the completeness is a construction of a canonical model. In our case, the role of a canonical model is played by the canonical social network.

Let $\mathcal{A}_0$ be any finite set and $X=\{A_i\rhd_{p_i} B_i\}_{i\le m}$ be any finite set of atomic formulas in language $\Phi(\mathcal{A}_0)$. We now proceed to define the canonical social network $N_X=(\mathcal{A}, w,\lambda, \theta)$. An example of the canonical network for set $X$ consisting of formula $a,c\rhd_1 d$, formula $b,c\rhd_2 a$, and formula $a,b\rhd_3 c$ is depicted in Figure~\ref{canonical network figure}.
\begin{figure}[ht]
\begin{center}
\vspace{0mm}
\scalebox{.5}{\includegraphics{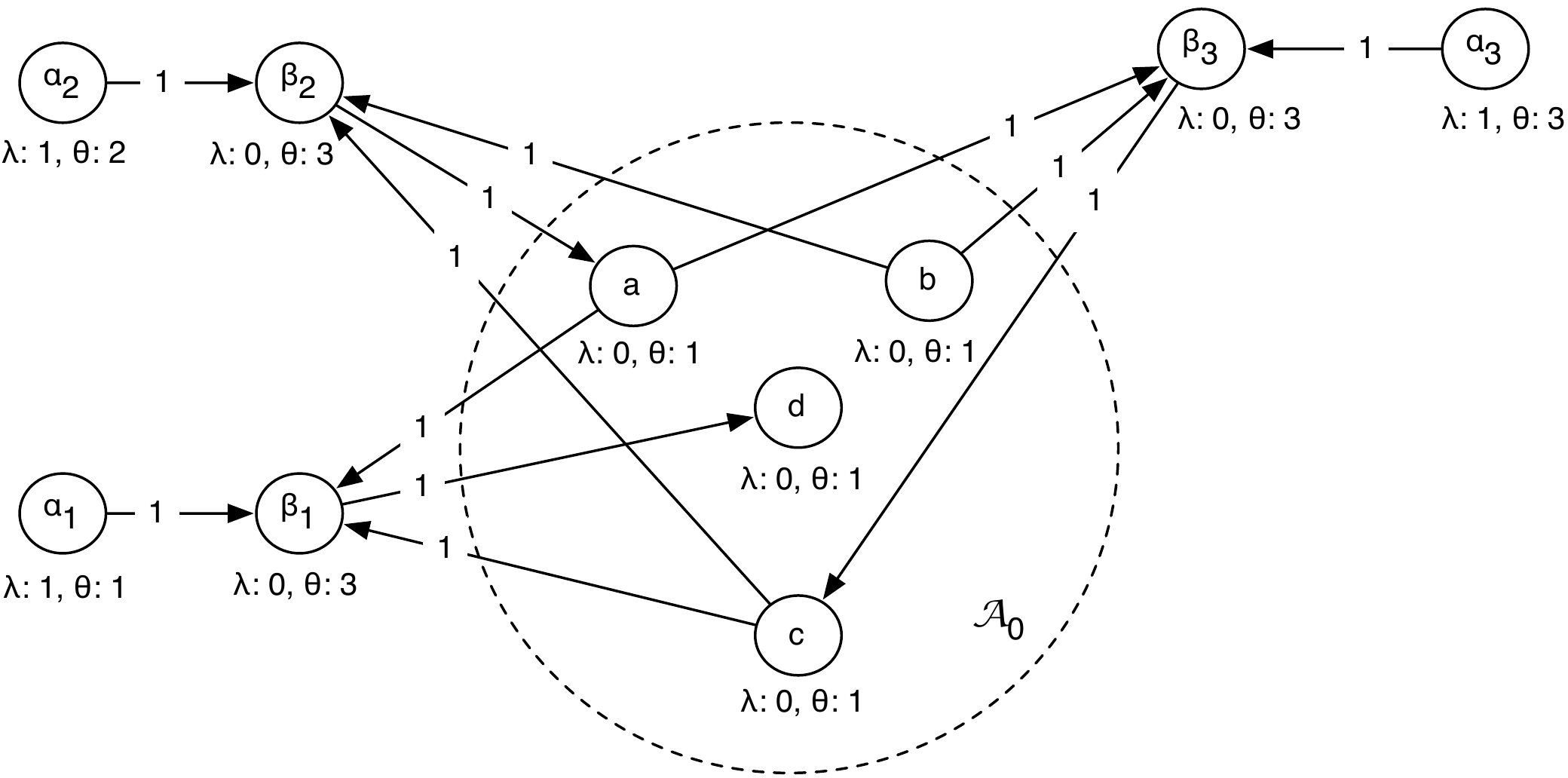}}
\vspace{0mm}
\footnotesize
\caption{The canonical social network $N_X$ for set $X$ consisting of formula $a,c\rhd_1 d$, formula $b,c\rhd_2 a$, and formula $a,b\rhd_3 c$.}\label{canonical network figure}
\vspace{0cm}
\end{center}
\vspace{0mm}
\end{figure}
We associate two new agents $\alpha_i$ and $\beta_i$ with each formula $A_i\rhd_{p_i} B_i\in X$. We assume that agents $\alpha_1,\dots,\alpha_m,\beta_1,\dots,\beta_m$ are distinct and that they do not belong to set $\mathcal{A}_0$.

\begin{definition}
$\mathcal{A}=\mathcal{A}_0 \cup \{\alpha_i\}_{i\le m} \cup \{\beta_i\}_{i\le m}$.
\end{definition}

In social network $N_X$ only agents $\{\alpha_i\}_{i\le m}$ are responsive to promotional marketing. We formally capture this through the following definition of  function $\lambda$:

\begin{definition}\label{canonical lambda} For any $a\in\mathcal{A}$,
$$
\lambda(a)=
\begin{cases}
1, & \mbox{if $a=\alpha_i$ for some $i\le m$},\\
0, & \mbox{otherwise.}
\end{cases}
$$
\end{definition}

We assume that for each $i\le m$, all agents in set $A_i$ as well as agent $\alpha_i$ put peer pressure on agent $\beta_i$ once they adopt the product. In addition, upon adopting the product, agent $\beta_i$ puts peer pressure on each agent in set $B_i$. Besides that, no agent can put peer pressure on any other agent in this network. We formally capture this in Definition~\ref{canonical w}.

\begin{definition}\label{canonical w}
$$
w(a,b)=
\begin{cases}
1, & \mbox{if $a\in A_i\cup\{\alpha_i\}$ and $b=\beta_i$ for some $i\le m$},\\
1, & \mbox{if $a=\beta_i$ and $b\in B_i$ for some $i\le m$},\\
0, & \mbox{otherwise.}
\end{cases}
$$
\end{definition}

Before continuing with the definition of the social network $N_X$, we state and prove a property of this network that follows from Definition~\ref{canonical w}. We show that in order to put peer pressure of at least $|A_i|+1$ on agent $\beta_i$, one needs to influence agent $\alpha_i$ and all of the agents in set $A_i$.

\begin{lemma}\label{canonical w lemma}
If $\sum_{a\in A^{n}_s}w(a,\beta_i)\ge |A_i|+1$, then $\alpha_i\in A^n_s$ and $A_i\subseteq A^n_s$.
\end{lemma}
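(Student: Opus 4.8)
The plan is to analyze the structure of the incoming peer pressure on agent $\beta_i$ using Definition~\ref{canonical w}. The key observation is that the only agents $a$ with $w(a,\beta_i) \neq 0$ are exactly the agents in $A_i \cup \{\alpha_i\}$, and for each such agent the influence value is exactly $1$. Moreover, since agents $\alpha_1, \dots, \alpha_m, \beta_1, \dots, \beta_m$ are distinct and do not belong to $\mathcal{A}_0$, and since the sets $A_i \subseteq \mathcal{A}_0$, the agent $\alpha_i$ is not itself a member of $A_i$; hence $A_i \cup \{\alpha_i\}$ has exactly $|A_i| + 1$ elements, each contributing $1$ to the sum $\sum_{a \in A^n_s} w(a, \beta_i)$ precisely when it lies in $A^n_s$.

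First I would write $\sum_{a \in A^n_s} w(a,\beta_i) = \sum_{a \in A^n_s \cap (A_i \cup \{\alpha_i\})} 1 = |A^n_s \cap (A_i \cup \{\alpha_i\})|$, using that all other terms vanish. Then the hypothesis $\sum_{a\in A^{n}_s}w(a,\beta_i)\ge |A_i|+1$ becomes $|A^n_s \cap (A_i \cup \{\alpha_i\})| \ge |A_i| + 1$. Since $A_i \cup \{\alpha_i\}$ itself has cardinality exactly $|A_i|+1$, and its intersection with $A^n_s$ is a subset of it of cardinality at least $|A_i|+1$, that intersection must equal all of $A_i \cup \{\alpha_i\}$. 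Therefore $A_i \cup \{\alpha_i\} \subseteq A^n_s$, which immediately gives both $\alpha_i \in A^n_s$ and $A_i \subseteq A^n_s$, as required.

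The only subtle point — the "main obstacle," though it is a mild one — is making sure the counting is clean: one must invoke that $\alpha_i \notin \mathcal{A}_0 \supseteq A_i$ so that the union $A_i \cup \{\alpha_i\}$ is genuinely disjoint and has cardinality $|A_i|+1$ rather than $|A_i|$, and one must note that $w$ takes only the values $0$ and $1$ so that the sum is exactly a cardinality rather than merely bounded below by one. Once these observations are in place, the argument is a one-line pigeonhole: a subset of an $(|A_i|+1)$-element set with at least $|A_i|+1$ elements is the whole set. I would present it in two or three sentences without further ceremony.
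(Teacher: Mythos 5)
Your proof is correct and follows essentially the same route as the paper: identify via Definition~\ref{canonical w} that $w(a,\beta_i)=1$ exactly for $a\in A_i\cup\{\alpha_i\}$ and $0$ otherwise, and conclude from the bound $|A_i|+1$ that all of $A_i\cup\{\alpha_i\}$ must lie in $A^n_s$. Your write-up is in fact slightly more explicit than the paper's, spelling out the pigeonhole counting and the fact that $\alpha_i\notin A_i$ (since $\alpha_i\notin\mathcal{A}_0\supseteq A_i$), which the paper leaves implicit.
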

\begin{proof}
By Definition~\ref{canonical w},  $w(a,\beta_i)=1$  if $a\in A_i\cup\{\alpha_i\}$ and $w(a,\beta_i)=0$ for all other $a\in\mathcal{A}$. Thus, inequality $\sum_{a\in A^{n}_s}w(a,\beta_i)\ge |A_i|+1$ implies that $A_i\cup\{\alpha_i\}\subseteq A^{n}_s$. Therefore, $\alpha_i\in A^n_s$ and $A_i\subseteq A^n_s$.
\end{proof}

We are now ready to define the threshold value function $\theta$ for the social network $N_X$. Recall that according to Definition~\ref{canonical w} and Definition~\ref{canonical lambda}, no agent can put peer pressure on agent $\alpha_i$, but agent $\alpha_i$ is responsive to promotional marketing. We set the threshold value $\theta(\alpha_i)$ to $p_i$ so that this agent can only be influenced by a marketing campaign with budget at least $p_i$. We set value $\theta(\beta_i)$ high enough to guarantee (see Lemma~\ref{canonical w lemma}) that agent $\alpha_i$ and each agent in set $A_i$ adopt the product before agent $\beta_i$ is influenced. Threshold values of all agents in set $\mathcal{A}_0$ are set to 1.

\begin{definition}\label{canonical theta}
$$
\theta(a)=
\begin{cases}
p_i, & \mbox{if $a=\alpha_i$ for some $i\le m$},\\
|A_i|+1, & \mbox{if $a=\beta_i$ for some $i\le m$},\\
1, & \mbox{otherwise.}
\end{cases}
$$
\end{definition}
This concludes the definition of the canonical social network $N_X=(\mathcal{A}, w,\lambda, \theta)$.

Recall that Figure~\ref{canonical network figure} depicts the canonical social network $N_X$ for set $X$ consisting of formula $a,c\rhd_1 d$, formula $b,c\rhd_2 a$, and formula $a,b\rhd_3 c$.
Note that formula $a,c\rhd_1 d$, formula $b,c\rhd_2 a$, and formula $a,b\rhd_3 c$ are all satisfied in the canonical network depicted in Figure~\ref{canonical network figure}. For example, for the formula $a,c\rhd_1 d$, let spending function $s$ be such that it spends 1 on agent $\alpha_i$ and nothing on all other agents. Thus, $\{a,c\}^1_s=\{a,c,\alpha_1\}$. Once $\alpha_1$ adopts the product, the total peer pressure on agent $\beta_1$ becomes 3 and it too adopts the product: 
$\{a,c\}^2_s=\{a,c,\alpha_1,\beta_1\}$. Finally, upon adopting of the product, agent $\beta_1$ alone puts enough pressure on agent $d$ to also adopt the product: $\{a,c\}^3_s=\{a,c,\alpha_1,\beta_1,d\}$. Thus, formula $a,c\rhd_1 d$ is satisfied in this network.

The next lemma generalizes the observation made in the previous paragraph to a claim that all formulas from set $X=\{A_i\rhd_{p_i}B_i\}_{i\le m}$ are satisfied in the canonical network $N_X$.

\begin{lemma}\label{first base case}
$N_X\vDash A_i\rhd_{p_i} B_i$ for each $i\le m$.
\end{lemma}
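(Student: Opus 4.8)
The plan is to generalize the worked example in the paragraph preceding the lemma: for a fixed $i \le m$, exhibit an explicit spending function $s$ with $\|s\| \le p_i$ that drives the diffusion from $A_i$ all the way to $B_i$. First I would define $s$ to spend exactly $p_i$ on agent $\alpha_i$ and $0$ on every other agent, so that $\|s\| = p_i \le p_i$ by Definition~\ref{||set}. The goal is then to check, using Definition~\ref{Ak}, that $B_i \subseteq (A_i)^*_s$, which by Definition~\ref{sat} gives $N_X \vDash A_i \rhd_{p_i} B_i$.

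The key steps are a short finite computation of the first few sets in the diffusion chain starting from $A_i$. Step one: since $\lambda(\alpha_i) = 1$ (Definition~\ref{canonical lambda}), $s(\alpha_i) = p_i$, $\theta(\alpha_i) = p_i$ (Definition~\ref{canonical theta}), and no agent exerts peer pressure on $\alpha_i$ (Definition~\ref{canonical w}), the marketing pressure alone satisfies $\lambda(\alpha_i)\cdot s(\alpha_i) = p_i \ge \theta(\alpha_i)$, so $\alpha_i \in (A_i)^1_s$; hence $A_i \cup \{\alpha_i\} \subseteq (A_i)^1_s$. Step two: every agent in $A_i \cup \{\alpha_i\}$ has an edge of weight $1$ into $\beta_i$, so the peer pressure on $\beta_i$ from $(A_i)^1_s$ is at least $|A_i \cup \{\alpha_i\}| = |A_i| + 1 = \theta(\beta_i)$ (using that $\alpha_i \notin A_i$, since $\alpha_i \notin \mathcal{A}_0 \supseteq A_i$); therefore $\beta_i \in (A_i)^2_s$. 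Step three: $\beta_i$ has an edge of weight $1$ into every agent $b \in B_i$, and $\theta(b) \le 1$ for every such $b$ — indeed $\theta(b) = 1$ if $b \in \mathcal{A}_0$, and we must also handle the cases $b = \alpha_j$ or $b = \beta_j$, which require a moment's thought — so each $b \in B_i$ gets peer pressure at least $1 \ge \theta(b)$ and thus $b \in (A_i)^3_s$. Finally $B_i \subseteq (A_i)^3_s \subseteq (A_i)^*_s$ by Definition~\ref{A*}.

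The main obstacle I anticipate is the third step, specifically justifying $\theta(b) \le 1$ for all $b \in B_i$. If $b \in \mathcal{A}_0$ this is immediate, but if $B_i$ is allowed to contain some $\alpha_j$ or $\beta_j$ then $\theta(b)$ could be $p_j$ or $|A_j|+1$, which need not be $\le 1$. I expect the resolution is that $B_i \subseteq \mathcal{A}_0$ by the setup — the formulas in $X$ are taken over the language $\Phi(\mathcal{A}_0)$, so $A_i, B_i \subseteq \mathcal{A}_0$ — which I would state explicitly as the reason. A secondary subtlety is that I am relying on $\alpha_i \notin A_i$ to get the count $|A_i|+1$ rather than $|A_i|$; this again follows from $A_i \subseteq \mathcal{A}_0$ and $\alpha_i \notin \mathcal{A}_0$. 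With those two observations in place the argument is a routine three-step unwinding of Definition~\ref{Ak}, and I would present it at roughly the level of detail of the worked example already in the text, perhaps citing Lemma~\ref{canonical w lemma} to streamline the reasoning about the pressure on $\beta_i$.
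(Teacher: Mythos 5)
Your proposal is correct and follows essentially the same route as the paper's own proof: spend $p_i$ on $\alpha_i$, then trace the three-step diffusion $\alpha_i\in (A_i)^1_s$, $\beta_i\in (A_i)^2_s$, $B_i\subseteq (A_i)^3_s$, using $\|s\|=p_i$ and Definition~\ref{sat}. The two subtleties you flag ($A_i,B_i\subseteq\mathcal{A}_0$ and $\alpha_i\notin A_i$) are indeed the facts the paper uses implicitly, so nothing is missing.
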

\begin{proof}
Consider any $i\le m$. Let $s$ be a spending function such that
\begin{equation}\label{canonical s}
s(a)=
\begin{cases}
p_i, & \mbox{if $a=\alpha_i$},\\
0, & \mbox{otherwise.}
\end{cases}
\end{equation}
Then, by Definition~\ref{canonical lambda}, Definition~\ref{canonical w}, and Definition~\ref{canonical theta},
$$
\lambda(\alpha_i)\cdot s(\alpha_i) + \sum_{a\in A_i} w(a,\alpha_i) 
= 1 \cdot p_i + \sum_{a\in A_i} 0= p_i = \theta(\alpha_i).
$$
Thus, $\alpha_i\in (A_i)^1_s$ by Definition~\ref{Ak}. Hence, by Definition~\ref{canonical lambda}, Definition~\ref{canonical w}, and Definition~\ref{canonical theta},
\begin{eqnarray*}
\lambda(\beta_i)\cdot s(\beta_i) + \sum_{a\in (A_i)^1_s} w(a,\beta_i) & \ge & 
\lambda(\beta_i)\cdot s(\beta_i) + w(\alpha_i,\beta_i) + \sum_{a\in A_i} w(a,\beta_i)\\
& \ge & 0 \cdot 0 + 1 + |A_i| = 1 + |A_i| = \theta(\beta_i).
\end{eqnarray*}
Thus, $\beta_i\in (A_i)^2_s$ by Definition~\ref{Ak}. Finally, for each $b\in B_i$, by Definition~\ref{canonical lambda}, Definition~\ref{canonical w}, and Definition~\ref{canonical theta},
\begin{eqnarray*}
\lambda(b)\cdot s(b) + \sum_{a\in (A_i)^2_s} w(a,b) & \ge & 
0\cdot 0 + w(\beta_i,b) = w(\beta_i,b) = 1= \theta(b).
\end{eqnarray*}
Hence, $b\in (A_i)^3_s$  by Definition~\ref{Ak}. Thus, $b\in (A_i)^*_s$  by Definition~\ref{A*} for each $b\in B_i$. Then, $B_i\subseteq A^*_s$. Note that $\|s\|=p_i$ due to definition (\ref{canonical s}). Therefore, $N_X\vDash A_i\rhd_{p_i}B_i$ by Definition~\ref{sat}.
\end{proof}

Our next important result is the converse of Lemma~\ref{first base case} stated later as Lemma~\ref{second base case}. In preparation for its, we make several technical observations about the social network $N_X$. First, we prove that, for each $i\le m$, agent $\beta_i$ can not be influenced without agent $\alpha_i$ being influenced as well.

\begin{lemma}\label{if beta then alpha}
If $\beta_i\in A^n_s$, then $\alpha_i\in A^n_s$, for each $A\subseteq \mathcal{A}_0$, each $i\le m$, and each $n\ge 0$.
\end{lemma}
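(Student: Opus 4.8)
The plan is to prove the statement by induction on $n$, using the structure of the canonical network encoded in Definitions~\ref{canonical lambda}, \ref{canonical w}, and~\ref{canonical theta}, together with Lemma~\ref{canonical w lemma}.

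For the base case $n=0$, recall that $A^0_s=A$ and that $A\subseteq\mathcal{A}_0$ by hypothesis, while the agents $\alpha_1,\dots,\alpha_m,\beta_1,\dots,\beta_m$ were assumed to be distinct from every element of $\mathcal{A}_0$. Hence $\beta_i\notin A^0_s$, so the implication holds vacuously.

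For the inductive step, assume the claim for $n$ and suppose $\beta_i\in A^{n+1}_s$. By Definition~\ref{Ak}, either $\beta_i\in A^n_s$, in which case the induction hypothesis gives $\alpha_i\in A^n_s\subseteq A^{n+1}_s$, or else $\lambda(\beta_i)\cdot s(\beta_i)+\sum_{a\in A^n_s}w(a,\beta_i)\ge\theta(\beta_i)$. In the latter case, Definition~\ref{canonical lambda} gives $\lambda(\beta_i)=0$ and Definition~\ref{canonical theta} gives $\theta(\beta_i)=|A_i|+1$, so the inequality reduces to $\sum_{a\in A^n_s}w(a,\beta_i)\ge|A_i|+1$. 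Lemma~\ref{canonical w lemma} then yields $\alpha_i\in A^n_s$, and hence $\alpha_i\in A^{n+1}_s$ since $A^n_s\subseteq A^{n+1}_s$.

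There is no serious obstacle here; the only point that requires care is noticing that the propensity $\lambda(\beta_i)$ is $0$, so that marketing spending on $\beta_i$ cannot by itself push the weighted peer pressure past the threshold $|A_i|+1$, which is exactly what allows Lemma~\ref{canonical w lemma} to be invoked.
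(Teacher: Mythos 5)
Your proof is correct and is essentially the paper's argument: the paper phrases it by taking the least $k$ with $\beta_i\in A^k_s$ instead of inducting on $n$, but both reduce to the same computation, namely $\lambda(\beta_i)=0$ and $\theta(\beta_i)=|A_i|+1$ turning the threshold condition into $\sum_{a}w(a,\beta_i)\ge|A_i|+1$, so that Lemma~\ref{canonical w lemma} yields $\alpha_i$ in the earlier set and monotonicity of the chain finishes. No changes needed.
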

\begin{proof}
Suppose $\beta_i\in A^n_s$. Let $k$ be the smallest integer  such that $0\le k\le n$ and  $\beta_i\in A^k_s$. 

If $k=0$, then $\beta_i\in A^0_s=A$ by Definition~\ref{Ak}. Thus, $\beta_i\in \mathcal{A}_0$ due to the assumption $A\subseteq \mathcal{A}_0$, which contradicts the choice of $\beta_1,\dots,\beta_m$. Therefore, the lemma is vacuously true.

Suppose that $k>0$. Since $k>0$ is the smallest integer such that $\beta_i\in A^k_s$, it must be that $\beta_i\in A^k_s\setminus A^{k-1}_s$. Thus, by Definition~\ref{Ak},
$$
\lambda(\beta_i)\cdot s(\beta_i) + \sum_{a\in A^{k-1}_s}w(a,\beta_i) \ge \theta(\beta_i).
$$
By Definition~\ref{canonical lambda}, $\lambda(\beta_i)=0$. By Definition~\ref{canonical theta}, $\theta(\beta_i)=|A_i|+1$. Thus,
$$
\sum_{a\in A^{k-1}_s}w(a,\beta_i) \ge |A_i|+1.
$$
Thus, $\alpha_i\in A^{k-1}_s$ by Lemma~\ref{canonical w lemma}. Hence, $\alpha_i\in A^{k-1}_s$. Therefore, $\alpha_i\in A^{n}_s$ by Definition~\ref{Ak} and since $k-1<k\le n$. 
\end{proof}

The next lemma shows that the only way to influence agent $\alpha_i$ is to spend at least $p_i$ on promotional marketing to this agent.

\begin{lemma}\label{if alpha then s}
If $\alpha_i\in A^n_s$, then $s(\alpha_i)\ge p_i$, for each $A\subseteq\mathcal{A}_0$.
\end{lemma}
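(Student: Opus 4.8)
The plan is to induct on $n$. For the base case $n=0$, the set $A^0_s = A \subseteq \mathcal{A}_0$ by hypothesis, and since the agents $\alpha_1,\dots,\alpha_m$ were chosen to be distinct from all elements of $\mathcal{A}_0$, we have $\alpha_i \notin A^0_s$, so the implication is vacuously true.

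For the inductive step, suppose the claim holds for $n$ and assume $\alpha_i \in A^{n+1}_s$. By Definition~\ref{Ak}, either $\alpha_i \in A^n_s$ — in which case the induction hypothesis immediately gives $s(\alpha_i) \ge p_i$ — or else $\lambda(\alpha_i)\cdot s(\alpha_i) + \sum_{a\in A^n_s} w(a,\alpha_i) \ge \theta(\alpha_i)$. In the latter case I would unpack the canonical definitions: by Definition~\ref{canonical w}, no agent exerts peer pressure on $\alpha_i$ (the agent $\alpha_i$ never appears as the target $b$ in any of the nonzero cases of $w$), so $\sum_{a\in A^n_s} w(a,\alpha_i) = 0$; by Definition~\ref{canonical lambda}, $\lambda(\alpha_i) = 1$; and by Definition~\ref{canonical theta}, $\theta(\alpha_i) = p_i$. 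Substituting, the inequality becomes $s(\alpha_i) \ge p_i$, as desired.

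There is no real obstacle here; the only subtlety worth stating carefully is that the hypothesis $A \subseteq \mathcal{A}_0$ is needed only for the base case, and that the key structural fact is that $\alpha_i$ is a ``source'' with respect to $w$ — it receives no peer pressure — so the only route to adopting the product is direct marketing spending, whose effect is governed by $\lambda(\alpha_i) = 1$ and threshold $p_i$. I would also note in passing that the statement implicitly quantifies over all $n \ge 0$ and all $i \le m$, and that $s$ ranges over arbitrary spending functions; none of these quantifiers cause difficulty.

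For completeness of the write-up I would phrase the induction cleanly as follows: if $\alpha_i \notin A^n_s$ for all $n$, there is nothing to prove; otherwise let $k$ be least with $\alpha_i \in A^k_s$, note $k > 0$ since $A^0_s \subseteq \mathcal{A}_0$, apply the membership condition at step $k$ relative to $A^{k-1}_s$, and conclude $s(\alpha_i) \ge p_i$ directly from the three canonical definitions. This ``least witness'' formulation avoids even needing the induction hypothesis and is the shortest route, so that is what I would use in the final proof.
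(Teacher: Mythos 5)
Your proof is correct and follows essentially the same route as the paper: the paper likewise notes $\alpha_i\notin A^0_s$ (since $A\subseteq\mathcal{A}_0$), extracts a step $k$ at which the membership inequality holds, and concludes from $w(a,\alpha_i)=0$ for all $a$, $\lambda(\alpha_i)=1$, and $\theta(\alpha_i)=p_i$ that $s(\alpha_i)\ge p_i$. The ``least witness'' formulation you settle on in your last paragraph is exactly the paper's argument, so no changes are needed.
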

\begin{proof}
Suppose that $\alpha_i\in A^n_s$. Note that $\alpha_i\notin \mathcal{A}_0\supseteq A =A^0_s$ by the choice of $\alpha_1,\dots,\alpha_m$. Thus, by Definition~\ref{Ak}, there is $k< n$ such that
$$
\lambda(\alpha_i)\cdot\ s(\alpha_i) + \sum_{a\in A^{k}_s} w(a,\alpha_i) \ge \theta(\alpha_i).
$$
By Definition~\ref{canonical w}, $w(a,\alpha_i)=0$ for each $a\in\mathcal{A}$. Hence,
$
\lambda(\alpha_i)\cdot\ s(\alpha_i) \ge \theta(\alpha_i).
$
By Definition~\ref{canonical lambda}, $\lambda(\alpha_i)=1$. By Definition~\ref{canonical theta}, $\theta(\alpha_i)=p_i$. Therefore, $s(\alpha_i)\ge p_i$.
\end{proof}

\begin{lemma}\label{beta step lemma}
For each  $n\ge 0$ and each subset $A$ of $\mathcal{A}_0$,
if $\beta_i\in A^{n+1}_s\setminus A^{n}_s$, then
$X\vdash (A^{n}_s\cap \mathcal{A}_0)\rhd_{p_i} B_i$.
\end{lemma}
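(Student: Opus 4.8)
The plan is to unpack the hypothesis $\beta_i\in A^{n+1}_s\setminus A^n_s$ semantically, deduce that $A_i\subseteq A^n_s$, and then finish with a short syntactic derivation starting from the fact that $A_i\rhd_{p_i}B_i$ is one of the axioms in $X$.

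First I would apply Definition~\ref{Ak}: since $\beta_i\in A^{n+1}_s$ but $\beta_i\notin A^n_s$, we must have $\lambda(\beta_i)\cdot s(\beta_i)+\sum_{a\in A^n_s}w(a,\beta_i)\ge\theta(\beta_i)$. By Definition~\ref{canonical lambda} we have $\lambda(\beta_i)=0$, and by Definition~\ref{canonical theta} we have $\theta(\beta_i)=|A_i|+1$, so this simplifies to $\sum_{a\in A^n_s}w(a,\beta_i)\ge|A_i|+1$. Lemma~\ref{canonical w lemma} then yields $A_i\subseteq A^n_s$ (it also gives $\alpha_i\in A^n_s$, which is not needed here). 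Since $A_i$ is a subset of $\mathcal{A}_0$ because $A_i\rhd_{p_i}B_i\in\Phi(\mathcal{A}_0)$, we conclude $A_i\subseteq A^n_s\cap\mathcal{A}_0$.

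Now I would set $C=(A^n_s\cap\mathcal{A}_0)\setminus A_i$, so that $A_i\cup C=A^n_s\cap\mathcal{A}_0$. Since $A_i\rhd_{p_i}B_i\in X$, we have $X\vdash A_i\rhd_{p_i}B_i$; by the Augmentation axiom, $X\vdash A_i,C\rhd_{p_i}B_i,C$, i.e., $X\vdash(A^n_s\cap\mathcal{A}_0)\rhd_{p_i}B_i\cup C$. By the Reflexivity axiom, $X\vdash (B_i\cup C)\rhd_0 B_i$, and then by the Transitivity axiom together with Modus Ponens, $X\vdash(A^n_s\cap\mathcal{A}_0)\rhd_{p_i+0}B_i$, which is the desired conclusion since $p_i+0=p_i$.

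The argument is essentially routine; the only point requiring care is the bookkeeping with the auxiliary agents, namely recognizing that exactly $A_i$ (not $\alpha_i$, and nothing outside $\mathcal{A}_0$) must be extracted from $A^n_s$, and that the target $B_i$ has to be peeled off from $B_i\cup C$ via Reflexivity and Transitivity rather than obtained directly from Augmentation. The semantic crux that makes the syntactic step possible is that $\beta_i$ newly adopting the product forces all of $A_i$ to have already adopted it, which is precisely what Lemma~\ref{canonical w lemma} delivers.
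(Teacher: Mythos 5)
Your proof is correct and follows essentially the same approach as the paper: the crux in both is the semantic step that $\beta_i$ newly adopting forces $\sum_{a\in A^n_s}w(a,\beta_i)\ge|A_i|+1$, whence $A_i\subseteq A^n_s\cap\mathcal{A}_0$ by Lemma~\ref{canonical w lemma}. The only difference is the routine syntactic finish, where the paper is slightly shorter: it derives $\vdash (A^n_s\cap\mathcal{A}_0)\rhd_0 A_i$ directly by Reflexivity and applies Transitivity with $A_i\rhd_{p_i}B_i\in X$, avoiding your detour through Augmentation with $C=(A^n_s\cap\mathcal{A}_0)\setminus A_i$ and the subsequent peeling-off of $B_i$.
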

\begin{proof}
By Definition~\ref{Ak}, assumption $\beta_i\in A^{n+1}_s\setminus A^{n}_s$ implies that
$$
\lambda(\beta_i)\cdot s(\beta_i) + \sum_{a\in A^{n}_s}w(a,\beta_i)\ge \theta(\beta_i).
$$
By Definition~\ref{canonical lambda}, $\lambda(\beta_i)=0$. Thus,
$
\sum_{a\in A^{n}_s}w(a,\beta_i)\ge \theta(\beta_i).
$
Hence, by Definition~\ref{canonical theta},
$
\sum_{a\in A^{n}_s}w(a,\beta_i)\ge |A_i|+1
$.
Thus, $A_i\subseteq A^n_s$ by Lemma~\ref{canonical w lemma}.
Recall that $A_i\subseteq \mathcal{A}_0$ by the choice of set $X$. Hence, $A_i\subseteq A^n_s\cap \mathcal{A}_0$. Then, $\vdash (A^n_s\cap \mathcal{A}_0)\rhd_0 A_i$ by Reflexivity axiom. Recall that $A_i\rhd_{p_i} B_i\in X$. Thus,  $X \vdash (A^n_s\cap \mathcal{A}_0)\rhd_{p_i} B_i$ by Transitivity axiom.
\end{proof}

\begin{lemma}\label{many beta steps lemma}
$X\vdash (A^{n}_s\cap \mathcal{A}_0)\rhd_{q}\bigcup_{\beta_i\in A^{n+1}_s\setminus A^{n}_s} B_i$, where $q=\sum_{\beta_i\in A^{n+1}_s\setminus A^{n}_s}p_i$.
\end{lemma}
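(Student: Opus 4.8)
The plan is to obtain this lemma from Lemma~\ref{beta step lemma} by iterating over the finitely many indices $i$ for which $\beta_i\in A^{n+1}_s\setminus A^{n}_s$, and then bundling the individual conclusions together using Lemma~\ref{big union lemma}. First I would fix $n$ and $A\subseteq\mathcal{A}_0$, and let $I=\{i\le m\mid \beta_i\in A^{n+1}_s\setminus A^{n}_s\}$; this is a finite index set since $m$ is finite. For each $i\in I$, Lemma~\ref{beta step lemma} gives $X\vdash (A^{n}_s\cap\mathcal{A}_0)\rhd_{p_i} B_i$. Now apply Lemma~\ref{big union lemma} with the set $A$ of that lemma taken to be $A^{n}_s\cap\mathcal{A}_0$, with the $B_i$ ranging over $\{B_i\}_{i\in I}$, and with the corresponding budgets $\{p_i\}_{i\in I}$. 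The conclusion is $X\vdash (A^{n}_s\cap\mathcal{A}_0)\rhd_{q}\bigcup_{i\in I} B_i$ where $q=\sum_{i\in I}p_i$, which is exactly the statement, rewriting $I$ as $\{i\mid\beta_i\in A^{n+1}_s\setminus A^{n}_s\}$.

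There is essentially no obstacle here: the whole content is the routine act of matching the hypotheses of Lemma~\ref{big union lemma} (which is stated for an arbitrary finite family indexed $1,\dots,m$) to the subfamily indexed by $I$. One minor point to be careful about is the degenerate case $I=\varnothing$: then $q=0$ and $\bigcup_{i\in I}B_i=\varnothing$, and Lemma~\ref{big union lemma} with $m=0$ still yields $X\vdash (A^{n}_s\cap\mathcal{A}_0)\rhd_0\varnothing$ via Reflexivity, so the statement holds vacuously. I would mention this case explicitly only if it seems worth it; otherwise a single sentence citing Lemma~\ref{beta step lemma} for each relevant $i$ followed by an application of Lemma~\ref{big union lemma} suffices.

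In short: the proof is one line of setup (name the index set), one invocation of Lemma~\ref{beta step lemma} per index, and one invocation of Lemma~\ref{big union lemma} to assemble them. No new combinatorial or arithmetic work is required beyond observing that the sum of the budgets is preserved, which is already built into Lemma~\ref{big union lemma}.
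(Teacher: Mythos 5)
Your proposal is correct and follows exactly the paper's argument: the paper likewise proves this lemma by invoking Lemma~\ref{beta step lemma} for each relevant index and then combining the results with Lemma~\ref{big union lemma}. Your extra care with the index set $I$ and the empty case is fine but not needed beyond what the paper states.
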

\begin{proof}
The statement of the lemma follows from Lemma~\ref{beta step lemma} and Lemma~\ref{big union lemma}.
\end{proof}

\begin{lemma}\label{n>0 lemma}
$X\vdash (A^{n}_s\cap \mathcal{A}_0)\rhd_{q}((A^{n+2}_s\setminus A^{n+1}_s)\cap \mathcal{A}_0)$,
where $q=\sum_{\beta_i\in A^{n+1}_s\setminus A^{n}_s}p_i$, for each subset $A$ of $\mathcal{A}_0$, each spending function $s$, and each $n\ge 0$.
\end{lemma}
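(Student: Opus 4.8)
The plan is to reduce this to Lemma~\ref{many beta steps lemma} via a set-theoretic inclusion together with the Reflexivity and Transitivity axioms. Write $U=\bigcup_{\beta_i\in A^{n+1}_s\setminus A^{n}_s} B_i$, so that Lemma~\ref{many beta steps lemma} already gives $X\vdash (A^{n}_s\cap\mathcal{A}_0)\rhd_{q} U$. It therefore suffices to prove the inclusion
$$
(A^{n+2}_s\setminus A^{n+1}_s)\cap\mathcal{A}_0\;\subseteq\; U,
$$
since then $\vdash U\rhd_0 ((A^{n+2}_s\setminus A^{n+1}_s)\cap\mathcal{A}_0)$ by Reflexivity, and Transitivity yields $X\vdash (A^{n}_s\cap\mathcal{A}_0)\rhd_{q+0}((A^{n+2}_s\setminus A^{n+1}_s)\cap\mathcal{A}_0)$, which is the claim because $q+0=q$.

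The core of the argument, and the step I expect to require the most care, is establishing this inclusion by analysing the diffusion in $N_X$. Take $b\in(A^{n+2}_s\setminus A^{n+1}_s)\cap\mathcal{A}_0$. Since $b\in\mathcal{A}_0$, Definition~\ref{canonical lambda} gives $\lambda(b)=0$ and Definition~\ref{canonical theta} gives $\theta(b)=1$; since $b\notin A^{n+1}_s$ but $b\in A^{n+2}_s$, Definition~\ref{Ak} forces $\sum_{a\in A^{n+1}_s}w(a,b)\ge 1$. By Definition~\ref{canonical w}, the only agents that exert nonzero influence on an agent of $\mathcal{A}_0$ are the agents $\beta_i$ with $b\in B_i$; hence there is some $i\le m$ with $b\in B_i$ and $\beta_i\in A^{n+1}_s$.

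It remains to check that this $\beta_i$ actually lies in $A^{n+1}_s\setminus A^{n}_s$, i.e.\ that it is \emph{not} already in $A^{n}_s$. Suppose toward a contradiction that $\beta_i\in A^{n}_s$. Then, because $w(\beta_i,b)=1$ (as $b\in B_i$) and $\theta(b)=1$ while $\lambda(b)=0$, Definition~\ref{Ak} would put $b\in A^{n+1}_s$, contradicting $b\notin A^{n+1}_s$. Hence $\beta_i\in A^{n+1}_s\setminus A^{n}_s$, so $b\in B_i\subseteq U$, which proves the inclusion and completes the proof. (Throughout, the hypothesis $A\subseteq\mathcal{A}_0$ is what lets Lemma~\ref{many beta steps lemma} apply and is also implicitly used in bounding which agents can influence $b$; no additional obstacle arises there.)
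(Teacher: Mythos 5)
Your proof is correct and takes essentially the same route as the paper: both reduce the claim to Lemma~\ref{many beta steps lemma} by relating $(A^{n+2}_s\setminus A^{n+1}_s)\cap\mathcal{A}_0$ to $\bigcup_{\beta_i\in A^{n+1}_s\setminus A^{n}_s}B_i$ using the canonical values of $\lambda$, $\theta$, and $w$. The only difference is presentational: the paper rewrites the set comprehensions of Definition~\ref{Ak} to state this as a set identity and substitutes directly, while you prove the needed inclusion element-wise and then pass from the union to its subset via Reflexivity and Transitivity, which is, if anything, slightly more careful.
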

\begin{proof}
By Definition~\ref{Ak},
\begin{eqnarray*}
(A^{n+2}_s\setminus A^{n+1}_s)\cap \mathcal{A}_0  &=& \left\{b\in \mathcal{A}_0\;\middle|\; \lambda(b)\cdot s(b) + \sum_{a\in A^{n+1}_s}w(a,b)\ge \theta(b)\right\}\\
&\setminus&
\left\{b\in \mathcal{A}_0\;\middle|\; \lambda(b)\cdot s(b) + \sum_{a\in A^{n}_s}w(a,b)\ge \theta(b)\right\}.
\end{eqnarray*}
By Definition~\ref{canonical lambda}, $\lambda(b)=0$ for all $b\in\mathcal{A}_0$. By Definition~\ref{canonical theta}, $\theta(b)=1$ for all $b\in\mathcal{A}_0$.  Thus,
\begin{eqnarray*}
(A^{n+2}_s\setminus A^{n+1}_s)\cap \mathcal{A}_0  = \left\{b\in \mathcal{A}_0\;\middle|\;  \sum_{a\in A^{n+1}_s}w(a,b)\ge 1\right\} \setminus \left\{b\in \mathcal{A}_0\;\middle|\;  \sum_{a\in A^{n}_s}w(a,b)\ge 1\right\}.
\end{eqnarray*}
Since $\alpha_1,\dots,\alpha_m,\beta_1,\dots\beta_m\notin \mathcal{A}_0$, by Definition~\ref{canonical w}, for each $b\in\mathcal{A}_0$, we have  $w(a,b)\neq 0$ only if $a=\beta_i$ and $b\in B_i$ for some $i\le m$, in which case $w(a,b) = 1$. Hence,
\begin{eqnarray*}
(A^{n+2}_s\setminus A^{n+1}_s)\cap \mathcal{A}_0  = \bigcup_{\beta_i\in A^{n+1}_s\setminus A^{n}_s} B_i.
\end{eqnarray*}
Thus, to finish the proof of the lemma, it is sufficient to show that
$$X\vdash (A^{n}_s\cap \mathcal{A}_0)\rhd_{q}\bigcup_{\beta_i\in A^{n+1}_s\setminus A^{n}_s} B_i,$$
where $q=\sum_{\beta_i\in A^{n+1}_s\setminus A^{n}_s}p_i$, which follows from Lemma~\ref{many beta steps lemma}.
\end{proof}

\begin{lemma}\label{n=0 lemma}
$X\vdash A\rhd_{q}(A^{1}_s\cap \mathcal{A}_0)$, for each subset $A$ of $\mathcal{A}_0$, each spending function $s$, and each non-negative real number $q$.
\end{lemma}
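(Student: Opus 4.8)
The plan is to show that $A^{1}_s\cap\mathcal{A}_0$ is exactly $A$, and then to invoke the Reflexivity axiom. Since the conclusion $A\rhd_q A$ is an instance of Reflexivity (with $A\subseteq A$), this reduces the lemma to a purely combinatorial computation inside the canonical network $N_X$.

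First I would unfold $A^{1}_s$ using Definition~\ref{Ak}: $A^{1}_s = A\cup\{b\in\mathcal{A}\mid \lambda(b)\cdot s(b)+\sum_{a\in A}w(a,b)\ge\theta(b)\}$. Intersecting with $\mathcal{A}_0$ and using $A\subseteq\mathcal{A}_0$ gives $A^{1}_s\cap\mathcal{A}_0 = A\cup\{b\in\mathcal{A}_0\mid \lambda(b)\cdot s(b)+\sum_{a\in A}w(a,b)\ge\theta(b)\}$. Next I would simplify the defining inequality for $b\in\mathcal{A}_0$: by Definition~\ref{canonical lambda}, $\lambda(b)=0$, and by Definition~\ref{canonical theta}, $\theta(b)=1$, so the condition becomes $\sum_{a\in A}w(a,b)\ge 1$.

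The key step is to observe this set is empty. Since $A\subseteq\mathcal{A}_0$, every $a$ in the sum lies in $\mathcal{A}_0$, and so does $b$. But by Definition~\ref{canonical w}, $w(a,b)$ is nonzero only when $a\in A_i\cup\{\alpha_i\}$ and $b=\beta_i$, or when $a=\beta_i$ and $b\in B_i$; in both cases one of $a,b$ must be some $\alpha_i$ or $\beta_i$, hence outside $\mathcal{A}_0$. Therefore $w(a,b)=0$ for all $a,b\in\mathcal{A}_0$, so $\sum_{a\in A}w(a,b)=0<1$, and the set $\{b\in\mathcal{A}_0\mid \sum_{a\in A}w(a,b)\ge 1\}$ is empty. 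Consequently $A^{1}_s\cap\mathcal{A}_0 = A$.

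Finally, I would conclude: by the Reflexivity axiom $\vdash A\rhd_q A$ (taking $B=A$), hence $X\vdash A\rhd_q A$, i.e.\ $X\vdash A\rhd_q (A^{1}_s\cap\mathcal{A}_0)$, for every non-negative real $q$. There is no real obstacle here; the only point requiring care is the case analysis on $w$ showing that no edge of the canonical network runs between two agents of $\mathcal{A}_0$, which is immediate from Definition~\ref{canonical w} together with the assumption that the $\alpha_i$ and $\beta_i$ are fresh agents not in $\mathcal{A}_0$.
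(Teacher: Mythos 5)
Your proposal is correct and follows essentially the same route as the paper's own proof: unfold $A^{1}_s$ via Definition~\ref{Ak}, use Definitions~\ref{canonical lambda}, \ref{canonical theta}, and \ref{canonical w} to show no agent of $\mathcal{A}_0$ experiences any pressure from $A\subseteq\mathcal{A}_0$, conclude $A^{1}_s\cap\mathcal{A}_0=A$, and apply Reflexivity. The only cosmetic difference is that the paper writes the result as $A\cap\mathcal{A}_0$ rather than simplifying to $A$, which changes nothing.
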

\begin{proof}
By Definition~\ref{Ak},
\begin{eqnarray*}
A^{1}_s\cap \mathcal{A}_0  &=& (A \cap \mathcal{A}_0) \cup \left\{b\in \mathcal{A}_0 \;\middle|\; \lambda(b)\cdot s(b) + \sum_{a\in A}w(a,b)\ge \theta(b)\right\}.
\end{eqnarray*}
By Definition~\ref{canonical lambda}, $\lambda(b)=0$ for all $b\in\mathcal{A}_0$. By Definition~\ref{canonical theta}, $\theta(b)=1$ for all $b\in\mathcal{A}_0$.  Thus,
\begin{eqnarray*}
A^{1}_s\cap \mathcal{A}_0  &=& (A \cap \mathcal{A}_0) \cup \left\{b\in \mathcal{A}_0 \;\middle|\; \sum_{a\in A}w(a,b)\ge 1\right\}.
\end{eqnarray*}
By Definition~\ref{canonical w}, $w(a,b)=0$  for all $a\in A\subseteq \mathcal{A}_0$ and all $b\in \mathcal{A}_0$. Thus, the set 
$$
\left\{b\in \mathcal{A}_0 \;\middle|\; \sum_{a\in A}w(a,b)\ge 1\right\}
$$
is empty. Hence,
$A^{1}_s\cap \mathcal{A}_0 = A \cap \mathcal{A}_0$. Therefore, $X\vdash A\rhd_{q}(A^{1}_s\cap \mathcal{A}_0)$ by Reflexivity axiom.
\end{proof}

\begin{lemma}\label{An+1 lemma}
$X\vdash A\rhd_{q} (A^{n+1}_s\cap \mathcal{A}_0)$, where $q=\sum_{\beta_i\in A^n_s}p_i$, for each subset $A$ of $\mathcal{A}_0$, each spending function $s$, and each $n\ge 0$.
\end{lemma}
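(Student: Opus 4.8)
The plan is to prove the statement by induction on $n$, using the preceding lemmas to handle the inductive step. The base case $n=0$ is essentially Lemma~\ref{n=0 lemma}: when $n=0$ the sum $q=\sum_{\beta_i\in A^0_s}p_i$ is over those $\beta_i$ lying in $A^0_s=A\subseteq\mathcal{A}_0$, but no $\beta_i$ belongs to $\mathcal{A}_0$ by the choice of the new agents, so this sum is empty and $q=0$; then $X\vdash A\rhd_0(A^1_s\cap\mathcal{A}_0)$ is exactly Lemma~\ref{n=0 lemma} (applied with $q=0$).

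For the inductive step, I would assume $X\vdash A\rhd_{q_n}(A^{n+1}_s\cap\mathcal{A}_0)$ with $q_n=\sum_{\beta_i\in A^n_s}p_i$ and aim to derive $X\vdash A\rhd_{q_{n+1}}(A^{n+2}_s\cap\mathcal{A}_0)$ with $q_{n+1}=\sum_{\beta_i\in A^{n+1}_s}p_i$. The key is to combine the induction hypothesis with Lemma~\ref{n>0 lemma}, which gives $X\vdash(A^{n+1}_s\cap\mathcal{A}_0)\rhd_{r}((A^{n+2}_s\setminus A^{n+1}_s)\cap\mathcal{A}_0)$ where $r=\sum_{\beta_i\in A^{n+1}_s\setminus A^n_s}p_i$. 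Wait --- Lemma~\ref{n>0 lemma} is stated for index $n$, so I would instead apply it with $n$ replaced by $n$ directly: it yields $X\vdash(A^n_s\cap\mathcal{A}_0)\rhd_{r}((A^{n+2}_s\setminus A^{n+1}_s)\cap\mathcal{A}_0)$ with $r=\sum_{\beta_i\in A^{n+1}_s\setminus A^n_s}p_i$. By the Reflexivity axiom together with Augmentation (or directly by Lemma~\ref{mono lemma}-style reasoning), I can strengthen the induction hypothesis target $A^{n+1}_s\cap\mathcal{A}_0$ to include $A^n_s\cap\mathcal{A}_0$, so that Transitivity applies; more precisely, from $X\vdash A\rhd_{q_n}(A^{n+1}_s\cap\mathcal{A}_0)$ and $A^n_s\cap\mathcal{A}_0\subseteq A^{n+1}_s\cap\mathcal{A}_0$, and from the above instance of Lemma~\ref{n>0 lemma}, I can use Transitivity (noting $A^n_s\cap\mathcal{A}_0\subseteq A^{n+1}_s\cap\mathcal{A}_0$ and Reflexivity to bridge) to obtain $X\vdash A\rhd_{q_n+r}((A^{n+2}_s\setminus A^{n+1}_s)\cap\mathcal{A}_0)$. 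Finally, applying Lemma~\ref{union lemma} to this together with the induction hypothesis $X\vdash A\rhd_{q_n}(A^{n+1}_s\cap\mathcal{A}_0)$, and using $(A^{n+2}_s\cap\mathcal{A}_0)=(A^{n+1}_s\cap\mathcal{A}_0)\cup((A^{n+2}_s\setminus A^{n+1}_s)\cap\mathcal{A}_0)$, gives $X\vdash A\rhd_{2q_n+r}(A^{n+2}_s\cap\mathcal{A}_0)$.

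The main obstacle will be bookkeeping the budget so that the final subscript is exactly $q_{n+1}=\sum_{\beta_i\in A^{n+1}_s}p_i$ rather than some larger quantity like $2q_n+r$. The clean way around this is to reorganize the induction: rather than chaining the hypothesis with a transitivity step that reuses the budget $q_n$, I would apply Transitivity once as $A\rhd_{q_n}(A^{n+1}_s\cap\mathcal{A}_0)$ followed by $(A^{n+1}_s\cap\mathcal{A}_0)\rhd_{r'}(A^{n+2}_s\cap\mathcal{A}_0)$ where the second relation has budget $r'=\sum_{\beta_i\in A^{n+1}_s\setminus A^n_s}p_i$ and is obtained from a version of Lemma~\ref{n>0 lemma}/Lemma~\ref{many beta steps lemma} adapted so that its left side is $A^{n+1}_s\cap\mathcal{A}_0$ (using that $A^{n+1}_s\cap\mathcal{A}_0$ already contains everything needed, via Reflexivity and Lemma~\ref{union lemma}). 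Then Transitivity gives exactly $A\rhd_{q_n+r'}(A^{n+2}_s\cap\mathcal{A}_0)$, and since the sets $\{\beta_i\in A^n_s\}$ and $\{\beta_i\in A^{n+1}_s\setminus A^n_s\}$ partition $\{\beta_i\in A^{n+1}_s\}$, we get $q_n+r'=\sum_{\beta_i\in A^{n+1}_s}p_i=q_{n+1}$, completing the induction. I expect the only real care needed is in justifying that $(A^{n+1}_s\cap\mathcal{A}_0)$ influences $((A^{n+2}_s\setminus A^{n+1}_s)\cap\mathcal{A}_0)$ with budget $r'$ --- this is the content of Lemma~\ref{n>0 lemma} combined with the observation that $A^n_s\cap\mathcal{A}_0\subseteq A^{n+1}_s\cap\mathcal{A}_0$ lets us enlarge the left-hand side for free by Reflexivity and Transitivity with budget $0$.
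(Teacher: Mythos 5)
Your reorganized argument is correct and is essentially the paper's own proof: induction on $n$ with Lemma~\ref{n=0 lemma} as the base case, and in the inductive step you enlarge the left side of Lemma~\ref{n>0 lemma} from $A^n_s\cap\mathcal{A}_0$ to $A^{n+1}_s\cap\mathcal{A}_0$ by Reflexivity and Transitivity at budget $0$, fold $A^{n+1}_s\cap\mathcal{A}_0$ into the right side (the paper does this with one Augmentation step, you with Lemma~\ref{union lemma} plus Reflexivity, which is the same thing), and finish with Transitivity, the budget adding up to $\sum_{\beta_i\in A^{n+1}_s}p_i$ exactly because $\{\beta_i\in A^n_s\}$ and $\{\beta_i\in A^{n+1}_s\setminus A^n_s\}$ partition $\{\beta_i\in A^{n+1}_s\}$. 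Your first sketch with budget $2q_n+r$ was indeed too lossy, but you correctly identified and repaired it, so no gap remains.
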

\begin{proof}
We prove this lemma by induction on $n$.
If $n=0$, then the required follows from Lemma~\ref{n=0 lemma}. 

Assume now that
\begin{equation}\label{eq-alpha}
X\vdash A\rhd_{q} (A^{n+1}_s\cap \mathcal{A}_0),
\end{equation}
where $q=\sum_{\beta_i\in A^n_s}p_i$.

Note that $A^n_s\subseteq A^{n+1}_s$ by Definition~\ref{Ak}. Hence, $\vdash (A^{n+1}_s\cap \mathcal{A}_0) \rhd_0 (A^{n}_s\cap \mathcal{A}_0)$ by Reflexivity axiom. 
At the same time, by Lemma~\ref{n>0 lemma},
$$X\vdash (A^{n}_s\cap \mathcal{A}_0)\rhd_{r}((A^{n+2}_s\setminus A^{n+1}_s)\cap \mathcal{A}_0),$$
 where $r=\sum_{\beta_i\in A^{n+1}_s\setminus A^{n}_s}p_i$. 
 Thus, by Transitivity axiom,
 $$
 X\vdash (A^{n+1}_s\cap \mathcal{A}_0)\rhd_{r}((A^{n+2}_s\setminus A^{n+1}_s)\cap \mathcal{A}_0).
$$
 Then, by Augmentation axiom,
$$
 X\vdash (A^{n+1}_s\cap \mathcal{A}_0),(A^{n+1}_s\cap \mathcal{A}_0)\rhd_{r}((A^{n+2}_s\setminus A^{n+1}_s)\cap \mathcal{A}_0),(A^{n+1}_s\cap \mathcal{A}_0).
$$
In other words,
 \begin{equation}\label{eq-beta}
 X\vdash (A^{n+1}_s\cap \mathcal{A}_0)\rhd_{r}(A^{n+2}_s\cap \mathcal{A}_0).
 \end{equation}
Therefore, by Transitivity  axiom from statement (\ref{eq-alpha}) and (\ref{eq-beta}) we can conclude 
$X\vdash A\rhd_{q'} (A^{n+2}_s\cap \mathcal{A}_0)$, where $q'=q+r=\sum_{\beta_i\in A^{n+1}_s}p_i$.
\end{proof}

We are now ready to prove the converse of Lemma~\ref{first base case}.
\begin{lemma}\label{second base case}
If $N_X\vDash A\rhd_{p} B$, then $X\vdash A\rhd_{p} B$, for each subsets $A$ and $B$ of $\mathcal{A}_0$ and each non-negative real number $p$.
\end{lemma}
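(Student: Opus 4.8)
The plan is to show that whenever $N_X\vDash A\rhd_p B$ for subsets $A,B\subseteq\mathcal{A}_0$, the formula $A\rhd_p B$ is derivable from $X$. By Definition~\ref{sat}, the hypothesis gives a spending function $s$ with $\|s\|\le p$ and $B\subseteq A^*_s$. By Corollary~\ref{A*=Ak} there is $N\ge 0$ with $A^*_s=A^N_s$, so $B\subseteq A^N_s$, and since $B\subseteq\mathcal{A}_0$ in fact $B\subseteq A^N_s\cap\mathcal{A}_0$. The natural target is therefore Lemma~\ref{An+1 lemma}, which gives $X\vdash A\rhd_q (A^N_s\cap\mathcal{A}_0)$ with $q=\sum_{\beta_i\in A^{N-1}_s}p_i$ (taking $n=N-1$; the case $N=0$ is handled directly by Reflexivity since $A^0_s\cap\mathcal{A}_0=A$). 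Combined with $\vdash (A^N_s\cap\mathcal{A}_0)\rhd_0 B$ from Reflexivity and Transitivity, this yields $X\vdash A\rhd_q B$.

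The remaining gap is that $q=\sum_{\beta_i\in A^{N-1}_s}p_i$ need not literally equal $p$; we need $q\le p$, after which Lemma~\ref{mono lemma} upgrades $A\rhd_q B$ to $A\rhd_p B$. So the crux is the budget bound $\sum_{\beta_i\in A^{*}_s}p_i\le\|s\|\le p$. To prove this, I would argue that the agents $\beta_i$ that ever get influenced force disjoint ``chunks'' of spending. Concretely, if $\beta_i\in A^*_s$ then by Lemma~\ref{if beta then alpha} also $\alpha_i\in A^*_s$, and then by Lemma~\ref{if alpha then s} we get $s(\alpha_i)\ge p_i$. Since the agents $\alpha_1,\dots,\alpha_m$ are distinct, summing over all $i$ with $\beta_i\in A^*_s$ gives
\[
\sum_{\beta_i\in A^*_s}p_i\;\le\;\sum_{\beta_i\in A^*_s}s(\alpha_i)\;\le\;\sum_{a\in\mathcal{A}}s(a)\;=\;\|s\|\;\le\;p,
\]
using Definition~\ref{||set} and non-negativity of $s$ for the middle inequality. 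This is the main obstacle in the sense that it is the one genuinely new observation; everything else is assembling earlier lemmas.

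One technical point to handle carefully: Lemma~\ref{if alpha then s} and Lemma~\ref{if beta then alpha} are stated for $A\subseteq\mathcal{A}_0$, which is exactly our situation, so they apply directly. Also the index set in Lemma~\ref{An+1 lemma} is $\{\beta_i\in A^n_s\}$ while my budget bound is over $\{\beta_i\in A^*_s\}$; since $A^n_s\subseteq A^*_s$ and all $p_i\ge 0$, the bound for the larger set dominates, so $q=\sum_{\beta_i\in A^{N-1}_s}p_i\le\sum_{\beta_i\in A^*_s}p_i\le p$ as needed. Assembling: pick $s$ and $N$ as above; if $N=0$ conclude by Reflexivity and Lemma~\ref{mono lemma}; otherwise apply Lemma~\ref{An+1 lemma} with $n=N-1$, chain with Reflexivity-plus-Transitivity to reach $B$, bound the budget by the displayed inequality, and finish with Lemma~\ref{mono lemma}. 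Therefore $X\vdash A\rhd_p B$.
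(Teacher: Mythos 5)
Your proposal is correct and follows essentially the same route as the paper's proof: extract $s$ with $\|s\|\le p$, apply Lemma~\ref{An+1 lemma} plus Reflexivity and Transitivity to get $X\vdash A\rhd_q B$, and bound $q\le p$ via Lemma~\ref{if beta then alpha}, Lemma~\ref{if alpha then s}, and non-negativity of $s$. The only cosmetic differences are that the paper avoids your $N=0$ case split by working with $A^{n+1}_s$ where $A^*_s=A^n_s$, and it closes the gap from $q$ to $p$ with Reflexivity ($B\rhd_{p-q}B$) and Transitivity rather than citing Lemma~\ref{mono lemma}, which amounts to the same derivation.
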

\begin{proof}
Suppose that $N_X\vDash A\rhd_{p} B$. By Definition~\ref{sat}, there is a spending function $s$ such that $\|s\|\le p$ and $B\subseteq A^*_s$. Thus, by Corollary~\ref{A*=Ak}, there is $n\ge 0$ such that $B\subseteq A^n_s$. By Definition~\ref{Ak}, $A^{n}_s\subseteq A^{n+1}_s$. Thus, $B\subseteq A^{n+1}_s$.
Since $B$ is a subset of $\mathcal{A}_0$, we have $B\subseteq A^{n+1}_s\cap \mathcal{A}_0$.
Hence, $\vdash (A^{n+1}\cap \mathcal{A}_0)\rhd_0 B$ by Reflexivity axiom. Then, from Transitivity axiom and Lemma~\ref{An+1 lemma}, we have $X\vdash A\rhd_{q} B$, where $q=\sum_{\beta_i\in A^n_s}p_i$.

Note that $\sum_{\beta_i\in A^n_s}p_i \le \sum_{\alpha_i\in A^{n}_s}p_i$ by Lemma~\ref{if beta then alpha} and $\sum_{\alpha_i\in A^{n}_s}p_i \le \sum_{s(\alpha_i)\ge p_i}p_i$ by Lemma~\ref{if alpha then s}. Thus, taking into account Definition~\ref{||set},
$$
q=\sum_{\beta_i\in A^n_s}p_i \le \sum_{\alpha_i\in A^{n}_s}p_i \le \sum_{s(\alpha_i)\ge p_i}p_i \le \sum_{s(\alpha_i)\ge p_i}s(\alpha_i) \le \sum_{a\in\mathcal{A}}s(a)=\|s\|\le p.
$$
Hence, $q\le p$. Then, $\vdash B\rhd_{p-q} B$ by Reflexivity axiom. Finally, $X\vdash A\rhd_{q} B$ and $\vdash B\rhd_{p-q} B$, by Transitivity axiom, imply that $X\vdash A\rhd_{p} B$.
\end{proof}

We conclude this section by stating and proving the completeness theorem for promotional marketing.
\begin{theorem}\label{promotional completeness}
If $\nvdash\phi$, then there exists social network $N=(\mathcal{A},w,\lambda,\theta)$ such that $\phi\in\Phi(\mathcal{A})$ and $N\nvDash\phi$.
\end{theorem}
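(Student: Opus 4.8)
The completeness theorem is the routine ``wrap-up'' of the canonical-model construction: the genuine work has already been done in Lemma~\ref{first base case} and Lemma~\ref{second base case}. Assume $\nvdash\phi$. Let $\mathcal{A}_0$ be the (finite) set of agents occurring in $\phi$, and let $\chi_1,\dots,\chi_n$ be the atomic subformulas of $\phi$ (there is at least one); each $\chi_j$ has the form $A\rhd_p B$ with $A,B\subseteq\mathcal{A}_0$. Treat $\chi_1,\dots,\chi_n$ as propositional atoms and extend any truth assignment $v\colon\{\chi_1,\dots,\chi_n\}\to\{0,1\}$ to a Boolean valuation of all formulas built from them; write $D_v$ for the conjunction of all $\chi_j$ with $v(\chi_j)=1$ together with all $\neg\chi_j$ with $v(\chi_j)=0$. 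The formulas $\bigvee_v D_v$ and $\phi\leftrightarrow\bigvee_{v:\,v(\phi)=1}D_v$ are propositional tautologies, hence theorems of our system. If the set of literals of $D_v$ were inconsistent for every $v$ with $v(\phi)=0$, then $\vdash\neg D_v$ for each such $v$; deleting those disjuncts from the provable $\bigvee_v D_v$ would yield $\vdash\bigvee_{v:\,v(\phi)=1}D_v$, and hence $\vdash\phi$, contradicting the hypothesis. So fix a $v$ with $v(\phi)=0$ whose literals form a consistent set, and put $X=\{\chi_j\;:\;v(\chi_j)=1\}$, a finite set of atomic formulas over $\mathcal{A}_0$; write $X=\{A_i\rhd_{p_i}B_i\}_{i\le m}$.

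Next I would invoke the canonical construction of the previous subsection for this $X$, obtaining the social network $N_X=(\mathcal{A},w,\lambda,\theta)$. Since $\mathcal{A}_0\subseteq\mathcal{A}$ and $\phi$ mentions only agents of $\mathcal{A}_0$, we have $\phi\in\Phi(\mathcal{A})$, so $N_X$ is a legitimate witness provided it falsifies $\phi$.

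It remains to check that $N_X$ realizes $v$ on every atomic subformula of $\phi$, i.e.\ that $N_X\vDash\chi_j$ iff $v(\chi_j)=1$. For ``$\Leftarrow$'': if $v(\chi_j)=1$ then $\chi_j\in X$, so $N_X\vDash\chi_j$ by Lemma~\ref{first base case}. For ``$\Rightarrow$'': suppose $v(\chi_j)=0$ but $N_X\vDash\chi_j$; writing $\chi_j=A\rhd_p B$ with $A,B\subseteq\mathcal{A}_0$, Lemma~\ref{second base case} gives $X\vdash A\rhd_p B$, i.e.\ $X\vdash\chi_j$. But $\neg\chi_j$ is one of the literals of $D_v$, so $X\cup\{\neg\chi_j\}$ is a subset of the consistent set of literals of $D_v$, while $X\vdash\chi_j$ makes $X\cup\{\neg\chi_j\}$ inconsistent --- a contradiction. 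Hence $N_X\nvDash\chi_j$. Consequently $N_X$ assigns each atomic subformula of $\phi$ exactly the value $v$ does, and since the clauses of Definition~\ref{sat} for $\neg$ and $\to$ are the usual Boolean ones, $N_X$ evaluates $\phi$ exactly as $v$ does; as $v(\phi)=0$, we conclude $N_X\nvDash\phi$, which completes the proof.

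The only delicate point is the one forced by the real-valued budgets: a maximal consistent subset of $\Phi(\mathcal{A}_0)$ would be infinite (since $p$ ranges over all non-negative reals), so the canonical network cannot be built from it directly. Restricting attention to the finitely many atomic formulas that actually occur in $\phi$ --- as above --- is what keeps $X$ finite while still pinning down the truth value of $\phi$. Beyond that, the argument is the standard propositional Lindenbaum/disjunctive-normal-form reasoning, and the real combinatorial difficulty of the completeness proof lies entirely in Lemma~\ref{second base case} and the chain of lemmas about $N_X$ leading to it, all of which are already established.
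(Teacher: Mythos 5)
Your proof is correct and follows essentially the same route as the paper: the paper also extracts a finite set $X$ of atomic formulas from a consistent description of the atomic subformulas of $\phi$ (via a maximal consistent subset of the subformula closure of $\neg\phi$ rather than your DNF/valuation argument), builds the same canonical network $N_X$, and concludes through Lemmas~\ref{first base case} and~\ref{second base case} plus the Boolean clauses of Definition~\ref{sat}. The only difference is cosmetic bookkeeping: a Lindenbaum-style truth lemma by induction on subformulas in the paper versus your explicit choice of a consistent falsifying valuation, and both correctly sidestep the infinitude of $\Phi(\mathcal{A}_0)$ by working with the finitely many atoms occurring in $\phi$.
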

\begin{proof}
Suppose that $\nvdash\phi$. Let $M$ be any maximal consistent subset of 
$$\{\psi,\neg\psi\;|\; \mbox{$\psi$ is a subformula of $\neg\psi$} \}$$ 
such that $\neg\psi\in M$. Let $X$ be the set of all atomic formulas of the form $A\rhd_p B$ in set $M$. To finish the proof of the theorem, we first establish the following lemma:

\begin{lemma}\label{induction lemma}
$\psi\in M$ if and only if $N_X\vDash\psi$ for each subformula $\psi$ of $\neg\phi$.
\end{lemma}
\begin{proof}
We prove the lemma by induction on the structural complexity of formula $\psi$. In the base case, suppose that $\psi$ is $A\rhd_p B$. 

\noindent $(\Rightarrow)$ If $A\rhd_p B\in M$, then $A\rhd_p B\in X$ by the choice of set $X$. Thus, $N_X\vDash A\rhd_p B$ by Lemma~\ref{first base case}.

\noindent $(\Leftarrow)$ If $N_X\vDash A\rhd_p B$, then $X\vdash A\rhd_p B$ by Lemma~\ref{second base case}. Thus, $M\vdash A\rhd_p B$. Hence, by the maximality of set $M$, we have $A\rhd_p B\in M$ since $A\rhd_p B$ is a subformula of $\neg\phi$. 

The induction step follows from the induction hypothesis, the maximality and the consistency of set $M$ and Definition~\ref{sat} in the standard way.
\end{proof}

To finish the proof of the theorem, note that $\neg\phi\in M$ by the choice of set $M$. Thus, $N_X\vDash\neg\psi$ by Lemma~\ref{induction lemma}. Therefore, $N_X\nvDash\psi$ by Definition~\ref{sat}.
\end{proof}

\section{Logic of Preventive Marketing}\label{preventive section}

In this section we study the impact of preventive marketing on influence in social networks. Our definition of a social network given in Definition~\ref{social network definition} and the language $\Phi(\mathcal{A})$ remain the same. As it has been discussed in the introduction, we only modify the meaning of the influence relation $A\rhd_p B$ to be ``for any preventive marketing campaign with budget no more than $p$, the set of agents $A$ is able to influence the set of agents $B$". The latter is formally captured in item 1 of Definition~\ref{preventive sat}.

\begin{definition}\label{preventive sat}
For any social network $N$ with the set of agents $\mathcal{A}$ and any formula $\phi\in\Phi(\mathcal{A})$, we define the satisfiability relation $N\vDash\phi$ as follows:
\begin{enumerate}
\item $N\vDash A \rhddash_p B$ if $B\subseteq A^*_{s}$ for each spending function $s$ such that $\|s\|\le p$,
\item $N\vDash\neg\psi$ if $N\nvDash\psi$,
\item $N\vDash\psi\to\chi$ if $N\nvDash\psi$ or $N\vDash\chi$.
\end{enumerate}
\end{definition}

Note the significant difference between the above definition and the similar Definition~\ref{sat} for promotional marketing. Item 1 of Definition~\ref{preventive sat} has a universal quantifier over spending functions and corresponding part of Definition~\ref{sat} has an existential quantifier over spending functions.

\subsection{Axioms}\label{preventive axioms}
Let $\mathcal{A}$ be any fixed finite set of agents. Our logical system for influence with preventive marketing, in addition to propositional tautologies in language $\Phi(\mathcal{A})$, contains the following axioms:
\begin{enumerate}
\item Reflexivity: $A\rhddash_p B$, where $B\subseteq A$,
\item Augmentation: $A\rhddash_p B\to A,C\rhddash_p B,C$,
\item Transitivity: $A\rhddash_p B\to(B\rhddash_p C\to A\rhddash_p C)$,
\item Monotonicity: $A\rhddash_p B\to A\rhddash_q B$, where $q\le p$.
\end{enumerate}
Just like in the case of promotional marketing, we write $\vdash \phi$ if formula $\phi\in\Phi(\mathcal{A})$ is derivable in our logical system using Modus Ponens inference rule. We write $X\vdash \phi$ if formula $\phi$ is derivable using an additional set of axioms $X\subseteq \Phi(\mathcal{A})$.

\subsection{Example}
The soundness and the completeness of our logical system will be shown later. In this section we give two examples of formal proofs in our system. First, we show a preventive marketing analogy of Lemma~\ref{union lemma}:

\begin{lemma}\label{preventive union lemma}
$\vdash A\rhddash_p B\to (A\rhddash_p C\to A\rhddash_p B,C)$. 
\end{lemma}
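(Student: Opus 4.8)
The plan is to mimic the proof of Lemma~\ref{union lemma} from the promotional setting, adapting it to the preventive axioms. The key observation is that the preventive system still has Reflexivity and Augmentation in exactly the same form, and although its Transitivity axiom is the ``same budget'' version $A\rhddash_p B\to(B\rhddash_p C\to A\rhddash_p C)$ rather than the additive one, this is actually fine here because the target budget $p$ is the same throughout the statement. So I would not need the Monotonicity axiom at all.

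Concretely, first I would apply the Augmentation axiom to the hypothesis $A\rhddash_p B$ to get $\vdash A\rhddash_p B\to A\rhddash_p A,B$ (augmenting by $A$, since $A\cup A=A$ and $A\cup B=A,B$). Second, I would apply Augmentation to $A\rhddash_p C$, augmenting by $B$, to get $\vdash A\rhddash_p C\to A,B\rhddash_p B,C$. Third, I would instantiate the Transitivity axiom with the sets $A$, $A,B$, and $B,C$ and budget $p$ to obtain $\vdash A\rhddash_p A,B\to(A,B\rhddash_p B,C\to A\rhddash_p B,C)$. Finally, chaining these three derivable implications together by propositional reasoning (Modus Ponens) yields $\vdash A\rhddash_p B\to(A\rhddash_p C\to A\rhddash_p B,C)$.

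\begin{proof}
By Augmentation axiom,
\begin{equation}\label{peq 1}
\vdash A\rhddash_p B \to A\rhddash_p A,B
\end{equation}
and
\begin{equation}\label{peq 2}
\vdash A\rhddash_p C \to A,B\rhddash_p B,C.
\end{equation}
By Transitivity axiom,
\begin{equation}\label{peq 3}
\vdash A\rhddash_p A,B \to (A,B\rhddash_p B,C \to A\rhddash_p B,C).
\end{equation}
The statement of the lemma follows from statements (\ref{peq 1}), (\ref{peq 2}), and (\ref{peq 3}) by the laws of propositional logic.
\end{proof}

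Since every step is a direct instance of an axiom or a propositional tautology, there is essentially no obstacle here; the only thing to be careful about is matching the set-union notation correctly in the two applications of Augmentation, namely that augmenting $A\rhddash_p B$ by the set $A$ gives $A\rhddash_p A,B$ and augmenting $A\rhddash_p C$ by the set $B$ gives $A,B\rhddash_p B,C$. This is the exact analogue of Lemma~\ref{union lemma}, with the additive budget $p+q$ collapsed to a single $p$ because the preventive Transitivity axiom does not add budgets.
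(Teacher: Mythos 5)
Your proof is correct and matches the paper's own proof of this lemma exactly: the same two instances of Augmentation, the same single-budget instance of Transitivity, and propositional chaining, with no need for Monotonicity. Nothing to add.
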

\begin{proof}
By Augmentation axiom,
\begin{equation}\label{prev eq 1}
\vdash A\rhd_p B \to A\rhd_p A,B
\end{equation}
and 
\begin{equation}\label{prev eq 2}
\vdash A\rhd_p C \to A,B\rhd_p B,C.
\end{equation}
By Transitivity axiom,
\begin{equation}\label{prev eq 3}
\vdash A\rhd_p A,B \to (A,B\rhd_p B,C \to A\rhd_p B,C).
\end{equation}
The statement of the lemma follows from statements (\ref{prev eq 1}), (\ref{prev eq 2}), and (\ref{prev eq 3}) by the laws of the propositional logic.
\end{proof}

Next, we show an auxiliary lemma that is used later in the proof of completeness.
\begin{lemma}\label{rhd set}
If $X\vdash B\rhddash_p c$ for each $c\in C$, then $X\vdash B\rhddash_p C$, where $B$ and $C$ are subsets of $\mathcal{A}_0$ and $p\ge 0$.
\end{lemma}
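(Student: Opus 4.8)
The statement is the preventive-marketing analogue of Lemma~\ref{big union lemma}, specialized to a family of singletons, so the natural approach is induction on the size of the set $C$. Write $C=\{c_1,\dots,c_k\}$; I would prove by induction on $j$ that $X\vdash B\rhddash_p \{c_1,\dots,c_j\}$. The base case $j=0$ is $X\vdash B\rhddash_p \varnothing$, which is an instance of the Reflexivity axiom since $\varnothing\subseteq B$. For the inductive step, suppose $X\vdash B\rhddash_p \{c_1,\dots,c_{j-1}\}$ and recall that $X\vdash B\rhddash_p c_j$ holds by the hypothesis of the lemma; then Lemma~\ref{preventive union lemma} applied with $A:=B$, $B:=\{c_1,\dots,c_{j-1}\}$, $C:=\{c_j\}$ gives $X\vdash B\rhddash_p \{c_1,\dots,c_{j-1}\}\cup\{c_j\}=\{c_1,\dots,c_j\}$. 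Taking $j=k$ yields $X\vdash B\rhddash_p C$.

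I would be careful about one small mismatch in notation: Lemma~\ref{preventive union lemma} is stated with the relation symbol written inside the proof using $\rhd_p$ (the excerpt switches between $\rhddash_p$ and $\rhd_p$ rather freely, and indeed \verb|\rhddash| is defined to equal \verb|\rhd|), so there is no genuine obstacle there — they denote the same relation. The only real content is that the preventive union lemma lets one merge a right-hand side one element at a time while keeping the budget parameter $p$ fixed, which works precisely because the preventive transitivity axiom does not add budgets (unlike the promotional one). No appeal to Monotonicity is needed.

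There is essentially no hard step here; the proof is a routine finite induction paralleling Lemma~\ref{big union lemma}, with Lemma~\ref{preventive union lemma} playing the role that Lemma~\ref{union lemma} plays there. If anything, the one point worth stating explicitly is that the induction is legitimate because $C$ is finite (it is a subset of the finite set $\mathcal{A}_0$), so I would mention that and then carry out the two-line induction above.

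\begin{proof}
We proceed by induction on the cardinality of $C$, which is finite since $C\subseteq\mathcal{A}_0$ and $\mathcal{A}_0$ is finite. If $C=\varnothing$, then $X\vdash B\rhddash_p C$ is an instance of the Reflexivity axiom because $\varnothing\subseteq B$.

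Suppose now that $|C|>0$ and pick any $c_0\in C$. Let $C'=C\setminus\{c_0\}$. By the induction hypothesis applied to $C'$, we have $X\vdash B\rhddash_p C'$. By the assumption of the lemma, $X\vdash B\rhddash_p c_0$, that is, $X\vdash B\rhddash_p \{c_0\}$. Hence, by Lemma~\ref{preventive union lemma},
$$
X\vdash B\rhddash_p C'\cup\{c_0\}.
$$
Since $C'\cup\{c_0\}=C$, we conclude $X\vdash B\rhddash_p C$.
\end{proof}
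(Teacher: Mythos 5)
Your proof is correct and follows essentially the same route as the paper: induction on the size of $C$, with Reflexivity for the empty set and merging one element at a time with the budget $p$ held fixed. The only cosmetic difference is that you invoke Lemma~\ref{preventive union lemma} for the inductive step, whereas the paper inlines that lemma's argument (Augmentation twice plus Transitivity) directly; since Lemma~\ref{preventive union lemma} is established earlier, your shortcut is legitimate.
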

\begin{proof}
We prove the lemma by induction on the size of set $C$. 

\noindent {\em Base Case}: $X\vdash B\rhddash_p \varnothing$ by Reflexivity axiom.

\noindent {\em Induction Step}: 
Assume that  $X\vdash B\rhddash_p C$. Let $c$ be any element of $\mathcal{A}_0\setminus C$ such that $X\vdash B\rhddash_p c$. We need to show that $X\vdash B\rhddash_p C\cup \{c\}$. By Augmentation axiom, 
\begin{equation}\label{eq0}
X\vdash B\cup \{c\}\rhddash_p C\cup \{c\}.
\end{equation}
Recall that $X\vdash B\rhddash_p c$. Again by Augmentation axiom, $X\vdash B\rhddash_p B \cup \{c\}$. Hence, $X\vdash B \rhddash_p C\cup \{c\}$, due to (\ref{eq0}) and Transitivity axiom.
\end{proof}

\subsection{Soundness}

In this section we prove the soundness of the logic for preventive marketing. 

\begin{theorem}\label{preventive soundness}
For any finite set $\mathcal{A}$ and any $\phi\in\Phi(\mathcal{A})$,
if $\vdash \phi$, then $N\vDash\phi$ for each social network $N=(\mathcal{A},w,\lambda,\theta)$. \end{theorem}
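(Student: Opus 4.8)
The plan is to mirror the structure of the soundness proof for promotional marketing (Theorem~\ref{promotional soundness}): the soundness of propositional tautologies and of Modus Ponens is immediate from Definition~\ref{preventive sat}, so it suffices to verify each of the four axioms — Reflexivity, Augmentation, Transitivity, and Monotonicity — against the new, universally quantified semantics. In every case I would fix an arbitrary social network $N=(\mathcal{A},w,\lambda,\theta)$, assume the antecedent, fix an arbitrary spending function $s$ whose norm is below the budget appearing in the consequent, and show the required inclusion of agent sets holds for that particular $s$; since $s$ is arbitrary, item~1 of Definition~\ref{preventive sat} then yields $N\vDash$ of the consequent.

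The three easy axioms go as follows. For Reflexivity, if $B\subseteq A$ then for any $s$ we have $B\subseteq A\subseteq A^*_s$ by Corollary~\ref{A subseteq A*s}. For Augmentation, assuming $N\vDash A\rhd_p B$, fix $s$ with $\|s\|\le p$; then $B\subseteq A^*_s$, and also $C\subseteq C^*_s$ by Corollary~\ref{A subseteq A*s}, so $B\cup C\subseteq A^*_s\cup C^*_s\subseteq(A\cup C)^*_s$ by Lemma~\ref{star union}. For Monotonicity, assume $N\vDash A\rhd_p B$ and $q\le p$; any $s$ with $\|s\|\le q$ also satisfies $\|s\|\le p$, so $B\subseteq A^*_s$ straight from the hypothesis. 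It is worth remarking that the universal quantifier in the preventive semantics is exactly what makes the budget move downward here, in contrast with the provable promotional statement of Lemma~\ref{mono lemma}.

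The one step with genuine content — and the place I expect to spend the most care — is Transitivity, since this is precisely where the preventive system diverges from the promotional one. Assume $N\vDash A\rhd_p B$ and $N\vDash B\rhd_p C$, and fix any $s$ with $\|s\|\le p$. Because the \emph{same} budget $p$ governs both hypotheses, both can be instantiated at this single $s$, giving $B\subseteq A^*_s$ and $C\subseteq B^*_s$. Then $B^*_s\subseteq(A^*_s)^*_s$ by Corollary~\ref{star mono}, and $(A^*_s)^*_s\subseteq A^*_s$ by Lemma~\ref{idempotent}, so $C\subseteq B^*_s\subseteq A^*_s$, as needed. This is why preventive transitivity requires no analogue of the $\oplus_\lambda$ construction or of Lemma~\ref{A**}: a single spending function is reused throughout, so plain idempotence of the operator $(\cdot)^*_s$ is enough, whereas the promotional version had to fuse two distinct campaigns and consequently only achieved budget $p+q$ in the conclusion. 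Collecting the four axiom verifications into separate lemmas (as was done for promotional marketing) completes the proof of Theorem~\ref{preventive soundness}.
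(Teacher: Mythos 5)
Your proposal is correct and follows essentially the same route as the paper: soundness is reduced to a per-axiom check, with Reflexivity and Augmentation handled via Corollary~\ref{A subseteq A*s} and Lemma~\ref{star union}, Monotonicity by reusing the same spending function under the universal quantifier, and Transitivity by instantiating both hypotheses at a single $s$ and applying Corollary~\ref{star mono} together with Lemma~\ref{idempotent}. Your remark on why no analogue of $\oplus_\lambda$ or Lemma~\ref{A**} is needed matches the paper's treatment exactly.
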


The soundness of propositional tautologies and of Modus Ponens inference rule is straightforward. Below we show the soundness of each of the remaining axioms as a separate lemma.

\begin{lemma}
$N\vDash A\rhd_p B$, for any social network $N=(\mathcal{A}, w, \lambda, \theta)$ and any subsets $A$ and $B$ of $\mathcal{A}$ such that $B\subseteq A$.
\end{lemma}
\begin{proof} Let $s$ be any spending function. By Definition~\ref{preventive sat}, it suffices to show that $B\subseteq A^*_s$. Indeed,  $A\subseteq A^*_s$ by Corollary~\ref{A subseteq A*s}. Therefore, $B\subseteq A^*_s$ due to the assumption $B\subseteq A$ of the lemma.
\end{proof}

\begin{lemma}
If $N\vDash A\rhd_p B$, then $N\vDash A,C\rhd_p B,C$, for each social network $N=(\mathcal{A}, w, \lambda, \theta)$ and all subsets $A$, $B$, and $C$ of $\mathcal{A}$.
\end{lemma}
\begin{proof}
Suppose that $N \vDash A \rhd_p B $. Consider any spending function $s$ such that $\|s\|\le p$. It suffices to show that 
$B\cup C \subseteq (A\cup C)^*_s$. Indeed, assumption $N \vDash A \rhd_p B $ implies that $B\subseteq A^*_s$ by Definition~\ref{preventive sat}. At the same time, $C\subseteq C^*_s$ by Corollary~\ref{A subseteq A*s}.  Therefore, 
$
B\cup C \subseteq A^*_s \cup C^*_s \subseteq (A\cup C)^*_s
$, by Lemma~\ref{star union}.
\end{proof}

\begin{lemma} 
If $N \vDash A \rhd_p B$ and $N\vDash B \rhd_p C$, then $N \vDash A \rhd_{p} C$, for each social network $N=(\mathcal{A},w,\lambda,\theta)$ and all subsets $A$, $B$, and $C$ of $\mathcal{A}$.
\end{lemma}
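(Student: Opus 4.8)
The plan is to prove the contrapositive-free statement directly: assuming $N\vDash A\rhd_p B$ and $N\vDash B\rhd_p C$, show $N\vDash A\rhd_p C$. Unpacking item~1 of Definition~\ref{preventive sat}, what must be shown is that for \emph{every} spending function $s$ with $\|s\|\le p$, we have $C\subseteq A^*_s$. So first I would fix an arbitrary spending function $s$ with $\|s\|\le p$ and aim to derive $C\subseteq A^*_s$ using the two hypotheses, each of which is itself a universal statement over such spending functions.

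The key move is that, unlike the promotional case, here we may apply \emph{the same} spending function $s$ to both hypotheses, which is exactly why the budget does not need to increase. From $N\vDash A\rhd_p B$ and $\|s\|\le p$ we get $B\subseteq A^*_s$. By Corollary~\ref{star mono} (monotonicity of the $\ast$-operator in its set argument), $B^*_s\subseteq (A^*_s)^*_s$. By Lemma~\ref{idempotent}, $(A^*_s)^*_s\subseteq A^*_s$, so $B^*_s\subseteq A^*_s$. From $N\vDash B\rhd_p C$ and $\|s\|\le p$ we get $C\subseteq B^*_s$. Chaining these inclusions gives $C\subseteq B^*_s\subseteq A^*_s$, which is what we needed for this $s$. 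Since $s$ was arbitrary subject to $\|s\|\le p$, Definition~\ref{preventive sat} yields $N\vDash A\rhd_p C$.

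I do not expect a genuine obstacle here; the proof is short once the definitions are unwound. The one point that deserves care — and the reason the statement is stated with the same $p$ throughout rather than $p+q$ — is the observation that the universal quantifier lets us reuse a single $s$, so there is no need for the operation $\oplus_\lambda$ or for Lemma~\ref{A**} that were essential in the promotional (existential) case. The only ingredients are Corollary~\ref{A subseteq A*s} is not even needed; what is used are Corollary~\ref{star mono} and Lemma~\ref{idempotent}, together with transitivity of set inclusion.
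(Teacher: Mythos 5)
Your proof is correct and follows essentially the same route as the paper's: fix an arbitrary spending function $s$ with $\|s\|\le p$, reuse it for both hypotheses, and combine Corollary~\ref{star mono} with Lemma~\ref{idempotent} to obtain $C\subseteq B^*_s\subseteq A^*_s$. The only difference is the trivial reordering of when idempotence is applied, so nothing further is needed.
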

\begin{proof}
Suppose that $N \vDash A \rhd_p B$ and $N\vDash B \rhd_p C$. Consider any spending function $s$ such that $\|s\|\le p$. By Definition~\ref{preventive sat}, it suffices to show that $C\subseteq A^*_s$. 

Note that assumption $N\vDash A \rhd_p B$, by Definition~\ref{preventive sat}, imply that $B\subseteq A^*_s$. Thus, $B^*_s\subseteq (A^*_s)^*_s$ by Corollary~\ref{star mono}. At the same time, assumption $N\vDash B \rhd_p C$ implies that $C\subseteq B^*_s$ by Definition~\ref{preventive sat}. Hence, $C\subseteq (A^*_s)^*_s$. Therefore, $C\subseteq A^*_s$ by Lemma~\ref{idempotent}.
\end{proof}

\begin{lemma}
If $N\vDash A\rhd_p B$, then  $N\vDash A\rhd_q B$, for each $q\le p$, each each social network $N=(\mathcal{A},w,\lambda,\theta)$ and all subsets $A$ and $B$ of $\mathcal{A}$.
\end{lemma}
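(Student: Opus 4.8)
The plan is to prove the soundness of the Monotonicity axiom for preventive marketing directly from Definition~\ref{preventive sat}, exploiting the fact that the condition defining $A\rhd_p B$ is a universal statement over a set of spending functions, and that this set \emph{shrinks} as $p$ decreases. Concretely, suppose $N\vDash A\rhd_p B$ and fix $q\le p$. By item~1 of Definition~\ref{preventive sat}, the hypothesis says that $B\subseteq A^*_s$ for \emph{every} spending function $s$ with $\|s\|\le p$.

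First I would observe that any spending function $s$ with $\|s\|\le q$ also satisfies $\|s\|\le q\le p$, so it falls within the scope of the hypothesis. Hence for every such $s$ we have $B\subseteq A^*_s$. This is exactly the condition, via item~1 of Definition~\ref{preventive sat}, for $N\vDash A\rhd_q B$. Therefore $N\vDash A\rhd_q B$, which is what we needed to show, and the lemma follows.

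There is essentially no obstacle here: the argument is a one-line set-inclusion remark about the ranges of the universal quantifiers, namely $\{s : \|s\|\le q\}\subseteq\{s : \|s\|\le p\}$. The only thing to be careful about is the \emph{direction} of the inequality — the universal quantifier makes the relation antitone in the budget, in contrast to the promotional case where the existential quantifier makes it monotone (cf.\ Lemma~\ref{mono lemma}). I would simply spell out that quantifying over a smaller collection of spending functions yields a weaker requirement, so decreasing the budget from $p$ to $q$ can only make $A\rhd_p B$ easier to satisfy. No auxiliary lemmas about diffusion chains are required for this particular axiom.
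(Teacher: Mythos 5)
Your argument is correct and is essentially identical to the paper's proof: both fix an arbitrary spending function $s$ with $\|s\|\le q$, note that then $\|s\|\le q\le p$, and apply the hypothesis $N\vDash A\rhd_p B$ via Definition~\ref{preventive sat} to conclude $B\subseteq A^*_s$ and hence $N\vDash A\rhd_q B$. Your added remark about the antitone effect of the universal quantifier is a correct gloss but does not change the substance.
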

\begin{proof}
Consider any spending function $s$ such that $\|s\|\le q$. By Definition~\ref{preventive sat}, it suffices to show that $B\subseteq A^*_s$. To prove this, note that $\|s\|\le q\le p$. Thus, $B\subseteq A^*_s$ due to Definition~\ref{preventive sat} and the assumption $N\vDash A\rhd_p B$ of the lemma.
\end{proof}

This concludes the proof of the soundness of our logical system for preventive marketing.

\subsection{Completeness}

The rest of this section contains the proof of the following result.
\begin{theorem}\label{preventive completeness}
If $\nvdash\phi$, then there is a social network $N=(\mathcal{A},w,\lambda,\theta)$ such that $\phi\in\Phi(\mathcal{A})$ and $N\nvDash\phi$.
\end{theorem}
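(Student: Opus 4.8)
The plan is to mirror the completeness proof for promotional marketing, building a canonical social network $N_X$ from a maximal consistent set, but with the key structural changes that reflect the universal (rather than existential) quantifier over spending functions in Definition~\ref{preventive sat}. As before, fix $\phi$ with $\nvdash\phi$, extend $\{\neg\phi\}$ to a maximal consistent set $M$ over the subformulas of $\neg\phi$ and their negations, and let $X=\{A_i\rhddash_{p_i}B_i\}_{i\le m}$ collect the atomic formulas in $M$. It then suffices to prove a truth lemma: $\psi\in M$ iff $N_X\vDash\psi$ for every subformula $\psi$ of $\neg\phi$. The propositional induction step is routine; the real work is the atomic case, which splits into the two directions of an equivalence between $N_X\vDash A\rhddash_p B$ and $X\vdash A\rhddash_p B$.

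The canonical network must be designed so that a ``cheap'' preventive campaign \emph{cannot} block the diffusions encoded by $X$, while a sufficiently expensive one can. First I would re-use the two fresh agents $\alpha_i,\beta_i$ per formula $A_i\rhddash_{p_i}B_i$, with $\beta_i$ receiving unit peer pressure from each agent of $A_i$ and from $\alpha_i$, and pushing unit pressure onto each $b\in B_i$. The crucial sign flip is in $\lambda$: here $\alpha_i$ must have \emph{negative} propensity so that spending on $\alpha_i$ \emph{raises} its effective threshold, i.e. $\alpha_i$ adopts the product from the peer pressure of $A_i$ alone \emph{unless} the adversary spends enough on $\alpha_i$ to prevent it. Concretely I would try $\lambda(\alpha_i)=-1$, $\theta(\alpha_i)=|A_i|$ (so $A_i$ adopting drives $\alpha_i$ unless $s(\alpha_i)>0$ pushes the left side of~(\ref{intro lambda}) below $\theta$), and choose the blocking cost threshold so that an adversary needs budget strictly more than $p_i$ — or at least $p_i$ — devoted to $\alpha_i$ to stop the $B_i$-diffusion; $\theta(\beta_i)=|A_i|+1$ and $\theta(b)=1$ for $b\in\mathcal{A}_0$ exactly as before, with all other agents having propensity $0$. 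One then proves the analogues of Lemmas~\ref{first base case} and~\ref{second base case}: that every formula of $X$ is satisfied in $N_X$ (no budget-$p_i$ campaign can block the $\alpha_i$–$\beta_i$–$B_i$ cascade, because blocking $\beta_i$ requires blocking $\alpha_i$, which costs more than the available budget once spread over the network), and conversely that $N_X\vDash A\rhddash_p B$ forces $X\vdash A\rhddash_p B$ by tracking, for a worst-case adversary spending on the relevant $\alpha_i$'s, which cascades survive and assembling the derivation from Reflexivity, Augmentation, Transitivity, Monotonicity, and Lemma~\ref{rhd set}.

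For the $(\Leftarrow)$ direction of the atomic case I would argue contrapositively: if $X\nvdash A\rhddash_p B$, exhibit a spending function $s$ with $\|s\|\le p$ under which some $b\in B$ is not in $A^*_s$. The natural candidate is the adversary that, guided by the failed derivation, allocates its budget $p$ among the $\alpha_i$'s so as to sever precisely the cascades that a derivation would have needed; the bound $\|s\|\le p$ must match the arithmetic of how Transitivity sums the $p_i$'s, which is where Monotonicity is needed to absorb slack (it lets us weaken $A\rhddash_{p}B$ to $A\rhddash_{q}B$ for the actual cost $q\le p$ of the surviving cascade). This is the main obstacle: getting the propensity/threshold constants on $\alpha_i$ exactly right so that the budget accounting is tight in both directions simultaneously — a campaign of cost at most $p$ blocks $A\rhddash_p B$ in $N_X$ precisely when $X\nvdash A\rhddash_p B$. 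I expect the soundness-style inequality $\sum_{\text{blocked }i}p_i$ versus $\sum_{a}s(a)$, the mirror of the chain of inequalities at the end of Lemma~\ref{second base case}, to be the delicate computation; once the constants are pinned down, the remaining lemmas and the truth-lemma induction go through as in Section~\ref{promotional section}. Finally, from $\neg\phi\in M$ and the truth lemma, $N_X\nvDash\phi$, completing the proof.
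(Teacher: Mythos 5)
Your high-level strategy (canonical network, truth lemma, propositional induction) is the right shell, but the core of your construction --- one gadget $\alpha_i,\beta_i$ per atomic formula $A_i\rhd_{p_i}B_i$ in the maximal consistent set, with the blocking cost of the $i$-th cascade calibrated to just exceed $p_i$ --- does not survive the falsification direction of the truth lemma, and this is exactly where the preventive case genuinely differs from the promotional one. In your network, distinct cascades are routed through disjoint $\alpha_i$'s, so the adversary's cost of blocking several cascades is the \emph{sum} of their individual costs; but the semantics of Definition~\ref{preventive sat} places no such additivity on the adversary, who may kill many influences with one expenditure on a shared agent. Concretely, the set $\{a\rhd_2 b,\ a\rhd_{2.5} b,\ \neg(a\rhd_3 b)\}$ is consistent (take one agent $b$ with $w(a,b)=1$, $\lambda(b)=-1$, $\theta(b)=-1.7$: any budget up to $2.7$ fails to block, budget $3$ succeeds), yet in your canonical network refuting $a\rhd_3 b$ would require blocking both cascades at total cost exceeding $2+2.5>3$, so your model satisfies $a\rhd_3 b$ while it is not derivable --- the truth lemma fails. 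Your sketch of the $(\Leftarrow)$ direction (``allocate the budget among the $\alpha_i$'s to sever the needed cascades'') presupposes precisely the budget arithmetic of the \emph{promotional} Transitivity axiom ($p+q$), whereas the preventive Transitivity keeps the budget fixed and Monotonicity only lowers it, so no per-formula accounting of this kind can be made tight.

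The paper's proof avoids this by abandoning the per-formula gadgets altogether. It takes $X$ to be a maximal consistent subset of the \emph{full} language $\Phi(\mathcal{A}_0)$, defines the syntactic closure $A^+_p=\{a\in\mathcal{A}_0 \mid X\vdash A\rhd_p a\}$, and attaches one pair $\alpha(\ell),\beta(\ell)$ to each \emph{labeled closure} $\ell=(A^+_p,p)$ with $p$ ranging over the finitely many subscripts in $\phi$ (separated by an $\epsilon>0$, which also resolves the ``at least $p_i$ vs.\ strictly more'' issue you left open). The agent $\alpha(\ell)$ has no incoming influence, propensity $-1$ and threshold $\epsilon-p$, so a single expenditure of $p$ on it blocks it; $\beta(\ell)$ is an OR-gate fed by $\alpha(\ell)$ and by every agent of $\mathcal{A}_0\setminus A^+_p$, and feeds back into every agent of $\mathcal{A}_0\setminus A^+_p$; each $a\in\mathcal{A}_0$ is an AND-gate whose threshold equals the number of closures excluding it. The key ``lock lemma'' then shows that spending the whole budget $p$ on the single agent $\alpha(A^+_p,p)$ confines the diffusion from any $B\subseteq A^+_p$ inside $A^+_p$, which is what lets one falsify every non-derivable $B\rhd_q C$ with a one-point spending function of norm exactly $q$, while the $\epsilon$-separation together with Monotonicity and Transitivity gives the satisfaction direction. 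You would need to replace your per-formula gadgets by some such closure-indexed construction for the argument to go through.
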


Suppose that $\nvdash\phi$. It suffices to construct a ``canonical" social network $N=(\mathcal{A},w,\lambda,\theta)$ such that $N\nvDash\phi$. Define $P\subset \mathbb{R}$ to be the finite set of all subscripts that appear in formula $\phi$. Let $\epsilon>0$ be such that $|p_1-p_2|>\epsilon$ for all $p_1,p_2\in P$ where  $p_1\neq p_2$. Let $\mathcal{A}_0$ be the finite set of all agents that appear in formula $\phi$ and $X$ be a maximal consistent subset of $\Phi(\mathcal{A}_0)$ containing formula $\neg\phi$.

In Section~\ref{social networks section}, we have introduced closures $A^k_s$ and $A^*_s$ of a set of agents $A$. Both of these closures are {\em semantic} in the sense that they are defined in terms of a given social network. We are about to introduce another closure $A^+_p$ that will be used to construct the canonical social network $N$. Unlike closures $A^k_s$ and $A^*_s$, closure the $A^+_p$ is {\em syntactic} because it is defined in terms of provability of certain statements in our logical system.

\begin{definition}\label{A+}
$A^{+}_p=\{a\in\mathcal{A}_0\;|\; X\vdash A\rhddash_p  a\}$, for any set of agents $A\subseteq\mathcal{A}_0$ and any $p\ge 0$.
\end{definition}

\begin{lemma}\label{ArhdA+}
$X \vdash A\rhddash_p A^+_p$, for any $A\subseteq\mathcal{A}_0$ and any $p\ge 0$.
\end{lemma}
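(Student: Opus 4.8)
The goal is to show $X \vdash A \rhddash_p A^+_p$, where $A^+_p = \{a \in \mathcal{A}_0 \mid X \vdash A \rhddash_p a\}$. The natural approach is to reduce this to Lemma~\ref{rhd set}, which is exactly the statement that if $X \vdash B \rhddash_p c$ for each $c \in C$, then $X \vdash B \rhddash_p C$.

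The plan is as follows. First I would observe that by the very definition of $A^+_p$ (Definition~\ref{A+}), for every agent $a \in A^+_p$ we have $X \vdash A \rhddash_p a$. Second, I would note that $A^+_p \subseteq \mathcal{A}_0$, again directly from Definition~\ref{A+}, so that the hypotheses of Lemma~\ref{rhd set} are met with $B := A$ and $C := A^+_p$ (both are subsets of $\mathcal{A}_0$, and $p \ge 0$). Applying Lemma~\ref{rhd set} then immediately yields $X \vdash A \rhddash_p A^+_p$, which is the desired conclusion.

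I do not anticipate any real obstacle here: the lemma is essentially a packaging of Lemma~\ref{rhd set} together with the definition of the syntactic closure $A^+_p$. The only mild subtlety is a vacuous-case check — when $A^+_p = \varnothing$, the conclusion $X \vdash A \rhddash_p \varnothing$ is still needed, but this is covered by the base case of Lemma~\ref{rhd set} (an instance of the Reflexivity axiom), so nothing extra is required. Thus the entire proof is two or three lines invoking Definition~\ref{A+} and Lemma~\ref{rhd set}.
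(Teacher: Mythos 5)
Your proof is correct and matches the paper's own argument exactly: the paper also derives the lemma directly from Definition~\ref{A+} together with Lemma~\ref{rhd set}. Your additional checks (that $A^+_p\subseteq\mathcal{A}_0$ and that the empty case is handled by the base case of Lemma~\ref{rhd set}) are fine but not needed beyond what the paper states.
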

\begin{proof}
The statement of the lemma follows from Definition~\ref{A+} and Lemma~\ref{rhd set}.
\end{proof}

Generally speaking, it is possible that $A^+_p=A^+_q$ for some $p$ and $q$ such that $p\neq q$. In the construction of the canonical social network $N$ it will be convenient to distinguish closures $A^+_p$ for different values of parameter $p$. In such situations, instead of closure $A^+_p$ we consider labeled closure, formally defined as pair $(A^+_p,p)$.

\begin{definition}\label{Z}
Let $\mathbb{L}=\{(A^+_p,p)\;|\; A\subseteq \mathcal{A}_0, p\in P\}$.
\end{definition}

Next we define the canonical network $N=(\mathcal{A},w,\lambda,\theta)$. Besides agents in set $\mathcal{A}_0$, our social network also has two additional agents for each $\ell\in \mathbb{L}$. By analogy with the canonical social network $N_X$ from the proof of completeness for promotional marking, we call these additional agents $\alpha(\ell)$ and $\beta(\ell)$.

\begin{definition}\label{A}
$\mathcal{A}=\mathcal{A}_0\cup \{\alpha(\ell),\beta(\ell)\;|\; \ell\in \mathbb{L}\}$.
\end{definition}

\begin{figure}[ht]
\begin{center}
\vspace{0mm}
\scalebox{.6}{\includegraphics{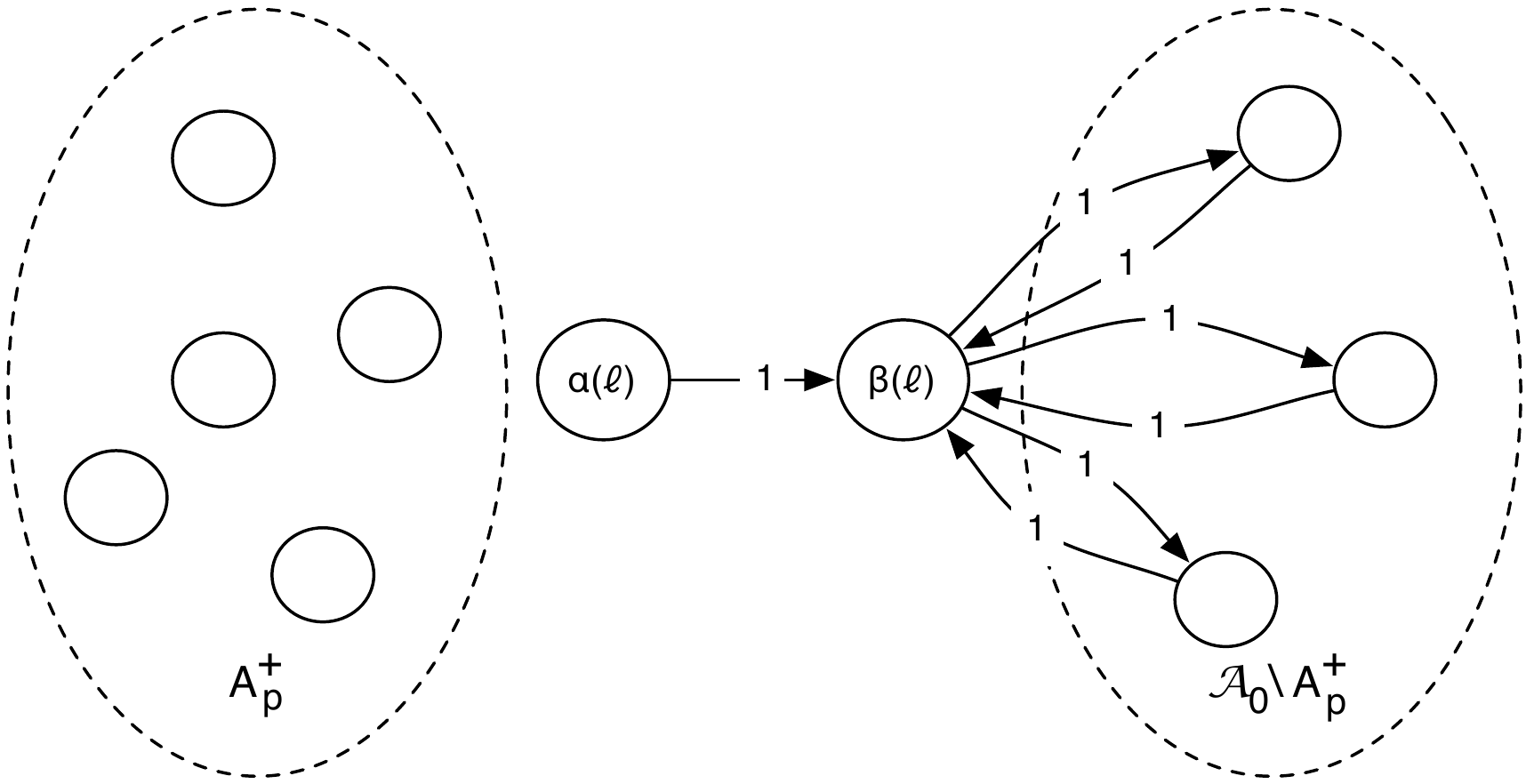}}
\vspace{0mm}
\footnotesize
\caption{Towards the definition of the influence function in the canonical social network.}\label{second canonical network figure}
\vspace{0cm}
\end{center}
\vspace{0mm}
\end{figure}

For any $\ell=(A^+_p,p)$, we assume that upon adopting the product agent $\alpha(\ell)$ puts peer pressure on agent $\beta(\ell)$, agent $\beta(\ell)$ puts peer pressure on each agent in set $\mathcal{A}_0\setminus A^+_p$ and each agent in set $\mathcal{A}_0\setminus A^+_p$, in turn, puts peer pressure on agent $\beta(\ell)$. The peer pressure structure is illustrated in Figure~\ref{second canonical network figure}. Note that the same agent $a\in \mathcal{A}_0$ can belong to set $\mathcal{A}_0\setminus A^+_p$ for several different values of $p$. Such agent $a$ could experience (or put) peer pressure from (on) several different agents $\beta(\ell)$. The structure is formally specified in Definition~\ref{canonical w minus}.

\begin{definition}\label{canonical w minus}
For any $a,b\in \mathcal{A}$,
$$
w(a,b)=
\begin{cases}
1, & \mbox{if $a=\alpha(\ell)$ and $b=\beta(\ell)$ for some $\ell\in \mathbb{L}$},\\
1, & \mbox{if $a\in \mathcal{A}_0\setminus A^+_p$ and $b=\beta(A^+_p,p)\in \mathbb{L}$},\\
1, & \mbox{if $a=\beta(A^+_p,p)\in \mathbb{L}$ and $b\in \mathcal{A}_0\setminus A^+_p$},\\
0, & \mbox{otherwise}.
\end{cases}
$$
\end{definition}

We assume that only agents $\{\alpha(\ell)\;|\;\ell\in \mathbb{L}\}$ are responsive to preventive marketing. This is formally captured in the definition of the propensity function below.

\begin{definition}\label{canonical lambda minus}
For any $a\in \mathcal{A}$,
$$
\lambda(a)=
\begin{cases}
-1, & \mbox{if $a=\alpha(\ell)$ for some $\ell\in \mathbb{L}$},\\
0, & \mbox{otherwise}.
\end{cases}
$$
\end{definition}

To finish the definition of canonical social network $N=(\mathcal{A},w,\lambda,\theta)$, we only need to define threshold function $\theta(a)$ for each $a\in\mathcal{A}$. There are three different cases to consider: $a=\alpha(\ell)$ for some $\ell\in \mathbb{L}$, $a=\beta(\ell)$ for some $\ell\in \mathbb{L}$, and $a\in\mathcal{A}_0$.

Recall that by Definition~\ref{canonical w minus} and Definition~\ref{canonical lambda minus}, agent $\alpha(\ell)$ is not responsive to peer pressure of any other agent. It is only responsive to the  marketing pressure with propensity $-1$. We set the threshold value of this agent to $\epsilon-p$, where $\ell=(A^+_p,p)$. Thus, if an amount at least $p$ is spent on the preventive marketing to this agent, it will not adopt the product.

We set threshold value of agent $\beta(\ell)$ to 1. Thus, for each $\ell=(A^+_p,p)$,  if either agent $\alpha(\ell)$ or any of the agents in the set  $\mathcal{A}_0\setminus A^+_p$ adopts the product, then agent $\beta(\ell)$ will also adopt the product.

Finally, recall from Definition~\ref{canonical w minus} that agent $a\in\mathcal{A}_0$ can experience peer pressure from any agent $\beta(A^+_p,p)$ such that $a\in \mathcal{A}_0\setminus A^+_p$. There are exactly $|\{(A^+_p,p)\in \mathbb{L}\;|\; a\in \mathcal{A}_0\setminus A^+_p\}|$ such $\beta$-agents. We set the threshold value $\theta(a)$ high enough so that it adopts the product only if {\em all} of these $\beta$-agents adopt the product.

The next definition captures the three cases discussed above.

\begin{definition}\label{canonical theta minus}
For any $a\in \mathcal{A}$,
$$
\theta(a)=
\begin{cases}
\epsilon-p, & \mbox{if $a=\alpha(A^+_p,p)$},\\
1, & \mbox{if $a=\beta(A^+_p,p)$},\\
|\{(A^+_p,p)\in \mathbb{L}\;|\; a\in \mathcal{A}_0\setminus A^+_p\}|, & \mbox{if $a\in\mathcal{A}_0$}.
\end{cases}
$$
\end{definition}

For any $c\in\mathcal{A}_0$, we have chosen $\theta(c)$ to be equal to the number of $\beta(A^+_p,p)$ such that $c\in\mathcal{A}_0\setminus A^+_p$. Thus, if all such $\beta$-agents adopt the product, then the total peer pressure on agent $c$ would reach $\theta(c)$ and agent $c$ also would adopt the product. This observation is formalized by the next lemma.

\begin{lemma}\label{or lemma}
Let $c$ be an agent in $\mathcal{A}_0$, set $B$ be a subset of $\mathcal{A}_0$, and $s$ be an arbitrary spending function for the social network $N$. If for each $(A^+_p,p)\in \mathbb{L}$ at least one of the following is true: (i) $c\in A^+_p$, (ii) $\beta(A^+_p,p)\in B^*_{s}$, then $c\in B^*_{s}$.
\end{lemma}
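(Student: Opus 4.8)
The plan is to fix a spending function $s$ and the agent $c$, and argue by contradiction: assume the hypotheses hold but $c \notin B^*_s$. Since $c \in \mathcal{A}_0$, its threshold in the canonical network is $\theta(c) = |\{(A^+_p,p)\in\mathbb{L}\;|\; c\in\mathcal{A}_0\setminus A^+_p\}|$ by Definition~\ref{canonical theta minus}, and $\lambda(c)=0$ by Definition~\ref{canonical lambda minus}. By Corollary~\ref{A*=Ak}, pick $n$ with $B^*_s = B^n_s$. If $c$ never enters the closure, then in particular $c \notin B^{n+1}_s$, so by Definition~\ref{Ak} the peer pressure $\sum_{a\in B^n_s} w(a,c)$ falls short of $\theta(c)$. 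The idea is to read off who can put pressure on $c$: by Definition~\ref{canonical w minus}, the only agents $a$ with $w(a,c)\neq 0$ are the agents $\beta(A^+_p,p)$ with $c\in\mathcal{A}_0\setminus A^+_p$, and each contributes exactly $1$. Hence the pressure on $c$ equals the number of such $\beta$-agents lying in $B^n_s$, and this being strictly less than $\theta(c)$ means there is at least one $\ell_0 = (A^+_{p_0},p_0)\in\mathbb{L}$ with $c\in\mathcal{A}_0\setminus A^+_{p_0}$ yet $\beta(\ell_0)\notin B^n_s = B^*_s$.

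Now apply the hypothesis to this particular $\ell_0$: since $c\notin A^+_{p_0}$, option (i) fails, so option (ii) must hold, i.e. $\beta(\ell_0)\in B^*_s$ — contradicting what we just derived. This closes the argument. So the core of the proof is just the structural observation, from Definition~\ref{canonical w minus} and Definition~\ref{canonical theta minus}, that $c$'s threshold is exactly calibrated to the number of $\beta$-agents that can push on it, so failing to adopt forces one such $\beta$-agent to be outside the closure.

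I would present this slightly more directly rather than purely by contradiction: pick $n$ with $B^*_s = B^n_s$; for every $\ell=(A^+_p,p)\in\mathbb{L}$ with $c\in\mathcal{A}_0\setminus A^+_p$, the hypothesis combined with $c\notin A^+_p$ gives $\beta(\ell)\in B^*_s = B^n_s$, and each such $\beta(\ell)$ contributes $w(\beta(\ell),c)=1$; since there are exactly $\theta(c)$ such pairs $\ell$, we get $\sum_{a\in B^n_s} w(a,c) \ge \theta(c)$, whence $c\in B^{n+1}_s \subseteq B^*_s$ by Definition~\ref{Ak} and Definition~\ref{A*}. The only mild subtlety — the "main obstacle", such as it is — is making sure the counting is exact: distinct labeled closures $\ell\in\mathbb{L}$ yield distinct agents $\beta(\ell)$ (which is why $\mathbb{L}$ is defined with labels in Definition~\ref{Z} and Definition~\ref{A}), so the $\theta(c)$ pairs contribute $\theta(c)$ distinct summands, each equal to $1$; I would state this explicitly but not belabor it.
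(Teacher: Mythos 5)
Your proposal is correct and follows essentially the same route as the paper's own proof: fix $n$ with $B^*_s=B^n_s$, observe that the hypothesis forces every $\beta(A^+_p,p)$ with $c\in\mathcal{A}_0\setminus A^+_p$ into $B^n_s$, and then use Definitions~\ref{canonical w minus}, \ref{canonical lambda minus}, and \ref{canonical theta minus} to conclude that the peer pressure on $c$ reaches $\theta(c)$, so $c\in B^{n+1}_s\subseteq B^*_s$. Your explicit remark that distinct labels $\ell\in\mathbb{L}$ give distinct $\beta$-agents is a fine (and correct) clarification of a point the paper leaves implicit.
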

\begin{proof}
By Corollary~\ref{A*=Ak}, $B^*_{s}=B^n_{s}$ for some $n\ge 0$. Thus, by the assumption of this lemma, for each $(A^+_p,p)\in \mathbb{L}$ at least one of the following is true: (i) $c\in A^+_p$, (ii) $\beta(A^+_p,p)\in B^n_{s}$. In other words, 
$\{\beta(A^+_p,p)\;|\; c\in \mathcal{A}_0\setminus A^+_p\}\subseteq B^n_{s}$. Hence,
$$
\sum_{b\in B^n_{s}}w(b,c) \ge \sum_{\ell\in \{(A^+_p,p)\in \mathbb{L}\;|\; c\in \mathcal{A}_0\setminus A^+_p\}} w(\beta(\ell),c).
$$
Thus, by Definition~\ref{canonical w minus},
$$
\sum_{b\in B^n_{s}}w(b,c) \ge \sum_{\ell\in \{(A^+_p,p)\in \mathbb{L}\;|\; c\in \mathcal{A}_0\setminus A^+_p\}} 1=
|\{(A^+_p,p)\in \mathbb{L}\;|\; c\in \mathcal{A}_0\setminus A^+_p\}|.
$$
At the same time $\lambda(c)=0$ by Definition~\ref{canonical lambda minus}. Hence,
$$
\lambda(c)\cdot s(c) + \sum_{b\in B^n_{s}}w(b,c) \ge |\{(A^+_p,p)\in \mathbb{L}\;|\; c\in \mathcal{A}_0\setminus A^+_p\}|.
$$
Then, by Definition~\ref{canonical theta minus},
$$
\lambda(c)\cdot s(c) + \sum_{b\in B^n_{s}}w(b,c) \ge \theta(c).
$$
Thus, $c\in B^{n+1}_{s}$ by Definition~\ref{Ak}. Therefore, $c\in B^*_{s}$ by Definition~\ref{A*}.
\end{proof}

Referring back to Figure~\ref{second canonical network figure}, note that if an agent in set $\mathcal{A}_0\setminus A^+_p$ adopts the product, then it will put enough pressure on $\beta(A^+_p,p)$ so that agent $\beta(A^+_p,p)$ also adopts the product. We formally state this observation as the lemma below.

\begin{lemma}\label{if rhs then beta}
If there is $b_0\in B^*_{s}$ such that $b_0\in\mathcal{A}_0\setminus A^+_p$, then $\beta(A^+_p,p)\in B^*_{s}$, where $(A^+_p,p)\in \mathbb{L}$, set $B$ is a subset of $\mathcal{A}_0$, and $s$ is an arbitrary spending function for the social network $N$.
\end{lemma}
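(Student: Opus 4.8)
The plan is to argue directly from the recursive definition of the diffusion chain in the canonical network $N$, exploiting the fact that every agent of $\mathcal{A}_0\setminus A^+_p$ exerts peer pressure precisely on $\beta(A^+_p,p)$, an agent whose threshold is only $1$.

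First I would use Corollary~\ref{A*=Ak} to fix an $n\ge 0$ with $B^*_s=B^n_s$; the hypothesis then says $b_0\in B^n_s$ for some $b_0\in\mathcal{A}_0\setminus A^+_p$. Next I would bound the pressure on $\beta(A^+_p,p)$ after step $n$. Since $\beta(A^+_p,p)$ is a $\beta$-agent, Definition~\ref{canonical lambda minus} gives $\lambda(\beta(A^+_p,p))=0$, so the marketing term $\lambda(\beta(A^+_p,p))\cdot s(\beta(A^+_p,p))$ vanishes no matter what $s$ is. For the peer term, the second clause of Definition~\ref{canonical w minus} gives $w(b_0,\beta(A^+_p,p))=1$ precisely because $b_0\in\mathcal{A}_0\setminus A^+_p$, and all remaining summands are non-negative, so $\sum_{a\in B^n_s}w(a,\beta(A^+_p,p))\ge 1$. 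As $\theta(\beta(A^+_p,p))=1$ by Definition~\ref{canonical theta minus}, the adoption inequality of Definition~\ref{Ak} is met at step $n+1$; hence $\beta(A^+_p,p)\in B^{n+1}_s$, and therefore $\beta(A^+_p,p)\in B^*_s$ by Definition~\ref{A*}.

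I do not expect a genuine obstacle here: the argument is a single unwinding of the recursion. The only points that need a little care are (i) reading the influence value $w(b_0,\beta(A^+_p,p))$ off the correct clause of Definition~\ref{canonical w minus}, which is where the hypothesis $b_0\notin A^+_p$ is used, and (ii) observing that the propensity of every $\beta$-agent is $0$, so that the arbitrary spending function $s$ is simply irrelevant to this step of the construction.
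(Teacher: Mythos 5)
Your proof is correct and follows essentially the same route as the paper's: fix $n$ with $B^*_s=B^n_s$ via Corollary~\ref{A*=Ak}, read off $w(b_0,\beta(A^+_p,p))=1$ from Definition~\ref{canonical w minus}, use $\lambda(\beta(A^+_p,p))=0$ and $\theta(\beta(A^+_p,p))=1$ to verify the adoption inequality, and conclude by Definitions~\ref{Ak} and~\ref{A*}. No gaps.
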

\begin{proof}
By Corollary~\ref{A*=Ak}, $B^*_{s}=B^n_{s}$ for some $n\ge 0$. At the same time, by Definition~\ref{canonical w minus}, assumption $b_0\in\mathcal{A}_0\setminus A^+_p$ implies that $w(b_0,\beta(A^+_p,p))=1$. Thus, 
$$
\sum_{b\in B^n_{s}}w(b,\beta(A^+_p,p))\ge w(b_0,\beta(A^+_p,p))=1,
$$
since $b_0\in B^*_{s}=B^n_{s}$.
Note that $\lambda(\beta(A^+_p,p))=0$ by Definition~\ref{canonical lambda minus}. Hence,
$$
\lambda(\beta(A^+_p,p))\cdot s(\beta(A^+_p,p)) + \sum_{b\in B^n_{s}}w(b,\beta(A^+_p,p))\ge w(b_0,\beta(A^+_p,p))=1.
$$
Thus, by Definition~\ref{canonical theta minus},
$$
\lambda(\beta(A^+_p,p))\cdot s(\beta(A^+_p,p)) + \sum_{b\in B^n_{s}}w(b,\beta(A^+_p,p))\ge \theta(\beta(A^+_p,p)).
$$
Hence, $\beta(A^+_p,p)\in B^{n+1}_{s}$ by Definition~\ref{Ak}. Therefore, $\beta(A^+_p,p)\in B^*_{s}$ by Definition~\ref{A*}.
\end{proof}

Recall that we have set the threshold value of agent $\alpha(A^+_p,p)$ to be $\epsilon - p$, so that by spending at least $p$ on preventive marketing one would prevent an adoption of the product by agent $\alpha(A^+_p,p)$. At the same time, spending no more than $p-\epsilon$ will result in $\alpha(A^+_p,p)$ adopting the product. Once agent $\alpha(A^+_p,p)$ adopts the product, it will put enough pressure on agent $\beta(A^+_p,p)$ to adopt the product as well. This observation is formally stated below. 

\begin{lemma}\label{if alpha then beta}
If $s(\alpha(A^+_p,p))\le p-\epsilon$, then $\beta(A^+_p,p)\in B^*_{s}$,
where $(A^+_p,p)\in \mathbb{L}$, set $B$ is a subset of $\mathcal{A}_0$, and $s$ is an arbitrary spending function for the social network $N$.
\end{lemma}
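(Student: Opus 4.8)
The plan is to mimic the two-step diffusion argument used in the proof of Lemma~\ref{first base case}, adapted to the preventive setting. First I would observe that agent $\alpha(A^+_p,p)$ adopts the product already at the first step of the diffusion starting from $B$. By Definition~\ref{canonical w minus}, no agent exerts peer pressure on $\alpha(A^+_p,p)$, so in the adoption condition of Definition~\ref{Ak} for this agent the only contribution is the marketing pressure $\lambda(\alpha(A^+_p,p))\cdot s(\alpha(A^+_p,p))$. By Definition~\ref{canonical lambda minus} and Definition~\ref{canonical theta minus}, we have $\lambda(\alpha(A^+_p,p))=-1$ and $\theta(\alpha(A^+_p,p))=\epsilon-p$, so this condition reads $-s(\alpha(A^+_p,p))\ge \epsilon-p$, which is precisely the hypothesis $s(\alpha(A^+_p,p))\le p-\epsilon$. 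Hence $\alpha(A^+_p,p)\in B^1_s$.

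Next I would propagate this to $\beta(A^+_p,p)$. By Definition~\ref{canonical w minus} we have $w(\alpha(A^+_p,p),\beta(A^+_p,p))=1$, and by Definition~\ref{canonical lambda minus} we have $\lambda(\beta(A^+_p,p))=0$. Since all influence values are non-negative and $\alpha(A^+_p,p)\in B^1_s$, the total pressure on $\beta(A^+_p,p)$ from the agents in $B^1_s$ is at least $w(\alpha(A^+_p,p),\beta(A^+_p,p))=1=\theta(\beta(A^+_p,p))$ by Definition~\ref{canonical theta minus}. Therefore $\beta(A^+_p,p)\in B^2_s$ by Definition~\ref{Ak}, and consequently $\beta(A^+_p,p)\in B^*_s$ by Definition~\ref{A*}.

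There is no genuine obstacle here; the argument is a routine unwinding of the definitions of the canonical network. The only points worth a moment's care are that the propensity of the $\alpha$-agents is \emph{negative}, which is why the hypothesis is an upper bound on the spending on $\alpha(A^+_p,p)$ rather than a lower bound, and that any peer-pressure contributions to $\beta(A^+_p,p)$ coming from agents other than $\alpha(A^+_p,p)$ can only help --- since $w$ is non-negative --- and so need not be tracked.
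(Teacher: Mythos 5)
Your proof is correct and follows essentially the same two-step argument as the paper: first $\alpha(A^+_p,p)\in B^1_s$ via the negative propensity and threshold $\epsilon-p$, then $\beta(A^+_p,p)\in B^2_s\subseteq B^*_s$ via the unit influence edge and threshold $1$. No gaps.
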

\begin{proof}
Suppose that $s(\alpha(A^+_p,p))\le p-\epsilon$. Note that $\lambda(\alpha(A^+_p,p))=-1$ by Definition~\ref{canonical lambda minus} and $w(b,\alpha(A^+_p,p))=0$ for each $b\in\mathcal{A}$ by Definition~\ref{canonical w minus}. Thus,
\begin{eqnarray*}
&&\lambda(\alpha(A^+_p,p))\cdot s(\alpha(A^+_p,p)) + \sum_{b\in B^0_{s}}w(b,\alpha(A^+_p,p))\\
&&=-1\cdot s(\alpha(A^+_p,p)) + 0 = -s(\alpha(A^+_p,p)) \ge \epsilon - p.
\end{eqnarray*}
Thus, by Definition~\ref{canonical theta minus},
$$
\lambda(\alpha(A^+_p,p))\cdot s(\alpha(A^+_p,p)) + \sum_{b\in B^0_{s}}w(b,\alpha(A^+_p,p)) \ge \theta(\alpha(A^+_p,p)).
$$
Hence, $\alpha(A^+_p,p)\in B^1_{s}$ by Definition~\ref{Ak}. Since $w$ is a non-negative function, by Definition~\ref{canonical w minus},
\begin{eqnarray*}
\sum_{b\in B^1_{s}}w(b,\beta(A^+_p,p))&=& w(\alpha(A^+_p,p),\beta(A^+_p,p))+ 
\sum_{b\in B^1_{s}\setminus \{\alpha(A^+_p,p)\}}w(b,\beta(A^+_p,p))\\
&\ge& w(\alpha(A^+_p,p),\beta(A^+_p,p)) = 1.
\end{eqnarray*}
Note that $\lambda(\beta(A^+_p,p))=0$ by Definition~\ref{canonical lambda minus}. Thus,
$$
\lambda(\beta(A^+_p,p))\cdot s(\beta(A^+_p,p)) + \sum_{b\in B^1_{s}}w(b,\beta(A^+_p,p))\ge 1.
$$
Hence, by Definition~\ref{canonical theta minus},
$$
\lambda(\beta(A^+_p,p))\cdot s(\beta(A^+_p,p)) + \sum_{b\in B^1_{ s}}w(b,\beta(A^+_p,p))\ge \theta(\beta(A^+_p,p)).
$$
Thus, $\beta(A^+_p,p)\in B^2_{ s}$ by Definition~\ref{Ak}. Therefore, $\beta(A^+_p,p)\in B^*_{ s}$ by Definition~\ref{A*}.
\end{proof}

The next lemma states that if we do spend at least $p$ on preventive marketing to agent $\alpha(A^+_p,p)$, then this agent will never adopt the product.

\begin{lemma}\label{alpha is locked}
For every $(A^+_p,p)\in \mathbb{L}$ and every spending function $s$, if $s(\alpha(A^+_p,p))\ge p$, then $\alpha(A^+_p,p)\notin A^*_{ s}$.
\end{lemma}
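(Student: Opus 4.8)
The plan is to prove by induction on $n$ that $\alpha(A^+_p,p)\notin A^n_s$ for every $n\ge 0$; the statement of the lemma then follows immediately from Definition~\ref{A*}. Here $A$ is understood to be a subset of $\mathcal{A}_0$, as in the preceding lemmas of this section.

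For the base case $n=0$, recall that $A^0_s=A\subseteq\mathcal{A}_0$ by Definition~\ref{Ak}, while $\alpha(A^+_p,p)\notin\mathcal{A}_0$ by Definition~\ref{A}. Hence $\alpha(A^+_p,p)\notin A^0_s$. For the induction step, assume $\alpha(A^+_p,p)\notin A^n_s$ and suppose, toward a contradiction, that $\alpha(A^+_p,p)\in A^{n+1}_s$. By Definition~\ref{Ak} together with the induction hypothesis, this forces
$$
\lambda(\alpha(A^+_p,p))\cdot s(\alpha(A^+_p,p)) + \sum_{a\in A^n_s}w(a,\alpha(A^+_p,p)) \ge \theta(\alpha(A^+_p,p)).
$$
I would then simplify each term: by Definition~\ref{canonical w minus} no agent exerts peer pressure on any $\alpha$-agent, so the sum vanishes; by Definition~\ref{canonical lambda minus}, $\lambda(\alpha(A^+_p,p))=-1$; and by Definition~\ref{canonical theta minus}, $\theta(\alpha(A^+_p,p))=\epsilon-p$. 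The displayed inequality thus reduces to $-s(\alpha(A^+_p,p))\ge\epsilon-p$, that is, $s(\alpha(A^+_p,p))\le p-\epsilon$. Since $\epsilon>0$, this contradicts the hypothesis $s(\alpha(A^+_p,p))\ge p$, which completes the induction.

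There is essentially no hard step here: the argument is a direct unwinding of the definitions of $w$, $\lambda$, and $\theta$ on $\alpha$-agents, plus the fact that $\epsilon$ was chosen to be strictly positive. The only point that needs a moment of care is the base case, which relies on the fresh agents $\alpha(\ell),\beta(\ell)$ having been deliberately placed outside $\mathcal{A}_0$ in Definition~\ref{A}, so that they cannot already belong to the seed set $A$.
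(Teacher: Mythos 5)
Your proof is correct and follows essentially the same route as the paper: induction on the diffusion step, with the base case resting on $\alpha(A^+_p,p)\notin\mathcal{A}_0\supseteq A$ and the induction step unwinding Definitions~\ref{canonical w minus}, \ref{canonical lambda minus}, and \ref{canonical theta minus} to get $s(\alpha(A^+_p,p))\le p-\epsilon$, contradicting $s(\alpha(A^+_p,p))\ge p$ since $\epsilon>0$. Your handling of $\lambda(\alpha(A^+_p,p))=-1$ is in fact cleaner than the paper's displayed simplification, which contains a sign typo at that point.
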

\begin{proof}
By Definition~\ref{A*}, it suffices to show that $\alpha(A^+_p,p)\notin A^k_{ s}$ for each $k\ge 0$. We prove this statement by induction on $k$. 

\noindent{\em Base Case:} By Definition~\ref{Ak}, we have $A^0_{ s}=A$. At the same time, by Definition~\ref{Z}, $(A^+_p,p)\in \mathbb{L}$ implies that $A\subseteq \mathcal{A}_0$. Hence, $A^0_{ s}\subseteq \mathcal{A}_0$. Therefore, $\alpha(A^+_p,p)\notin A^0_{ s}$ by Definition~\ref{A}.

\noindent{\em Induction Step:} Suppose that $\alpha(A^+_p,p)\notin A^k_{s}$ and $\alpha(A^+_p,p)\in A^{k+1}_{s}$. 
Thus, by Definition~\ref{Ak},
 $$
 \lambda(\alpha(A^+_p,p))\cdot s(\alpha(A^+_p,p))+ \sum_{b\in A^k_{s}}w(b,\alpha(A^+_p,p))\ge \theta(\alpha(A^+_p,p)).
 $$
 Hence, by Definition~\ref{canonical lambda minus}, Definition~\ref{canonical w minus}, and Definition~\ref{canonical theta minus},
 $$
 1\cdot s(\alpha(A^+_p,p))+ 0\ge \epsilon - p.
 $$
 Thus, $s(\alpha(A^+_p,p))\le p-\epsilon$. Therefore, $s(\alpha(A^+_p,p))< p$ since $\epsilon>0$. This contradicts the assumption $s(\alpha(A^+_p,p))\ge p$ of the lemma.
\end{proof}

As we have seen in the previous lemma, spending at least $p$ on preventive marketing to agent $\alpha(A^+_p,p)$ prevents it from adopting the product. We now show that spending at least $p$ on agent $\alpha(A^+_p,p)$ prevents all agents in set $\mathcal{A}_0\setminus A^+_p$ from adopting the product. See Figure~\ref{second canonical network figure}.

\begin{lemma}\label{lock lemma}  
$A^*_{s}\cap\mathcal{A}_0\subseteq A^+_p$, where $(A^+_p,p)\in \mathbb{L}$ and $s$ is an arbitrary spending function such that $s(\alpha(A^+_p,p))\ge p$.
\end{lemma}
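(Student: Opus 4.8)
The plan is to fix the label $(A^+_p,p)\in\mathbb{L}$ and a spending function $s$ with $s(\alpha(A^+_p,p))\ge p$, and to prove by induction on $k\ge 0$ the conjunction of two statements: (a)~$A^k_s\cap\mathcal{A}_0\subseteq A^+_p$ and (b)~$\beta(A^+_p,p)\notin A^k_s$. Since $A^*_s=\bigcup_{k\ge 0}A^k_s$ by Definition~\ref{A*}, statement~(a) for all $k$ immediately yields the lemma. Clause~(b) has to be carried through the induction because, by Definition~\ref{canonical w minus}, the agents of $\mathcal{A}_0\setminus A^+_p$ are exactly the agents of $\mathcal{A}_0$ that receive peer pressure from $\beta(A^+_p,p)$, so keeping $\beta(A^+_p,p)$ inactive is precisely what confines the diffusion to $A^+_p$; conversely, the only agents able to switch $\beta(A^+_p,p)$ on are $\alpha(A^+_p,p)$, which is ruled out by Lemma~\ref{alpha is locked} since $s(\alpha(A^+_p,p))\ge p$, and the agents of $\mathcal{A}_0\setminus A^+_p$ --- whence (b) rests on (a), forcing a simultaneous induction.

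In the base case $A^0_s=A\subseteq\mathcal{A}_0$ by Definition~\ref{Ak} (recall $A^+_p$ is only defined for subsets of $\mathcal{A}_0$): (a) holds because $X\vdash A\rhd_p a$ for each $a\in A$ by the Reflexivity axiom, so $A\subseteq A^+_p$ by Definition~\ref{A+}, and (b) holds because $\beta(A^+_p,p)\notin\mathcal{A}_0$ by Definition~\ref{A}. For the induction step, assume (a) and (b) at stage $k$. If $\beta(A^+_p,p)\in A^{k+1}_s$, then by the hypothesis~(b) it lies in $A^{k+1}_s\setminus A^k_s$, so Definition~\ref{Ak} together with $\lambda(\beta(A^+_p,p))=0$ and $\theta(\beta(A^+_p,p))=1$ gives $\sum_{b\in A^k_s}w(b,\beta(A^+_p,p))\ge 1$; by Definition~\ref{canonical w minus} this requires either $\alpha(A^+_p,p)\in A^k_s$, contradicting Lemma~\ref{alpha is locked}, or some $b\in(\mathcal{A}_0\setminus A^+_p)\cap A^k_s$, contradicting (a) at stage $k$ --- so (b) holds at $k+1$. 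Next take $c\in A^{k+1}_s\cap\mathcal{A}_0$; if $c\in A^k_s$ then $c\in A^+_p$ by (a) at $k$, so assume $c\in A^{k+1}_s\setminus A^k_s$ and, for contradiction, that $c\notin A^+_p$. With $\lambda(c)=0$ and $\theta(c)=|\{(C^+_q,q)\in\mathbb{L}\mid c\in\mathcal{A}_0\setminus C^+_q\}|$ from Definition~\ref{canonical lambda minus} and Definition~\ref{canonical theta minus}, Definition~\ref{Ak} gives $\sum_{b\in A^k_s}w(b,c)\ge\theta(c)$; but by Definition~\ref{canonical w minus} every nonzero term of this sum equals $1$ and is contributed by a distinct agent $\beta(C^+_q,q)\in A^k_s$ with $c\in\mathcal{A}_0\setminus C^+_q$, so the sum is at most $\theta(c)$. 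Hence equality holds, every such $\beta$-agent lies in $A^k_s$, and in particular $\beta(A^+_p,p)\in A^k_s$ (since $c\in\mathcal{A}_0\setminus A^+_p$ makes $(A^+_p,p)$ one of these labels), contradicting (b) at $k$. Therefore $c\in A^+_p$, closing the induction.

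The only genuine subtlety is recognizing that the induction has to be strengthened with clause~(b): arguing about the $\mathcal{A}_0$-agents alone is circular, since an agent $c\in\mathcal{A}_0\setminus A^+_p$ entering $A^*_s$ would merely require \emph{some} agent of $\mathcal{A}_0\setminus A^+_p$ to fire $\beta(A^+_p,p)$. Once (b) is part of the hypothesis, the remainder is bookkeeping, the crucial quantitative point being that $\theta(c)$ was chosen in Definition~\ref{canonical theta minus} to equal exactly the $\beta$-in-degree of $c$, so $c$ can fire only when \emph{all} of its incoming $\beta$-agents --- including the locked $\beta(A^+_p,p)$ --- have already fired.
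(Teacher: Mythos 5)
Your proof is correct and follows essentially the same route as the paper: the same strengthened simultaneous induction on $k$ with the two clauses $A^k_s\cap\mathcal{A}_0\subseteq A^+_p$ and $\beta(A^+_p,p)\notin A^k_s$, the same base case via Reflexivity and Definition~\ref{A+}, and the same counting argument using $\theta(c)$ equal to the $\beta$-in-degree of $c$ together with Lemma~\ref{alpha is locked} in the induction step. The only differences are cosmetic (order in which the two clauses are discharged at stage $k+1$).
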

\begin{proof}
Let $(A^+_p,p)\in \mathbb{L}$  and $s$ be an arbitrary spending function such that $s(\alpha(A^+_p,p))\ge p$. By Definition~\ref{A*}, it suffices to show that $A^k_{s}\cap\mathcal{A}_0\subseteq A^+_p$ for each $k\ge 0$. Instead, we prove the following two statements simultaneously by induction on $k$:
$$
\begin{cases}
A^k_{s}\cap\mathcal{A}_0\subseteq A^+_p,\\
\beta(A^+_p,p)\notin A^k_{s}.
\end{cases}
$$

\noindent{\em Base Case:} Suppose that $a\in A^0_{s}$. Thus, $a\in A$ by Definition~\ref{Ak}. Hence, $\vdash A\rhddash_p a$ by Reflexivity axiom. Therefore, $a\in A^+_p$ by Definition~\ref{A+}.
Assume now that $\beta(A^+_p,p)\in A^0_{s}$. Thus, $\beta(A^+_p,p)\in A\subseteq \mathcal{A}_0$ by Definition~\ref{Ak} and Definition~\ref{Z}, which is a contradiction with $\beta(A^+_p,p)\notin\mathcal{A}_0$ by the choice of $\alpha(\ell)$ and $\beta(\ell)$.

\noindent{\em Induction Step:} Assume that

$$
\begin{cases}
A^k_{s}\cap\mathcal{A}_0\subseteq A^+_p,\\
\beta(A^+_p,p)\notin A^k_{s}.
\end{cases}
$$
We need to show that
\begin{equation}\label{two statements}
\begin{cases}
A^{k+1}_{s}\cap\mathcal{A}_0\subseteq A^+_p,\\
\beta(A^+_p,p)\notin A^{k+1}_{s}.
\end{cases}
\end{equation}
To prove the first statement, suppose that there is $a\in \mathcal{A}_0$ such that $a\in A^{k+1}_{s}\setminus A^+_p$. Note that 
$a\notin A^{k}_{s}$ by the induction hypothesis. Thus, by Definition~\ref{Ak},
 $$
 \lambda(a)\cdot s(a)+ \sum_{b\in A^k_{s}}w(b,a)\ge \theta(a).
 $$
 Hence, by Definition~\ref{canonical lambda minus} and Definition~\ref{canonical theta minus}, 
 $$
 0\cdot s(a)+ \sum_{b\in A^k_{s}}w(b,a)\ge |\{(A^+_q,q)\in \mathbb{L}\;|\; a\in \mathcal{A}_0\setminus A^+_q\}|.
 $$
 Therefore, by Definition~\ref{canonical w minus}, 
 $$
 \sum_{\ell\in \{(A^+_q,q)\in \mathbb{L}\;|\; a\in \mathcal{A}_0\setminus A^+_q, \beta(\ell)\in A^k_{s}\}}w(\beta(\ell),a)\ge |\{(A^+_q,q)\in \mathbb{L}\;|\; a\in \mathcal{A}_0\setminus A^+_q\}|.
 $$
 Hence, again by Definition~\ref{canonical w minus}, 
 $$
 \sum_{\ell\in \{(A^+_q,q)\in \mathbb{L}\;|\; a\in \mathcal{A}_0\setminus A^+_q, \beta(\ell)\in A^k_{s}\}}1\ge |\{(A^+_q,q)\in \mathbb{L}\;|\; a\in \mathcal{A}_0\setminus A^+_q\}|.
 $$
 Thus,
 $$
|\{(A^+_q,q)\in \mathbb{L}\;|\; a\in \mathcal{A}_0\setminus A^+_q, \beta(\ell)\in A^k_{s}\}|\ge 
|\{(A^+_q,q)\in \mathbb{L}\;|\; a\in \mathcal{A}_0\setminus A^+_q\}|.
 $$
 At the same time,
 $$
 \{(A^+_q,q)\in \mathbb{L}\;|\; a\in \mathcal{A}_0\setminus A^+_q, \beta(\ell)\in A^k_{s}\}\subseteq \{(A^+_q,q)\in \mathbb{L}\;|\; a\in \mathcal{A}_0\setminus A^+_q\}.
 $$
 Then it must be the case that
 $$
 \{(A^+_q,q)\in \mathbb{L}\;|\; a\in \mathcal{A}_0\setminus A^+_q, \beta(\ell)\in A^k_{s}\}=\{(A^+_q,q)\in \mathbb{L}\;|\; a\in \mathcal{A}_0\setminus A^+_q\}.
 $$
 Hence, $\beta(\ell)\in A^k_{s}$ for all $\ell\in \{(A^+_q,q)\in \mathbb{L}\;|\; a\in \mathcal{A}_0\setminus A^+_q\}$. In particular, $\beta(A^+_p,p)\in A^{k}_{s}$, which is a contradiction to the induction hypothesis.
 
 To prove the second statement from (\ref{two statements}), suppose that $\beta(A^+_p,p)\in A^{k+1}_{s}$. Note that $\beta(A^+_p,p)\notin A^{k}_{s}$ due to the induction hypothesis. Thus, by Definition~\ref{Ak},
 $$
 \lambda(\beta(A^+_p,p))\cdot s(\beta(A^+_p,p))+ \sum_{b\in A^k_{s}}w(b,\beta(A^+_p,p))\ge \theta(\beta(A^+_p,p)).
 $$
 Hence, by Definition~\ref{canonical lambda minus} and Definition~\ref{canonical theta minus}, 
 $$
 0\cdot s(\beta(A^+_p,p))+ \sum_{b\in A^k_{s}}w(b,\beta(A^+_p,p))\ge 1.
 $$ 
 Thus, there must exist at least one $b\in A^k_{s}$ such that $w(b,\beta(A^+_p,p))>0$. By Definition~\ref{canonical w minus} and Lemma~\ref{alpha is locked}, this implies that $b\in \mathcal{A}_0\setminus  A^+_p$, which is a contradiction to the first part of the induction hypothesis, i.e. $A^k_s\cap\mathcal{A}_0\subseteq A^+_p$. 
\end{proof}

\begin{lemma}\label{preventive base 1}
For each $B,C\subseteq \mathcal{A}_0$ and each $q\in P$,
if $B\rhddash_q C\in X$, then $N\vDash B\rhddash_q C$.
\end{lemma}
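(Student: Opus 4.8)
The plan is to unfold the preventive semantics of Definition~\ref{preventive sat}: I must show that $C\subseteq B^*_s$ for \emph{every} spending function $s$ with $\|s\|\le q$. So I would fix such an $s$ and an arbitrary $c\in C$, reducing the goal to $c\in B^*_s$. Before touching the network I would record one purely syntactic consequence of the hypothesis $B\rhddash_q C\in X$: since $\{c\}\subseteq C$, Reflexivity gives $\vdash C\rhddash_q c$, and Transitivity applied to $B\rhddash_q C$ and $C\rhddash_q c$ yields $X\vdash B\rhddash_q c$, i.e. $c\in B^+_q$ by Definition~\ref{A+}. This single fact (together with $c\in\mathcal{A}_0$) is all I will extract from the hypothesis.

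To prove $c\in B^*_s$ I would argue by contradiction through Lemma~\ref{or lemma}. Assume $c\notin B^*_s$. By the contrapositive of Lemma~\ref{or lemma} there is a label $(A^+_p,p)\in\mathbb{L}$ for which both disjuncts fail, i.e. $c\notin A^+_p$ and $\beta(A^+_p,p)\notin B^*_s$. From $\beta(A^+_p,p)\notin B^*_s$ I would extract two things. First, the contrapositive of Lemma~\ref{if alpha then beta} gives $s(\alpha(A^+_p,p))>p-\epsilon$. Second, the contrapositive of Lemma~\ref{if rhs then beta} gives $B^*_s\cap(\mathcal{A}_0\setminus A^+_p)=\varnothing$; since $B\subseteq\mathcal{A}_0$ and $B\subseteq B^*_s$ by Corollary~\ref{A subseteq A*s}, this forces $B\subseteq A^+_p$, whence $X\vdash A\rhddash_p b$ for each $b\in B$ by Definition~\ref{A+}, and therefore $X\vdash A\rhddash_p B$ by Lemma~\ref{rhd set}.

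The final step reconciles the two budgets $p$ and $q$ and closes the loop. The spending bound $s(\alpha(A^+_p,p))\le\|s\|\le q$ combined with $s(\alpha(A^+_p,p))>p-\epsilon$ gives $p<q+\epsilon$; since $p,q\in P$ and distinct members of $P$ are more than $\epsilon$ apart, this forces $p\le q$. Monotonicity then converts $c\in B^+_q$, i.e. $X\vdash B\rhddash_q c$, into $X\vdash B\rhddash_p c$, and Transitivity applied to $X\vdash A\rhddash_p B$ and $X\vdash B\rhddash_p c$ yields $X\vdash A\rhddash_p c$, i.e. $c\in A^+_p$ — contradicting the defining property $c\notin A^+_p$ of the bad label. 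Hence $c\in B^*_s$, and as $c$ and $s$ were arbitrary, $N\vDash B\rhddash_q C$ by Definition~\ref{preventive sat}.

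I expect the delicate point to be exactly the budget reconciliation. The whole construction of $\theta(\alpha(A^+_p,p))=\epsilon-p$ together with the choice of $\epsilon$ exists precisely so that the \emph{quantitative} inequality $s(\alpha(A^+_p,p))>p-\epsilon$ can be upgraded to the \emph{discrete} comparison $p\le q$, which is what makes the Monotonicity axiom applicable at matching subscripts. Everything else is a mechanical combination of the already-established ``$\beta$ is missing'' lemmas with the three preventive axioms; the only structural subtlety is that Lemma~\ref{or lemma} must be invoked in contrapositive form, so that a \emph{single} bad label simultaneously supplies the set inclusion $B\subseteq A^+_p$ and the budget inequality needed to produce the contradiction.
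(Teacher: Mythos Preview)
Your proof is correct and follows essentially the same route as the paper: contradict $c\notin B^*_s$ via Lemma~\ref{or lemma}, then use Lemma~\ref{if rhs then beta} to obtain $B\subseteq A^+_p$ and Lemma~\ref{if alpha then beta} to control the budget, closing with Monotonicity and Transitivity to force $c\in A^+_p$. The only organizational difference is that the paper splits into two cases ($p\le q$ and $p>q$), handling the latter by directly contradicting $\beta(A^+_p,p)\notin B^*_s$, whereas you fold Case~II into a preliminary deduction that $p\le q$ must hold (the same inequality, read contrapositively) and then run only the Case~I argument; this is a cosmetic streamlining, not a different idea.
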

\begin{proof}
Consider any spending function $s$ such that $\|s\|\le q$. By Definition~\ref{preventive sat}, it suffices to show that $C\subseteq B^*_{s}$.
Suppose that there is $c_0\in C$ such that $c_0\notin B^*_{s}$.

Thus, by Lemma~\ref{or lemma}, there exists $(A^+_p,p)\in \mathbb{L}$ such that $c_0\notin A^+_p$ and $\beta(A^+_p,p)\notin B^*_{s}$. The latter, by Lemma~\ref{if rhs then beta}, implies that $B^*_{s}\cap(\mathcal{A}_0\setminus A^+_p)=\varnothing$. Hence, $(B^*_{s}\cap\mathcal{A}_0)\setminus A^+_p=\varnothing$. Then, $B^*_{s}\cap \mathcal{A}_0\subseteq A^+_p$. Thus, $B\subseteq B^*_{s}\cap \mathcal{A}_0\subseteq A^+_p$ by Definition~\ref{Ak} and Definition~\ref{A*}. We next consider the following two cases:

\noindent{\em Case I:} $p\le q$. In this case, assumption $B\rhddash_q C\in X$, by Monotonicity axiom, implies that $X\vdash B\rhddash_p C$. At the same time,  by Reflexivity axiom, $B\subseteq A^+_p$ implies that $\vdash A^+_p\rhddash_p B$. Thus, $X\vdash A^+_p\rhddash_p C$ by Transitivity axiom. Again by Reflexivity axiom, we have $\vdash C\rhddash_p c_0$. Hence, $X\vdash A^+_p\rhddash_p c_0$ by Transitivity axiom. Thus, $X\vdash A\rhddash_p c_0$ by Lemma~\ref{ArhdA+} and Transitivity axiom. Therefore, $c_0\in A^+_p$, which is a contradiction with the choice of set $A$.

\noindent{\em Case II:} $p > q$. Then, $p-\epsilon> q$ by the choice of $\epsilon$. Hence, 
$$s(\alpha(A^+_p,p))\le \|s\|\le q< p-\epsilon.$$
Therefore, $\beta(A^+_p,p)\in B^*_{s}$ by Lemma~\ref{if alpha then beta}, which is a contradiction with the choice of set $A$.
\end{proof}

\begin{lemma}\label{preventive base 2}
For each $B,C\subseteq \mathcal{A}_0$ and each $q\in P$,
if $N\vDash B\rhddash_q C$,
then  $B\rhddash_q C\in X$. 
\end{lemma}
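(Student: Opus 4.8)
The plan is to reduce everything to the syntactic closure $B^+_q$. Since $X$ is a maximal consistent subset of $\Phi(\mathcal{A}_0)$ and $B\rhddash_q C\in\Phi(\mathcal{A}_0)$, it suffices to prove $X\vdash B\rhddash_q C$. By Lemma~\ref{ArhdA+} we already have $X\vdash B\rhddash_q B^+_q$, so by Reflexivity and Transitivity it is enough to establish $C\subseteq B^+_q$, i.e. $c\in B^+_q$ for every $c\in C$.

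To obtain $c\in B^+_q$, I would argue by contradiction: assume some $c\in C$ has $c\notin B^+_q$. Since $B\subseteq\mathcal{A}_0$ and $q\in P$, the pair $(B^+_q,q)$ is a legitimate element of $\mathbb{L}$ by Definition~\ref{Z}, so the agent $\alpha(B^+_q,q)$ exists in $\mathcal{A}$. Define the spending function $s$ by $s(\alpha(B^+_q,q))=q$ and $s(a)=0$ for every other agent $a$. Then $\|s\|=q\le q$, so the hypothesis $N\vDash B\rhddash_q C$ and Definition~\ref{preventive sat} give $C\subseteq B^*_s$, hence $c\in B^*_s$. On the other hand, since $s(\alpha(B^+_q,q))=q\ge q$, Lemma~\ref{lock lemma} (applied with $A=B$ and $p=q$) yields $B^*_s\cap\mathcal{A}_0\subseteq B^+_q$. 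As $c\in C\subseteq\mathcal{A}_0$ and $c\in B^*_s$, we get $c\in B^+_q$, a contradiction. Therefore $C\subseteq B^+_q$.

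Finally, $C\subseteq B^+_q$ gives $\vdash B^+_q\rhddash_q C$ by Reflexivity, and combining this with $X\vdash B\rhddash_q B^+_q$ (Lemma~\ref{ArhdA+}) via Transitivity yields $X\vdash B\rhddash_q C$; maximality of $X$ then gives $B\rhddash_q C\in X$.

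The only real subtlety — the "obstacle" such as it is — is recognizing that the preventive-marketing construction has been arranged precisely so that spending exactly $q$ on the single agent $\alpha(B^+_q,q)$ is simultaneously \emph{affordable} ($\|s\|=q\le q$) and \emph{sufficient} to lock down everything outside $B^+_q$ (Lemma~\ref{lock lemma}); once that is seen, the argument is a two-line contradiction followed by a short Reflexivity/Transitivity chase. I would just be careful to record that it is exactly the hypothesis $q\in P$ that licenses $(B^+_q,q)\in\mathbb{L}$, so the chosen spending function is well defined.
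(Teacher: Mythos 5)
Your proof is correct and takes essentially the same route as the paper: both hinge on the spending function that puts the whole budget $q$ on $\alpha(B^+_q,q)$ and then invoke Lemma~\ref{lock lemma} to get $B^*_{s}\cap\mathcal{A}_0\subseteq B^+_q$, contradicting $c\in C\subseteq B^*_{s}$ with $c\notin B^+_q$. The only difference is organizational: the paper argues contrapositively, extracting $c_0$ with $X\nvdash B\rhddash_q c_0$ via Lemma~\ref{rhd set} and maximality, whereas you prove $C\subseteq B^+_q$ directly and then assemble $X\vdash B\rhddash_q C$ through Lemma~\ref{ArhdA+}, Reflexivity, and Transitivity.
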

\begin{proof}
Suppose that $B\rhddash_q C\notin X$. Thus, by Lemma~\ref{rhd set} and the maximality of set $X$, there is $c_0\in C$ such that $X\nvdash B\rhddash_q c_0$. Hence, $c_0\notin B^+_q$ by Definition~\ref{A+}. Consider spending function $s$ such that
$$
s(a)=
\begin{cases}
q, & \mbox{if $a=\alpha(B^+_q,q)$},\\
0, & \mbox{otherwise}.
\end{cases}
$$
Note that $\|s\|=q$. Thus, $C\subseteq B^*_{s}$ by the assumption $N\vDash B\rhddash_q C$ of the lemma. Hence, $c_0\in B^*_{s}$. This together with $c_0\notin B^+_q$ contradicts with Lemma~\ref{lock lemma} and $c_0\in C\subseteq \mathcal{A}_0$.
\end{proof}


\begin{lemma}\label{preventive main induction}
$\psi\in X$ iff $N\vDash\psi$, for each $\psi\in\Phi(\mathcal{A}_0)$. 
\end{lemma}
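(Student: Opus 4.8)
The plan is a routine structural induction on $\psi$, mirroring the argument for Lemma~\ref{induction lemma} in the promotional case. The base case, where $\psi$ is an atomic formula $B\rhddash_q C$ with $B,C\subseteq\mathcal{A}_0$, is precisely the combination of Lemma~\ref{preventive base 1} (giving $B\rhddash_q C\in X\Rightarrow N\vDash B\rhddash_q C$) and Lemma~\ref{preventive base 2} (giving the converse). These two lemmas are stated for subscripts $q\in P$, which is exactly what is needed: every atomic subformula occurring in $\neg\phi$ has its subscript in $P$ by the definition of $P$, so the equivalence holds for all atomic formulas to which the completeness argument will actually apply it. Hence no new work is required for the base case beyond invoking the two preceding lemmas.

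For the induction step I would use the standard maximal-consistent-set bookkeeping, relying only on the fact that $X$ is a maximal consistent subset of $\Phi(\mathcal{A}_0)$ and therefore behaves like a classical truth assignment on the formulas built from the already-handled atoms. Concretely, for negation: $\neg\chi\in X$ iff $\chi\notin X$ (by maximality and consistency of $X$), iff $N\nvDash\chi$ (by the induction hypothesis), iff $N\vDash\neg\chi$ (by item~2 of Definition~\ref{preventive sat}). For implication: $\chi\to\eta\in X$ iff $\chi\notin X$ or $\eta\in X$ (using that $X$ is deductively closed, contains all propositional tautologies, and is closed under Modus Ponens, so it decides $\chi\to\eta$ in the classical way), iff $N\nvDash\chi$ or $N\vDash\eta$ (by the induction hypothesis), iff $N\vDash\chi\to\eta$ (by item~3 of Definition~\ref{preventive sat}). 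This is verbatim the reasoning used in the proof of Lemma~\ref{induction lemma}.

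I do not expect a genuine obstacle in this lemma: all of the semantic heavy lifting — the structure of the canonical network $N$ and the behaviour of the $\alpha(\ell)$ and $\beta(\ell)$ agents under spending functions — has already been carried out in Lemmas~\ref{or lemma} through~\ref{preventive base 2}. The only point meriting a line of care is the implication step, where one must explicitly appeal to propositional completeness of the underlying calculus to justify that a maximal consistent set contains $\chi\to\eta$ exactly when it omits $\chi$ or contains $\eta$; I would dispatch this exactly as in the promotional section and then simply remark, as there, that ``the induction step follows from the induction hypothesis, the maximality and the consistency of set $X$, and Definition~\ref{preventive sat} in the standard way.''
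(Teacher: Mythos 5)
Your proposal is correct and follows essentially the same route as the paper: structural induction on $\psi$, with the base case given by Lemma~\ref{preventive base 1} together with Lemma~\ref{preventive base 2} and the induction step handled in the standard way via the maximality and consistency of $X$ and Definition~\ref{preventive sat}. Your remark about the subscripts lying in $P$ is a reasonable (and slightly more careful) observation that the paper leaves implicit, but it does not change the argument.
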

\begin{proof}
We prove the lemma by induction on the structural complexity of formula $\psi$. The base case follows from Lemma~\ref{preventive base 1} and Lemma~\ref{preventive base 2}. The induction step follows from Definition~\ref{preventive sat} and maximality and consistency of set $X$ in the standard way.
\end{proof}
To finish the proof of Theorem~\ref{preventive completeness} note that $\neg\phi\in X$ due to the choice of the set $X$. Thus, $\phi\notin X$ due to consistency of set $X$. Therefore, $N\nvDash\phi$ by Lemma~\ref{preventive main induction}.

\section{Conclusion}\label{conclusion section}

In this paper we have suggested a way of adding marketing to the standard threshold model of diffusion in social networks. The model is general enough to simulate both promotional and preventive marketing. We have also defined formal logical systems for reasoning about influence relation in social networks with marketing of these two types. Both systems are based on Armstrong's axioms from the database theory. The main technical results of the paper are the completeness theorems for these two systems. A possible extension of this work is an analysis of the computational complexity of this model.

\bibliography{sp}

\end{document}